\newtheorem{theorem}{Theorem}[section]%  meant for continuous numbers
\newtheorem{lemma}[theorem]{Lemma}
\newtheorem{remark}[theorem]{Remark}%
\definecolor{darkgreen}{rgb}{0, 0.7, 0}
\definecolor{orange}{rgb}{0.98, 0.6, 0.01}
 	\definecolor{napiergreen}{rgb}{0.16, 0.5, 0.0}
\newcolumntype{R}[2]{%
    >{\adjustbox{angle=#1,lap=\width-(#2)}\bgroup}%
    l%
    <{\egroup}%
}
\newcommand{\thickhline}{%
    \noalign {\ifnum 0=`}\fi \hrule height 1pt
    \futurelet \reserved@a \@xhline
}
\newcolumntype{"}{@{\hskip\tabcolsep\vrule width 1pt\hskip\tabcolsep}}
\newcommand{\myref}[1]{\cref{#1}\mynameref{#1}{\csname r@#1\endcsname}}
\newcommand{\Myref}[1]{\Cref{#1}\mynameref{#1}{\csname r@#1\endcsname}}
 	\definecolor{burgundy}{rgb}{0.5, 0.0, 0.13}
 	\definecolor{napiergreen}{rgb}{0.16, 0.5, 0.0}
    \definecolor{lasallegreen}{rgb}{0.03, 0.47, 0.19}
    \definecolor{airforceblue}{rgb}{0.36, 0.54, 0.66}
\definecolor{red}{rgb}{0.0 0.0 0.0}
\definecolor{mycolor}{rgb}{1,0.2,0.3}
\definecolor{beaublue}{rgb}{0.74, 0.83, 0.9}
\definecolor{oceanboatblue}{rgb}{0.0, 0.47, 0.75}
\definecolor{lightblue}{rgb}{0.78, 0.95, 1.0}	
\definecolor{amber}{rgb}{1.0, 0.6, 0.1}
\definecolor{darkblue}{rgb}{0.0, 0.0, 0.55}
\definecolor{lightgreen2}{rgb}{0.898, 1.000, 0.835}
\definecolor{lightgray}{rgb}{0.9, 0.9, 0.9}
\definecolor{lightblue}{rgb}{0.80, 0.93, 0.95}
\definecolor{orangepeel}{rgb}{1.0, 0.85, 0.5}
\definecolor{amber3}{rgb}{1.0, 0.74, 0.5}
\definecolor{lightgreen3}{rgb}{0.949, 1.0, 0.918}
\definecolor{lightblue3}{rgb}{0.90, 0.97, 0.975}
\definecolor{orangepeel3}{rgb}{1.0, 0.93, 0.75}
\definecolor{azure}{rgb}{0.0, 0.5, 1.0}
\newlength{\dhatheight}
\newcommand{\doublehat}[1]{%
    \settoheight{\dhatheight}{\ensuremath{\hat{#1}}}%
    \addtolength{\dhatheight}{-0.20ex}%
    \hat{\vphantom{\rule{1pt}{\dhatheight}}%
    \smash{\hat{#1}}}}
\numberwithin{equation}{section}
\title{A unified framework for N-phase Navier-Stokes Cahn-Hilliard Allen-Cahn mixture models with non-matching densities}
\author{M.F.P. ten Eikelder\thanks{e-mail: \texttt{marco.eikelder@tu-darmstadt.de}}
}
\date{%
    $^\dag$Institute for Mechanics, Computational Mechanics Group, Technical University of Darmstadt
}
\def\B#1{\mbox{\boldmath{$#1$}}}
\newcommand{\divg}{{\rm div}}
\newcommand{\nn}{\nonumber}
\newcommand{\mA}{\alpha}
\newcommand{\mB}{{\beta}}
\newcommand{\mC}{{\gamma}}
\newcommand{\bu}{\mathbf{u}}
\newcommand{\bw}{\mathbf{w}}
\newcommand{\bv}{\mathbf{v}}
\newcommand{\bx}{\mathbf{x}}
\newcommand{\by}{\B{y}}
\newcommand{\bJ}{\mathbf{J}}
\newcommand{\bh}{\mathbf{h}}
\newcommand{\bH}{\mathbf{H}}
\newcommand{\bj}{\mathbf{j}}
\newcommand{\bomega}{\boldsymbol{\omega}}
\newcommand{\bchi}{\boldsymbol{\chi}}
\newcommand{\bnu}{\boldsymbol{\nu}}
\newcommand{\trho}{\tilde{\rho}}
\newcommand{\bpi}{\boldsymbol{\pi}}
\def\be{\begin{equation}}
\def\ee{\end{equation}}
\def\ba{\begin{array}}
\def\ea{\end{array}}
\def\bea{\begin{eqnarray}}
\def\eea{\end{eqnarray}}
\def\beas{\begin{eqnarray*}}
\def\eeas{\end{eqnarray*}}
\newcommand{\bseq}{\begin{subequations}}
\newcommand{\eseq}{\end{subequations}}
\begin{document}

\maketitle

% REQUIRED
\begin{abstract}
Over the past few decades, numerous N-phase incompressible diffuse-interface flow models with non-matching densities have been proposed. Despite aiming to describe the same physics, these models are generally distinct, and an overarching modeling framework is absent.
This paper provides a unified framework for N-phase incompressible Navier-Stokes Cahn-Hilliard Allen-Cahn mixture models with a single momentum equation. The framework naturally emerges from continuum mixture theory, exhibits an energy-dissipative structure, and is invariant to the choice of fundamental variables. This opens the door to exploring connections between existing N-phase models and facilitates the computation of N-phase flow models rooted in continuum mixture theory.
\end{abstract}

% REQUIRED
%\begin{keywords}
\noindent{\small{\textbf{Key words}. $N$-phase flows, Navier-Stokes Cahn-Hilliard model; phase-field models, incompressible flow, mixture theory, thermodynamic consistency.}}\\

%\end{keywords}

\noindent{\small{\textbf{AMS Subject Classification}: Primary: 76T30, Secondary: 35Q35, 35R37, 76D05, 76D45, 80A99}}

\section{Introduction}

\subsection{Background}
Incompressible multi-phase flows are ubiquitous in nature, science and engineering, with a wide range of {\color{red}applications}\footnote{In this work, the term ``phase" denotes the different fluid materials/constituents (e.g. air and water).}. The development of continuum models (and corresponding methods) that describe these flows has been an active field of research for the last few  decades. This research can be (roughly) divided into (i) sharp interface models \cite{Hirt_Nichols_81,sethian2001evolution,ten2021novel,bothe2022sharp}, and (ii) diffuse-interface models. Within the diffuse-interface category, phase-field models constitute a well-known class \citep{cahn1958free1,cahn1959free2,anderson1998diffuse,gomez2018computational}. While we acknowledge the importance of each of the above approaches, the current article focuses on phase-field models.

Phase-field models have gained popularity over the last decades, and have become a versatile modeling technology with a wide range of applications in science and engineering. They offer resolutions to challenging moving boundary problems by simultaneously addressing the geometrical representation and the physical model, see e.g. \cite{anderson1998diffuse,steinbach2009phase}. By representing interfaces implicitly through continuous field variables, phase-field models eliminate the need for explicit boundary tracking, enabling accurate and efficient simulations of phenomena such as solidification \citep{boettinger2002phase}, crack propagation in fracture mechanics \citep{ambati2015phase}, and two-fluid flow dynamics \citep{yue2004diffuse}. 

The vast majority of incompressible, viscous, multi-phase flow models in the literature is restricted to two fluids. In the realm of phase-field modeling, a prototypical model is the Navier-Stokes Cahn-Hilliard Allen-Cahn (NSCHAC) model. The first model of this kind, now known as \textit{model H}, was proposed in \citet{hohenberg1977theory}. This model may be understood as a simplification of the more complete two-phase NSCHAC model in the sense that (i) it is restricted to matching fluid densities, and (ii) it does not permit mass transfer between phases (i.e. it does not contain an Allen-Cahn type term). The foundation of this model is largely based on empirical arguments; a derivation based on the concept of microforces (see \cite{gurtin1996generalized}) was established in \citet{gurtinmodel}. In subsequent years, several efforts have been made to relax the matching-density restriction, see e.g. \cite{lowengrub1998quasi,abels2012thermodynamically,aki2014quasi}, and see e.g. \cite{kay2007efficient,guo2014numerical,khanwale2022fully,ten2024divergence} for numerical simulations. Initially, these models were classified into two \textit{distinct} categories: (i) models with a mass-averaged mixture velocity, and (ii) models with a volume-averaged mixture velocity. In a recent article, we proposed a unified framework, rooted in continuum mixture theory, which leads to a single Navier-Stokes Cahn-Hilliard (NSCH) model that is invariant to the set of fundamental variables \citep{eikelder2023unified}; see \cite{ten2024divergence} for a divergence-conforming discretization with benchmarks. Contrary to the above-mentioned classification, the framework indicates that aforementioned classes of models coincide, up to minor modifications.

Although most research in the field of multi-phase flows focuses on $N=2$ phases, there are various $N$-phase ($N>2$) incompressible flow models. Similar to the two-phase case, the literature on $N$-phase models that (partly) utilize continuum mixture theory is divided into two categories: (i) models with a mass-averaged mixture velocity, and (ii) models with a volume-averaged mixture velocity. Without attempting to be complete, we mention the $N$-phase mass-averaged velocity models \cite{kim2005phase} ($N=3$) and \cite{heida2012development,li2014class} ($N\geq 2$), and the $N$-phase volume-averaged models \cite{dong2015physical,dong2018multiphase,huang2021consistent} ($N\geq 2$). Furthermore, there are incompressible $N$-phase NSCH models that are not (partly) based on continuum mixture theory, rather these models are established via coupling a multi-phase Cahn-Hilliard (CH) model to the {\color{red}Navier-Stokes equations, see \cite{boyer2006study,boyer2010cahn,toth2016phase,zhang2016phase,nurnberg2017numerical,xia2022modeling,xiao2024reduction}}. We also refer to several theoretical considerations of Allen-Cahn/Cahn-Hilliard (AC/CH) systems in isolation (ignoring inertial phenomena present in fluid mechanic systems), see e.g. \cite{eyre1993systems,boyer2014hierarchy,toth2015consistent,li2016multi,wu2017multiphase}, {\color{red}and to phase-field $N$-phase flow models \cite{xia2023conservative,mirjalili2024conservative} that are not of NSCH type}.

{\color{red}Although various $N$-phase models have been proposed, their differences in assumptions and methodologies pose challenges for both theoretical analysis and practical application. A unified perspective remains elusive, complicating efforts to compare and refine these models.}

\subsection{Objective and main results}
A number of the existing $N$-phase phase-field models, mentioned above, and in the references therein, provide different models (alongside with computational methodologies) for the \textit{same physical situation}: the dynamics of viscous, incompressible (isothermal) $N$-phase mixture flows. Naturally, there is some leeway in constitutive modeling, and not all models have the same complexity level\footnote{To organize the various existing models one can adopt the classification introduced in \cite{hutter2013continuum}. This classification is for example utilized in \cite{bothe2015continuum,hutter2018thermodynamics}.}. However, one can infer that models within the same complexity class are already distinct before constitutive modeling. The above observations raise questions regarding differences and connections between the models. While the aforementioned unified framework of NSCHAC models \citep{eikelder2023unified} is presented for the two-phase case, the adopted modeling principles therein are at the core not restricted to two phases. There are however a number of non-trivial considerations that come into play when examining the more general case $N\geq 2$. Important elements to consider are (i) symmetry properties with respect to the numbering of the phases, (ii) the reduction-consistency property (an $N$-phase system reduces to an $(N-M)$-phase system in absence of $M$ phases), and (iii) and the saturation constraint (volume-fractions/concentrations add up to one).

{\color{red}In light of these challenges, a systematic approach is needed to reconcile and unify existing models while addressing key theoretical considerations such as symmetry, reduction-consistency, and the saturation constraint. For this purpose we utilize continuum mixture theory \citep{truesdell1960classical} as point of departure. Continuum mixture theory provides a macroscopic framework for modeling systems composed of multiple interacting constituents, such as phases or chemical species. In this theory, each constituent is treated as a continuous field, characterized by its own set of properties, such as mass density, velocity, and concentration. These fields coexist and interact within the same spatial domain, governed by balance laws for mass, momentum, and energy. A key aspect of this mixture theory is its ability to account for inter-constituent interactions through constitutive relations, ensuring that the overall behavior reflects the combined effects of the individual phases. The framework serves as a foundation for deriving governing equations for multiphase flows and provides a systematic approach to connect microscopic processes with macroscopic behavior.}

The primary objective of this article is to lay down a unified framework of $N$-phase NSCHAC mixture models. We limit our focus to isothermal phases. In particular, we derive the following multi-phase-field model for phases (constituents) $\mA = 1, ..., N$:
\begin{subequations}\label{eq: intro mass}
  \begin{align}
   \partial_t (\rho \bv) + {\rm div} \left( \rho \bv\otimes \bv \right) + \sum_\mB \phi_\mB \nabla (\mu_\mB + \lambda)
    %&\nn\\
    - {\rm div} \left(   \nu (2 \nabla^s \bv+\bar{\lambda}({\rm div}\bv) \mathbf{I}) \right)-\rho\mathbf{b} &=~ 0, \label{eq: intro mass: mom}\\
  \partial_t \phi_\mA  + {\rm div}(\phi_\mA  \bv) +\rho_\mA^{-1}{\rm div} (\hat{\bJ}_\mA + \hat{\bj}_\mA )  -\rho_\mA^{-1} \hat{\zeta}_\mA&=~0,\label{eq: intro mass: mass}\\
  \hat{\bJ}_\mA + \sum_\mB \mathbf{M}_{\mA\mB}\nabla g_\mB&=~0, \label{eq: intro mass: bJ}\\
  \hat{\bj}_\mA + \sum_\mB \mathbf{K}_{\mA\mB}\nabla g_\mB&=~0, \label{eq: intro mass: bj}\\
  \hat{\zeta}_\mA + \sum_\mB m_{\mA\mB} g_\mB&=~0, \label{eq: intro mass: zeta}
  \end{align}
\end{subequations}
{\color{red}for $\mA = 1, \ldots, N$, subject to $\sum_\mB \phi_\mB =1$, where $\phi_\mA$ is the volume fraction of constituent $\mA$, $\mathbf{v}$ denotes the fluid velocity, $\rho_\mA$ and $\trho_\mA$ represent the constituent mass densities, $\rho = \sum_\mB \trho_\mB$ is the mixture density, $\mathbf{b}$ is the force vector, $\nu$ is the dynamic viscosity, $\nu \bar{\lambda}$ is the second viscosity coefficient, $\nabla^s \bv$ represents the symmetric velocity gradient, and $\lambda$ is the Lagrange multiplier pressure.} Additionally, $\mu_\alpha, g_\mA$ are constituent chemical potentials, and $\mathbf{M}_{\mA\mB}, \mathbf{K}_{\mA\mB}$ and $m_{\mA\mB}$ are mobility parameters. The model is composed of equation \eqref{eq: intro mass: mom} that details the mixture momentum equation, $N$ constituent mass balance equations \eqref{eq: intro mass: mass}, and models for peculiar velocities \eqref{eq: intro mass: bJ}, and conservative and non-conservative mass transfer models \eqref{eq: intro mass: bj}-\eqref{eq: intro mass: zeta}. Model \eqref{eq: intro mass} is expressed in terms of the mass-averaged mixture velocity $\mathbf{v}$; an alternative -- but equivalent -- formulation emerges when adopting the volume-averaged mixture velocity $\mathbf{u}$:
\begin{subequations}\label{eq: intro volume}
  \begin{align}
   \partial_t \left(\rho   \bv \right)  + {\rm div} \left( \rho \bv \otimes \bv \right)  + \sum_\mB \phi_\mB \nabla (\mu_\mB + \lambda) & \nn\\
    - {\rm div} \left(   \nu \left(2\nabla^s \bv+\bar{\lambda}{\rm div}\bv \mathbf{I}\right) \right) -\rho\mathbf{b} &=~ 0, \label{eq: intro volume: mom}\\
     {\rm div} \bu - \sum_{\mB} \rho_\mB^{-1}(\hat{\zeta}_\mB + {\rm div} \hat{\mathbf{j}}_\mB) &=~0,\\
  \partial_t \phi_\mA  + {\rm div}\left(\phi_\mA  \bv \right) +\rho_\mA^{-1}{\rm div} (\hat{\bJ}_\mA + \hat{\bj}_\mA )  - \rho_\mA^{-1} \hat{\zeta}_\mA &=~0,
  \end{align}
\end{subequations}
for $\mA = 1, \ldots, N-1$, subject to $\sum_\mB \phi_\mB =1$ with $\bv = \bu- \sum_\mB \rho_\mB^{-1} \hat{\bJ}_\mB $, where $\hat{\bJ}_\mA, \hat{\bj}_\mA$ and $\hat{\zeta}_\mA$ are defined in \eqref{eq: intro mass: bJ}, \eqref{eq: intro mass: bj} and \eqref{eq: intro mass: zeta}, respectively. Analogously to the above formulation, the model is comprised of a mixture momentum equation \eqref{eq: intro mass: mom}, and $N$ constituent mass balance laws \eqref{eq: intro mass: mass}. We provide precise definitions of all quantities in the remainder of the article. {\color{red}A key property of the framework is its invariance to the set of fundamental variables, both before and after constitutive modeling (see \cref{fig: invariance}).}

\begin{figure}
\resizebox{0.975\textwidth}{!}{\begin{tikzpicture}

%Bal. Laws v phi
\path[fill=lightblue, draw=black, rounded corners=15pt] (-4, 3.5) rectangle (-6, 4.5);
\node[text=black, font=\rmfamily\bfseries\scriptsize, text centered] at (-5, 4.125) {Bal. Laws};
\node[text=black, font=\rmfamily\small, text centered] at (-5, 3.75) {$\bv, \left\{\phi_\mA\right\}$};
%Mix. Model v phi
\path[fill=lightblue, draw=black, rounded corners=15pt] (-4, 3.5-3) rectangle (-6, 4.5-3);
\node[text=black, font=\rmfamily\bfseries\scriptsize, text centered] at (-5, 4.125-3) {Mix. Model};
\node[text=black, font=\rmfamily\small, text centered] at (-5, 3.75-3) {$\bv, \left\{\phi_\mA\right\}$};

%Bal. Laws u phi
\path[fill=lightgreen2, draw=black, rounded corners=15pt] (0, 3.5) rectangle (-2, 4.5);
\node[text=black, font=\rmfamily\bfseries\scriptsize, text centered] at (-1, 4.125) {Bal. Laws};
\node[text=black, font=\rmfamily\small, text centered] at (-1, 3.75)  {$\bu, \left\{\phi_\mA\right\}$};
%Mix. Model u phi
\path[fill=lightgreen2, draw=black, rounded corners=15pt] (0, 3.5-3) rectangle (-2, 4.5-3);
\node[text=black, font=\rmfamily\bfseries\scriptsize, text centered] at (-1, 4.125-3) {Mix. Model};
\node[text=black, font=\rmfamily\small, text centered] at (-1, 3.75-3)  {$\bu, \left\{\phi_\mA\right\}$};

%Bal. Laws v c
\path[fill=orangepeel, draw=black, rounded corners=15pt] (4, 3.5) rectangle (2, 4.5);
\node[text=black, font=\rmfamily\bfseries\scriptsize, text centered] at (3, 4.125) {Bal. Laws};
\node[text=black, font=\rmfamily\small, text centered] at (3, 3.75) {$\bv, \left\{c_\mA\right\}$};
%Mix. Model v c
\path[fill=orangepeel, draw=black, rounded corners=15pt] (4, 3.5-3) rectangle (2, 4.5-3);
\node[text=black, font=\rmfamily\bfseries\scriptsize, text centered] at (3, 4.125-3) {Mix. Model};
\node[text=black, font=\rmfamily\small, text centered] at (3, 3.75-3) {$\bv, \left\{c_\mA\right\}$};

%Bal. Laws u c
\path[fill=amber, draw=black, rounded corners=15pt] (8, 3.5) rectangle (6, 4.5);
\node[text=black, font=\rmfamily\bfseries\scriptsize, text centered] at (7, 4.125) {Bal. Laws};
\node[text=black, font=\rmfamily\small, text centered] at (7, 3.75)  {$\bu, \left\{c_\mA\right\}$};
%Mix. Model u c
\path[fill=amber, draw=black, rounded corners=15pt] (8, 3.5-3) rectangle (6, 4.5-3);
\node[text=black, font=\rmfamily\bfseries\scriptsize, text centered] at (7, 4.125-3) {Mix. Model};
\node[text=black, font=\rmfamily\small, text centered] at (7, 3.75-3)  {$\bu, \left\{c_\mA\right\}$};

\node[text=black, font=\rmfamily\small, text centered] at (-6, 2.45)  {closure};
\node[text=black, font=\rmfamily\small, text centered] at (-2, 2.45)  {closure};
\node[text=black, font=\rmfamily\small, text centered] at (2, 2.45) 
 {closure};
\node[text=black, font=\rmfamily\small, text centered] at (6, 2.45) 
 {closure};

\node[text=black, font=\rmfamily\small, text centered] at (-3.0, 3.25)  {transform};
\node[text=black, font=\rmfamily\small, text centered] at (-3.0, 1.75)  {transform};
\node[text=black, font=\rmfamily\small, text centered] at (1.0, 3.25)  {transform};
\node[text=black, font=\rmfamily\small, text centered] at (1.0, 1.75)  {transform};
\node[text=black, font=\rmfamily\small, text centered] at (5.0, 3.25)  {transform};
\node[text=black, font=\rmfamily\small, text centered] at (5.0, 1.75)  {transform};

% Bal. Laws v phi to Mix. Model v phi
\draw[->, thick, bend left, stealth-] (-5, 1.5) to (-5, 3.5);

% Bal. Laws u phi to Mix. Model u phi
\draw[->, thick, bend left, stealth-] (-1, 1.5) to (-1, 3.5);

% Bal. Laws v c to Mix. Model v c
\draw[->, thick, bend left, stealth-] (3, 1.5) to (3, 3.5);

% Bal. Laws u c to Mix. Model u c
\draw[->, thick, bend left, stealth-] (7, 1.5) to (7, 3.5);

% Bal. Laws v phi to Bal. Laws v phi
\draw[->, thick, bend left, stealth-] (-2, 3.8) to (-4, 3.8);
\draw[->, thick, bend left, stealth-] (-4, 4.15) to (-2, 4.15);

% Mix. Model v phi to Mix. Model u phi
\draw[->, thick, bend left, stealth-] (-2, 0.75) to (-4, 0.75);
\draw[->, thick, bend left, stealth-] (-4, 1.1)  to (-2, 1.1);

% Bal. Laws u phi to Bal. Laws v c
\draw[->, thick, bend left, stealth-] (2, 3.8)  to (0, 3.8);
\draw[->, thick, bend left, stealth-] (0, 4.15) to (2, 4.15);

% Mix. Model u phi to Mix. Model v c
\draw[->, thick, bend left, stealth-] (2, 0.75) to (0, 0.75);
\draw[->, thick, bend left, stealth-] (0, 1.1)  to (2, 1.1);

% Bal. Laws v c to Bal. Laws u c
\draw[->, thick, bend left, stealth-] (6, 3.8)  to (4, 3.8);
\draw[->, thick, bend left, stealth-] (4, 4.15) to (6, 4.15);

% Mix. Model v c to Mix. Model u c
\draw[->, thick, bend left, stealth-] (6, 0.75) to (4, 0.75);
\draw[->, thick, bend left, stealth-] (4, 1.1)  to (6, 1.1);

\end{tikzpicture}}
    \caption{{\color{red}Invariance of the unified framework, both at the level of balance laws (Bal. Laws) and, after closure, at the level of mixture models (Mix. Model).}}
    \label{fig: invariance}
\end{figure}

The classification as an NSCHAC model is evident in the combination of a momentum equation with ($N$) mass balance laws that are of Cahn-Hilliard Allen-Cahn type for specific free energy choices. The Cahn-Hilliard components appear in the third members of the mass balance laws, whereas the Allen-Cahn character materializes in the latter terms of the mass balance laws. Furthermore, the model -- in both formulations -- displays a strong coupling between the various equations; through the constituent densities $\trho_\mA$, the velocity $\bv$ (or $\bu$) and the Lagrange multiplier pressure $\lambda$.

The secondary objective of this article is reveal  connections between model \eqref{eq: intro mass}-\eqref{eq: intro volume} and existing models in the literature. First, we compare model \eqref{eq: intro mass}-\eqref{eq: intro volume} with the unified NSCHAC model \citep{eikelder2023unified} for the situation of two phases. Subsequently, we compare the framework to that of \cite{dong2018multiphase}. Finally, we discuss the connections of the proposed framework with the mixture-theory-compatible $N$-phase model \citep{eikelder2023thermodynamically}.

\subsection{Plan of the paper}
The remainder of the paper is organized as follows. In \cref{sec: mix theory} we present the continuum theory of rational mechanics for incompressible isothermal fluid mixtures, highlighting the connections between different quantities and formulations of evolution equations. Next, in \cref{sec: 2nd law}, we conduct constitutive modeling using the Coleman-Noll procedure. Following that, \cref{sec: Properties} addresses the properties of the model. Subsequently, in \cref{sec: Connections} we explore the connections of the novel model with existing models in the literature. Finally, in \cref{sec: discussion}, we provide a conclusion and outlook.

\section{Continuum mixture theory}\label{sec: mix theory}

The purpose of this section is to outline the continuum theory of mixtures for incompressible constituents, excluding thermal effects. This section aligns with \cite{eikelder2023thermodynamically} at several points.

The continuum theory of mixtures is grounded in three general principles introduced in the pioneering work of \cite{truesdell1960classical}:
\begin{enumerate}
    \item \textit{All properties of the mixture must be mathematical consequences of properties of the constituents.}
\item \textit{So as to describe the motion of a constituent, we may in imagination isolate it from the rest of the mixture, provided we allow properly for the actions of the other constituents upon it.}
\item \textit{The motion of the mixture is governed by the same equations as is a single body.}
\end{enumerate}
The first principle communicates that the mixture is made up of its constituent parts. The second principle asserts the connection of the different components of the physical model through interaction terms. Lastly, the latter principle states that one can not distinguish the motion of a mixture from that of a single fluid.

In \cref{sec: prelim} we introduce the fundamentals of the continuum theory of mixtures and the necessary kinematics. Then, in \cref{sec: const BL,sec: mixture BL}, we provide balance laws of individual constituents and associated mixtures.

\subsection{Preliminaries}\label{sec: prelim}

In the continuum theory of mixtures the material body $\mathscr{B}$ is comprised of $N$ constituent bodies $\mathscr{B}_\mA$, with $\mA = 1, \dots, N$. The bodies $\mathscr{B}_\mA$ are permitted to simultaneously occupy a shared region in space. Denoting by $\mathbf{X}_{\mA}$ the spatial position of a particle of $\mathscr{B}_\mA$ in the Lagrangian (reference) configuration, the (invertible) deformation map defines the spatial position of a particle:
\begin{align}
    \mathbf{x} := \bchi_{\mA}(\mathbf{X}_{\mA},t), 
\end{align}
where $\mathbf{x} \in \Omega$, with $\Omega \in \mathbb{R}^d$ the domain (dimension $d$). {\color{red}We refer for more details on continuum mixture theory to \cite{truesdell1960classical}, and sketch the situation in \cref{fig:cont mix}.}
\begin{figure}
\resizebox{0.95\textwidth}{!}{\begin{tikzpicture}[>=stealth,
axis/.style={densely dashed,font=\small}]

\pgfdeclareradialshading{ballshading}{
 \pgfpoint{-10bp}{10bp}}
 {color(0bp)=(gray!30!white); 
  color(9bp)=(gray!55!white);
  color(18bp)=(gray!75!white); 
  color(25bp)=(gray!70!black); 
  color(50bp)=(gray!50!black)}

\coordinate (K) at (3,1);
\shade[shading=ballshading] (K) plot [smooth cycle,tension=0.7] coordinates {(3,1) (5,1.2) (7,1) (8,3) (7,4.5) (5,4.5) (2,4) (1.7,2.5)};

\coordinate (K) at (12,1);
\shade[shading=ballshading] (K) plot [smooth cycle,tension=0.7] coordinates {(12,1-1)  (14,1-1) (14,2.75-1) (13,2.5-1) (11.5,2.5-1)};

\coordinate (K) at (12,1);
\shade[shading=ballshading] (K) plot [smooth cycle,tension=0.7] coordinates {(12,1+3)  (14.5,1+3) (14,2.0+3) (13,2.5+3) (11.5,2.5+3)};

\draw[->, thick, bend right, stealth-] (5.1,2.8) to  (12.8,0.9);
\draw[->, thick, bend left, stealth-] (5.1,3.2) to  (12.8,4.6);
\node[text=black, font=\large\rmfamily\bfseries, text centered] at (13.3,1) {$\bullet\mathbf{X}_N$};
\node[text=black, font=\large\rmfamily\bfseries, text centered] at (13.3,4.5) {$\bullet\mathbf{X}_1$};
\node[text=black, font=\rmfamily\bfseries, text centered] at (13.0,2.4) {$\bullet$};
\node[text=black, font=\rmfamily\bfseries, text centered] at (13.0,2.65) {$\bullet$};
\node[text=black, font=\rmfamily\bfseries, text centered] at (13.0,2.9) {$\bullet$};
\node[text=black, font=\large\rmfamily\bfseries, text centered] at (4.8,3.0) {$\mathbf{x}~~\bullet$};
\node[text=black, font=\large\rmfamily\bfseries, text centered] at (9.3,1) {$\boldsymbol{\chi}_N$};
\node[text=black, font=\large\rmfamily\bfseries, text centered] at (9.3,4.8) {$\boldsymbol{\chi}_1$};
\node[text=black, font=\large\rmfamily\bfseries, text centered] at (14,5.5) {$\mathscr{B}_1$};
\node[text=black, font=\large\rmfamily\bfseries, text centered] at (14,2.1) {$\mathscr{B}_N$};
\node[text=black, font=\large\rmfamily\bfseries, text centered] at (5,4.8) {$\mathscr{B}$};

\end{tikzpicture}}
    \caption{{\color{red}Situation sketch continuum mixture theory.}}
    \label{fig:cont mix}
\end{figure}
We introduce the constituent partial mass density $\tilde{\rho}_{\mA}$ and specific mass density $\rho_{\mA}>0$ respectively as:
\begin{subequations}\label{eq: def trhoA rhoA}
  \begin{align}
  \tilde{\rho}_{\mA}(\bx,t) :=&~ \displaystyle\lim_{ \vert V \vert \rightarrow 0} \dfrac{M_{\mA}(V)}{\vert V \vert},\\
  \rho_{\mA}(\bx,t) :=&~ \displaystyle\lim_{\vert V_{\mA}\vert  \rightarrow 0} \dfrac{M_{\mA}(V)}{\vert V_{\mA}\vert },
\end{align}
\end{subequations}
where $V \subset \Omega$ (measure $\vert V \vert$)
is an arbitrary control volume around $\mathbf{x}$, $V_{\mA} \subset V$ (measure $\vert V_{\mA}\vert$) is the volume of constituent $\mA$ so that  $V =\cup_{\mA}V_{\mA}$. Here, the constituents masses are $M_{\mA}=M_{\mA}(V)$, and the total mass in $V$ is $M=M(V)=\sum_{\mA}M_{\mA}(V)$. The mixture density is the sum of the partial mass densities:
\begin{align}\label{eq: def rho}
\rho(\bx,t):=&~ \displaystyle\lim_{ \vert V \vert \rightarrow 0} \dfrac{M(V)}{\vert V \vert}=\displaystyle\sum_{\mA}\trho_\mA(\bx,t).
\end{align}
Additionally, we introduce the mass concentrations (or mass fractions) and volume fractions respectively as:
\begin{subequations}\label{eq: def c phi alpha}
  \begin{align}
    c_\mA(\bx,t):=&~ \displaystyle\lim_{ \vert V \vert \rightarrow 0} \dfrac{M_{\mA}(V)}{M(V)}=\dfrac{\trho_\mA}{\rho},\\
    \phi_\mA(\bx,t):=&~ \displaystyle\lim_{ \vert V \vert \rightarrow 0} \dfrac{|V_{\mA}|}{|V|}=\dfrac{\trho_\mA}{\rho_\mA}, \label{eq: phi trho trho}
  \end{align}
\end{subequations}
which sum up to one:
\begin{subequations}\label{eq: sum c phi}
  \begin{align}
    \displaystyle\sum_{\mA} c_{\mA}(\bx,t)=&~ 1, \label{eq: sum c}\\
    \displaystyle\sum_{\mA} \phi_{\mA}(\bx,t)=&~ 1. \label{eq: sum phi}
  \end{align}
\end{subequations}
We assume that the constituents are incompressible, meaning that the specific mass densities are (constituent-wise) constant:
\begin{align}\label{eq: incompressibility}
    \rho_\mA(\bx,t) = \rho_\mA.
\end{align}
By means of the incompressibility of the constituents, \eqref{eq: incompressibility}, and the definitions \eqref{eq: def c phi alpha}, the volume fractions and concentrations are related by:
\begin{subequations}\label{eq: relation c phi}
    \begin{align}
        \phi_\mA =&~ \frac{c_\mA}{\rho_\mA}\left(\sum_\mB\frac{c_\mB}{\rho_\mB}\right)^{-1},\\
        c_\mA =&~ \rho_\mA\phi_\mA \left(\sum_\mB\rho_\mB\phi_\mB\right)^{-1},
    \end{align}
\end{subequations}
for $\mA =1, ..., N$.

\begin{remark}[Incompressibility $N$-phase model]
The relations \eqref{eq: relation c phi} hinge on the assumption that the constituents are incompressible, definition \eqref{eq: incompressibility}. The variables $\phi_\mA$ (or $c_\mA$) are interdependent via \eqref{eq: sum c phi}, which must be explicitly considered when formulating or deducing relationships to avoid overdetermined or inconsistent expressions. For example, the mappings \eqref{eq: relation c phi} are not invertible. We discuss these challenges throughout the article, and in \cref{appendix: sec: Equivalence of modeling restrictions}.
\end{remark}

\begin{remark}[Alternative definitions incompressible mixtures]
  Besides the current definition of incompressible constituents \eqref{eq: incompressibility}, which is frequently adopted in the literature (see e.g. \cite{li2014class,dong2015physical,dong2018multiphase,huang2021consistent}), there exist other notions of incompressibility in mixture flows. We refer for an alternative to \cite{bothe2023multicomponent} and the references therein.
\end{remark}

We proceed with the introduction of the material time derivative $\grave{\uppsi}_\mA$ of the differentiable constituent function $\uppsi_\mA$:
\begin{align}
\grave{\uppsi}_\mA=\partial_t\uppsi_{\mA}(\mathbf{X}_{\mA},t) \vert_{\mathbf{X}_\mA}.
\end{align}
Here we adopt the notation $\vert_{\mathbf{X}_\mA}$ to indicate that $\mathbf{X}_\mA$ is held fixed. The constituent velocity now follows as the constituent material derivative of the deformation map:
\begin{align}
\bv_{\mA}(\mathbf{x},t)=\partial_t\bchi_{\mA}(\mathbf{X}_{\mA},t) \vert_{\mathbf{X}_\mA} = \grave{\bchi}_\mA.
\end{align}
In contrast to the mixture density, there appear various mixture velocities in the literature. Among the most popular ones are the mass-averaged velocity, denoted $\bv$, and the volume-averaged velocity, denoted $\bu$, which are respectively given by:
\begin{subequations}\label{eq: mix velo}
  \begin{align}
    \bv(\bx,t) =&~ \displaystyle\sum_{\mA} c_\mA(\bx,t) \bv_\mA(\bx,t),\\
    \bu(\bx,t) =&~ \displaystyle\sum_{\mA} \phi_\mA(\bx,t) \bv_\mA(\bx,t).
\end{align}
\end{subequations}
We introduce peculiar velocities of the constituents relative to both mixture velocities:
\begin{subequations}\label{eq: def bwj}
  \begin{align}
    \bw_{\mA}(\bx,t):=&~\bv_{\mA}(\bx,t)-\bv(\bx,t),\\
    \bomega_{\mA}(\bx,t):=&~\bv_{\mA}(\bx,t)-\bu(\bx,t).
  \end{align}
\end{subequations}
Additionally, we define the following (scaled) peculiar velocities (that depend on $\bx$ and $t$):
\begin{subequations}\label{eq: def J and h}
    \begin{align}
        \bJ_\mA :=&~ \trho_\mA \bw_\mA, \label{eq: def J h: J}\\
        \bh_\mA = &~ \phi_\mA \bw_\mA,\\
        \bJ_\mA^u = &~ \trho_\mA \bomega_\mA\\
        \bh_\mA^u = &~ \phi_\mA \bomega_\mA.\label{eq: def J h: hu}
    \end{align}
\end{subequations}
\begin{remark}[Terminology peculiar velocities]
  The quantities \eqref{eq: def bwj} and \eqref{eq: def J and h} are in the literature often referred to as ``diffusion velocities'' and ``diffusive fluxes'', respectively. This terminology is natural because the terms \eqref{eq: def J and h} appear in constituent mass balance laws (see \cref{sec: const BL}) as flux terms, and their constitutive models (see \cref{sec: const mod: subsec: select}) have a diffusive character. However, utilizing constitutive models for \eqref{eq: def J and h} is not essential (see \cite{eikelder2023thermodynamically}), and therefore we use the terminology ``(scaled) peculiar velocity'' to reflect their original definitions \eqref{eq: def bwj} and \eqref{eq: def J and h}.
\end{remark}
Direct consequences of \eqref{eq: def bwj}, \eqref{eq: def J h: J}, and \eqref{eq: def J h: hu} are the properties:
  \begin{subequations}\label{eq: rel gross motion zero}
  \begin{align}
    \displaystyle\sum_{\mA} \bJ_\mA  =&~ 0, \label{eq: rel gross motion zero: J}\\
    \displaystyle\sum_{\mA} \bh_\mA^u  =&~ 0.
  \end{align}
  \end{subequations}
The relation between the mass-averaged and volume-averaged velocities is specified in the following lemma.
\begin{lemma}[Relation mass-averaged and volume-averaged velocities]\label[lemma]{lem: mass av vs vol av}
The mass-averaged and volume-averaged velocity variables are related via:
\begin{subequations}
  \begin{align}
    \bu =&~ \bv + \displaystyle\sum_\mA \rho_\mA^{-1} \bJ_\mA = \bv + \sum_\mA \bh_\mA,\\
    \bv =&~ \bu + \rho^{-1} \sum_\mA \bJ_\mA^u.
  \end{align}
\end{subequations}
\end{lemma}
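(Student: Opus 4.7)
The proof is a direct computation that unpacks each mixture velocity in terms of the constituent velocities and the relevant peculiar velocity, and then exploits the saturation constraints \eqref{eq: sum c phi}. I will treat the two identities in turn.

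\textbf{First identity.} Starting from the definition of the volume-averaged velocity, I would substitute $\bv_\mA = \bw_\mA + \bv$ from \eqref{eq: def bwj}, yielding
\begin{align*}
\bu \;=\; \sum_\mA \phi_\mA \bv_\mA \;=\; \sum_\mA \phi_\mA(\bv + \bw_\mA) \;=\; \bv \sum_\mA \phi_\mA + \sum_\mA \phi_\mA \bw_\mA,
\end{align*}
so that the saturation constraint \eqref{eq: sum phi} reduces the first sum to $\bv$ and the definition $\bh_\mA = \phi_\mA \bw_\mA$ identifies the remaining sum as $\sum_\mA \bh_\mA$. To obtain the alternative form, I would then invoke the incompressibility assumption \eqref{eq: incompressibility}, which gives $\phi_\mA = \trho_\mA/\rho_\mA$ and hence $\bh_\mA = \rho_\mA^{-1}\trho_\mA \bw_\mA = \rho_\mA^{-1} \bJ_\mA$ via \eqref{eq: def J h: J}.

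\textbf{Second identity.} An entirely symmetric argument applies. Substituting $\bv_\mA = \bomega_\mA + \bu$ into the definition of the mass-averaged velocity yields
\begin{align*}
\bv \;=\; \sum_\mA c_\mA \bv_\mA \;=\; \bu \sum_\mA c_\mA + \sum_\mA c_\mA \bomega_\mA \;=\; \bu + \rho^{-1}\sum_\mA \trho_\mA \bomega_\mA,
\end{align*}
where I have used the saturation constraint \eqref{eq: sum c} in the first sum and the relation $c_\mA = \trho_\mA/\rho$ from \eqref{eq: def c phi alpha} in the second. The definition $\bJ_\mA^u = \trho_\mA \bomega_\mA$ in \eqref{eq: def J and h} then completes the identity.

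\textbf{Expected obstacle.} There is no conceptual obstacle; the entire argument is bookkeeping built on the saturation constraints and the definitions \eqref{eq: def bwj}--\eqref{eq: def J and h}. The only thing to be careful about is consistent use of $\phi_\mA$ versus $c_\mA$ (and correspondingly $\rho_\mA$ versus $\rho$) when converting between the scaled peculiar velocities; this matching is why the first identity naturally produces a sum weighted by $\rho_\mA^{-1}$ while the second produces one weighted by $\rho^{-1}$.
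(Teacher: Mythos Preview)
Your proposal is correct and follows essentially the same direct-computation approach as the paper. The only cosmetic difference is in the second identity: the paper starts from the zero-sum property $\sum_\mA \bJ_\mA = 0$, inserts $\pm\bu$ into $\trho_\mA(\bv_\mA-\bv)$, and rearranges to $\sum_\mA \bJ_\mA^u + \rho(\bu-\bv)=0$, whereas you argue symmetrically to the first identity by expanding $\bv=\sum_\mA c_\mA\bv_\mA$; both routes are equally elementary.
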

\begin{proof}
    These relations result from the sequences of identities:
    \begin{subequations}
      \begin{align}
        \bu = \sum_\mA \phi_\mA \bv_\mA= \sum_\mA \phi_\mA \bw_\mA + \sum_\mA \phi_\mA \bv = \sum_\mA \rho_\mA^{-1} \bJ_\mA +  \bv,\\
        0 = \sum_\mA \bJ_\mA = \sum_\mA \trho_\mA (\bv_\mA-\bv) = \sum_\mA \trho_\mA (\bv_\mA-\bu + \bu-\bv) \nn\\=  \sum_\mA \bJ_\mA^u + \rho(\bu-\bv).\\ \nn
      \end{align}
    \end{subequations}
\end{proof}
The relation between the scaled peculiar velocities is displayed in the next lemma.
\begin{lemma}[Relation scaled peculiar velocities]\label[lemma]{lem: scaled peculiar velo}
The scaled peculiar velocities are related via:
\begin{subequations}
  \begin{align}
        \bJ_\mA =&~  \bJ_\mA^u - c_\mA\sum_\mB \bJ_\mB^u,\\
        \bJ_\mA^u =&~ \bJ_\mA - \trho_\mA \sum_\mB \rho_\mB^{-1} \bJ_\mB,\\
        \bh_\mA =&~  \bh_\mA^u - \phi_\mA\rho^{-1} \sum_\mB \rho_\mB\bh_\mB^u,\\
        \bh_\mA^u =&~ \bh_\mA - \phi_\mA \sum_\mB \bh_\mB.
  \end{align}
\end{subequations}
\end{lemma}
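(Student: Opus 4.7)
The plan is to prove each of the four identities by direct algebraic substitution, invoking the definitions \eqref{eq: def bwj} and \eqref{eq: def J and h} together with \cref{lem: mass av vs vol av}. Since each identity is merely a change of reference velocity (from $\bv$ to $\bu$, or vice versa), the proof is short bookkeeping rather than a structural argument.

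For the first identity, I would expand $\bJ_\mA = \trho_\mA(\bv_\mA - \bv)$ via the splitting $\bv_\mA - \bv = (\bv_\mA - \bu) - (\bv - \bu)$, recognize the first piece as $\trho_\mA^{-1}\bJ_\mA^u$, and substitute the relation $\bv - \bu = \rho^{-1}\sum_\mB \bJ_\mB^u$ from \cref{lem: mass av vs vol av}. Using $\trho_\mA/\rho = c_\mA$ then yields the stated identity. The second identity is the mirror image: expand $\bJ_\mA^u = \trho_\mA(\bv_\mA - \bu)$, split via $\bv_\mA - \bu = (\bv_\mA - \bv) - (\bu - \bv)$, and substitute $\bu - \bv = \sum_\mB \rho_\mB^{-1}\bJ_\mB$ from \cref{lem: mass av vs vol av}.

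For the third and fourth identities I would proceed in exactly the same fashion, starting from $\bh_\mA = \phi_\mA(\bv_\mA - \bv)$ and $\bh_\mA^u = \phi_\mA(\bv_\mA - \bu)$ and inserting the two mixture-velocity relations. For the third identity, after obtaining $\bh_\mA = \bh_\mA^u - \phi_\mA\rho^{-1}\sum_\mB \bJ_\mB^u$, I convert fluxes using $\bJ_\mB^u = \trho_\mB \bomega_\mB = \rho_\mB\phi_\mB\bomega_\mB = \rho_\mB \bh_\mB^u$, which comes from \eqref{eq: phi trho trho} and the definitions in \eqref{eq: def J and h}. For the fourth identity, the conversion $\sum_\mB \rho_\mB^{-1}\bJ_\mB = \sum_\mB \bh_\mB$ comes similarly from $\bJ_\mB = \rho_\mB \bh_\mB$.

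There is no real obstacle: the only point requiring attention is consistency in the conversions between the $\bJ$-fluxes and the $\bh$-fluxes through the relations $\bJ_\mB = \rho_\mB \bh_\mB$ and $\bJ_\mB^u = \rho_\mB \bh_\mB^u$, which is exactly what makes the third identity involve the extra weight $\rho_\mB$ inside the sum while the fourth does not. Each line of the lemma can therefore be written as a two- or three-step chain of equalities.
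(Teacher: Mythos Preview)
Your proposal is correct and is exactly the approach the paper takes: its proof consists of the single sentence ``These identities are a direct consequence of \cref{lem: mass av vs vol av},'' and your write-up simply spells out that direct consequence. One small slip: in the first identity the ``first piece'' $\trho_\mA(\bv_\mA-\bu)$ is $\bJ_\mA^u$, not $\trho_\mA^{-1}\bJ_\mA^u$ as you wrote, but the rest of your chain is fine.
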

\begin{proof}
    These identities are a direct consequence of \cref{lem: mass av vs vol av}.
\end{proof}

Lastly, we define the material derivative of the mixture relative to the mass-averaged velocity: 
\begin{align}\label{eq: mat der}
    \dot{\uppsi}(\bx,t) =&~ \partial_t \uppsi(\bx,t) + \bv(\bx,t)\cdot \nabla \uppsi(\bx,t).
\end{align}

\subsection{Constituent balance laws}\label{sec: const BL}

In the continuum theory of mixtures, each constituent  moves according to a distinct set of balance laws, as specified by the second general principle. These laws incorporate terms that model the interactions among the different constituents. The following local balance laws apply to the motion of each constituent $\mA = 1, \dots, N$ for all $\mathbf{x}\in \Omega$ and $t >0$:
\begin{subequations}\label{eq: BL const}
  \begin{align}
        \partial_t \trho_\mA + {\rm div}(\trho_\mA \bv_\mA) &=~ \gamma_\mA, \label{eq: local mass balance constituent j} \\
        \partial_t (\trho_\mA\bv_\mA) + {\rm div} \left( \trho_\mA\bv_\mA\otimes \bv_\mA \right) -  {\rm div} \mathbf{T}_\mA -  \trho_\mA \mathbf{b}_\mA &=~ \boldsymbol{\pi}_\mA,\label{eq: lin mom constituent j}\\
        \mathbf{T}_\mA-\mathbf{T}_\mA^T &=~\mathbf{N}_\mA.\label{eq: ang mom constituent j}
  \end{align}
\end{subequations}
Equations \eqref{eq: local mass balance constituent j} describe the local constituent mass balance laws, where the interaction terms $\gamma_{\mA}$ denote the mass supply of constituent $\mA$ due to chemical reactions with the other constituents. Then, \eqref{eq: lin mom constituent j} represent the local constituent linear momentum balance laws, where $\mathbf{T}_\mA$ is the Cauchy stress tensor of constituent $\mA$, $\mathbf{b}_\mA$ is the constituent external body force, and $\bpi_\mA$ is the momentum exchange rate of constituent $\mA$ with the other constituents. We assume equal body forces ($\mathbf{b}_\mA= \mathbf{b}$ for $\mA = 1, \dots, N$) throughout the article. Additionally, we restrict to gravitational body forces: $\mathbf{b} = -b \boldsymbol{\jmath} = -b \nabla y$, with $y$ being the vertical coordinate, $\boldsymbol{\jmath}$ the vertical unit vector, and $b$ a constant. Finally, \eqref{eq: ang mom constituent j} describes the local constituent angular momentum balance with $\mathbf{N}_\mA$ the intrinsic moment of momentum.

We introduce a split of the mass transfer term into a conservative part and a potentially non-conservative contribution via:
\begin{align}\label{eq: split gamma}
  \gamma_\mA = \zeta_\mA - {\rm div} \mathbf{j}_\mA.
\end{align}
The mass balance laws \eqref{eq: BL constitutive: mass} take the form:
\begin{align}
    \partial_t \trho_\mA + {\rm div}(\trho_\mA \bv_\mA) + {\rm div} \bj_\mA = \zeta_\mA.
\end{align}
By invoking the definitions in \cref{sec: prelim}, one can deduce various alternative -- equivalent -- formulations of the constituent mass balance laws \eqref{eq: local mass balance constituent j}, such as:
\begin{subequations}\label{eq: BL mass constituent material der}
  \begin{align}
    \partial_t \trho_\mA + {\rm div}(\trho_\mA \bv) + {\rm div} (\bJ_\mA + \bj_\mA) &=~ \zeta_\mA, \label{eq: BL mass constituent material der: 1}\\
    \partial_t \trho_\mA + {\rm div}(\trho_\mA \bu) + {\rm div} (\mathbf{J}_\mA^u  + \bj_\mA) &=~ \zeta_\mA, \label{eq: BL mass constituent material der: 2}\\
    \partial_t \phi_\mA + {\rm div}(\phi_\mA \bv) + {\rm div} \mathbf{h}_\mA + \rho_\mA^{-1}{\rm div} \bj_\mA &=~ \rho_\mA^{-1}\zeta_\mA,\label{eq: BL mass constituent material der: 3}\\      
    \partial_t \phi_\mA  + {\rm div}\left(\phi_\mA \bu\right)+ {\rm div}\bh_\mA^u + \rho_\mA^{-1}{\rm div} \bj_\mA &=~ \rho_\mA^{-1}\zeta_\mA, \label{eq: BL mass constituent material der: 4}\\
    \rho \partial_t c_\mA + \rho \bv \cdot \nabla c_\mA + {\rm div} (\bJ_\mA + \bj_\mA) &=~ \zeta_\mA,\label{eq: BL mass constituent material der: 5}\\
    %\rho \partial_t c_\mA + \rho \bu \cdot \nabla c_\mA + \sum_\mB \bJ_\mB^u \cdot \nabla c_\mA + {\rm div} \mathbf{J}_\mA &=~ \gamma_\mA,\\
    \rho \partial_t c_\mA + \rho \bu \cdot \nabla c_\mA + {\rm div} (\mathbf{J}_\mA^u  + \bj_\mA) - c_\mA {\rm div}\left( \sum_\mB \bJ_\mB^u \right) &=~ \zeta_\mA. \label{eq: BL mass constituent material der: 6}
  \end{align}
\end{subequations}
Additionally, by invoking the relation \eqref{eq: relation c phi} we can deduce numerous alternative -- equivalent --  formulations; for example, by inserting \eqref{eq: relation c phi} into \eqref{eq: BL mass constituent material der: 3} we arrive at an uncommon formulation:
\begin{align}\label{eq: BL mass constituent alt} 
    \partial_t \left(\frac{c_\mA}{\rho_\mA}\left(\sum_\mB\frac{c_\mB}{\rho_\mB}\right)^{-1}\right) + {\rm div}\left(\frac{c_\mA}{\rho_\mA}\left(\sum_\mB\frac{c_\mB}{\rho_\mB}\right)^{-1} \bv\right) &\nn\\
    + {\rm div} \mathbf{h}_\mA + \rho_\mA^{-1}{\rm div} \bj_\mA & = \rho_\mA^{-1}\zeta_\mA.
\end{align}
Similarly, one can write the constituent momentum balance laws \eqref{eq: lin mom constituent j} as:
\begin{subequations}\label{eq: BL mom constituent material der}
  \begin{align}
    \partial_t (\trho_\mA\bv + \bJ_\mA) + {\rm div} \left( \trho_\mA\bv\otimes \bv +\bJ_\mA \otimes \bv + \bv\otimes \bJ_\mA  \right)&\nn\\
    - ~ {\rm div} \left(\mathbf{T}_\mA - \trho_\mA \bw_\mA\otimes \bw_\mA \right) -  \trho_\mA \mathbf{b}_\mA &=~ \boldsymbol{\pi}_\mA, \label{eq: BL mom constituent material der: 1}\\
    \partial_t (\trho_\mA\bu + \bJ_\mA^u) + {\rm div} \left( \trho_\mA\bu\otimes \bu +\bJ_\mA^u \otimes \bu + \bu\otimes \bJ_\mA^u  \right)&\nn\\
    + ~ {\rm div} \left(\trho_\mA \bomega_\mA\otimes \bomega_\mA - \trho_\mA \bw_\mA\otimes \bw_\mA  \right)&\nn\\
    -  {\rm div} \left(\mathbf{T}_\mA - \trho_\mA \bw_\mA\otimes \bw_\mA \right) -  \trho_\mA \mathbf{b}_\mA &=~ \boldsymbol{\pi}_\mA. \label{eq: BL mom constituent material der: 2}%\\
    %\partial_t (\trho_\mA\bu + \bJ_\mA^u) + {\rm div} \left( \trho_\mA\bu\otimes \bu +\bJ_\mA^u \otimes \bu + \bu\otimes \bJ_\mA^u  \right)&\nn\\- ~  {\rm div} \left(\mathbf{T}_\mA - \trho_\mA \bomega_\mA\otimes \bomega_\mA \right) -  \trho_\mA \mathbf{b}_\mA &=~ \boldsymbol{\pi}_\mA. \label{eq: BL mom constituent material der: 2}
  \end{align}
\end{subequations}
Finally, we introduce the constituent kinetic and gravitational energies, respectively, as:
\begin{subequations}
    \begin{align}
  \mathscr{K}_\mA =&~\trho_\mA \|\bv_\mA\|^2/2,\\
  \mathscr{G}_\mA =&~\trho_\mA b y,
\end{align}
\end{subequations}
where $\|\mathbf{v}_\mA\|=(\mathbf{v}_\mA \cdot \mathbf{v}_\mA)^{1/2}$ is the Euclidean norm of the velocity $\mathbf{v}_\mA$.

\subsection{Mixture balance laws}\label{sec: mixture BL}
The standard formulation of mixture balance laws is well-known and follows from summing the balance laws \eqref{eq: BL const} over all constituents. To establish the precise form, one can, for example, utilize the formulations \eqref{eq: BL mass constituent material der: 1} and \eqref{eq: BL mom constituent material der: 1} and invoke the identity \eqref{eq: rel gross motion zero: J} to obtain:
\begin{subequations}\label{eq: BL mix}
  \begin{align}
        \partial_t \rho + {\rm div}(\rho \bv) &=~ 0, \label{eq: local mass balance mix} \\
        \partial_t (\rho \bv) + {\rm div} \left( \rho \bv \otimes \bv \right) -  {\rm div} \mathbf{T} -  \rho \mathbf{b} &=~0,\label{eq: lin mom mix}\\
        \mathbf{T}-\mathbf{T}^T &=~0,\label{eq: ang mom mix}
  \end{align}
\end{subequations}
where the mixture stress and mixture body force are given by,  respectively:
\begin{subequations}
    \begin{align}
       \mathbf{T} =&~ \sum_\mA \mathbf{T}_\mA-\trho_\mA\bw_\mA\otimes\bw_\mA,\\
    \mathbf{b} =&~\frac{1}{\rho}\sum_\mA \trho_\mA\mathbf{b}_\mA,
    \end{align}
\end{subequations}
and where we have postulated the following balance conditions to hold as follows:
\begin{subequations}\label{eq: balance fluxes}
  \begin{align}
      \displaystyle\sum_\mA \gamma_\mA  =&~ 0,\label{eq: balance mass fluxes}\\
      \displaystyle\sum_\mA \boldsymbol{\pi}_\mA =&~ 0,\label{eq: balance momentum fluxes}\\
      \displaystyle\sum_\mA \mathbf{N}_\mA =&~ 0,
      \end{align}
\end{subequations}
and where we invoke \eqref{eq: balance mass fluxes} via:
\begin{subequations}\label{eq: BL conditions 2}
    \begin{align}
        \sum_\mA \zeta_\mA =&~ 0, \label{eq: BL conditions 2: zeta}\\
        \sum_\mA \mathbf{j}_\mA =&~ 0.\label{eq: BL conditions 2: bj} 
    \end{align}
\end{subequations}
This formulation is compatible with the first general principle: the motion of the mixture is derived from the motion of its individual constituents. In addition, the postulate \eqref{eq: balance fluxes} is essential to ensure general principle three. Even though the forms presented in \eqref{eq: BL mass constituent material der} and \eqref{eq: BL mom constituent material der} are equivalent, the summation of these laws over the constituents does not provide a suitable system of mixture balance laws for each of the formulations. Namely general principle three communicates that the resulting equations of the mixture are indistinguishable from that of a single body. Complying with this principle restricts the forms of the mass balance law to \eqref{eq: BL mass constituent material der: 1} and \eqref{eq: BL mass constituent material der: 2}, and requires the identification of suitable mixture variables. These variables are $\rho$, $\bv$, $\mathbf{T}$ and $\mathbf{b}$, as defined above. In this sense, the framework of continuum mixture theory serves as a guideline for defining mixture variables. However, one can work with other variables as well; and this is fully compatible with the framework.

We discuss other formulations that emerge from \eqref{eq: BL mass constituent material der} and \eqref{eq: BL mom constituent material der}. Summation of \eqref{eq: BL mass constituent material der: 2}-\eqref{eq: BL mass constituent material der: 6} over the constituents provides:
\begin{subequations}\label{eq: mix mass laws}
  \begin{align}
    \partial_t \rho + {\rm div}\left(\rho \left(\bu + \rho^{-1}\sum_\mA \mathbf{J}_\mA^u \right)\right) &=~ \sum_\mA \gamma_\mA = 0, \label{eq: mix mass laws: 2}\\
    {\rm div} \left(\bv + \sum_\mA \mathbf{h}_\mA \right) &=~ \sum_\mA \rho_\mA^{-1}\gamma_\mA, \label{eq: mix mass laws: 3} \\      
    {\rm div}\bu  &=~ \sum_\mA \rho_\mA^{-1}\gamma_\mA \label{eq: mix mass laws: 4}\\
    0 &=~ \sum_\mA \gamma_\mA, \label{eq: mix mass laws: 5}%\\
    %0 &=~ \sum_\mA \gamma_\mA. \label{eq: mix mass laws: 6}
  \end{align}
\end{subequations}
where \eqref{eq: mix mass laws: 2}-\eqref{eq: mix mass laws: 4} follow from \eqref{eq: BL mass constituent material der: 2}-\eqref{eq: BL mass constituent material der: 4}, respectively, and \eqref{eq: mix mass laws: 5} results from both \eqref{eq: BL mass constituent material der: 5} and \eqref{eq: BL mass constituent material der: 6}. We observe from \eqref{eq: mix mass laws: 2} that the term in the inner brackets in the second term represents the mixture velocity. Obviously, this matches the mass averaged velocity by invoking \cref{lem: mass av vs vol av}. Next, note that \eqref{eq: mix mass laws: 3} also follows from the summation over the constituents of \eqref{eq: BL mass constituent alt}. With the aid of \cref{lem: mass av vs vol av}, one can infer that \eqref{eq: mix mass laws: 3} and \eqref{eq: mix mass laws: 4} are identical. Furthermore, $\bv + \sum_\mA \bh_\mA = \bu$ is a divergence-free velocity whenever either (i) mass transfer is absent ($\gamma_\mA = 0$ for all $\mA$), or (ii) the constituent densities match ($\rho_\mA = \rho_\mB$ for all $\mA,\mB$). The equation \eqref{eq: mix mass laws: 5} complies with the balance condition \eqref{eq: balance mass fluxes}. Finally, the summation of \eqref{eq: BL mom constituent material der} yields:
\begin{align}
      \partial_t \left(\rho \left(\bu + \rho^{-1}\sum_\mA \bJ_\mA^u\right)\right) + {\rm div} \left( \rho \left(\bu\otimes \bu + \rho^{-1} \sum_\mA \bJ_\mA^u \otimes \bu + \rho^{-1}\bu\otimes \sum_\mA \bJ_\mA^u  \right)\right)\nn\\
     + ~ {\rm div} \left( \sum_\mA \trho_\mA \bomega_\mA\otimes \bomega_\mA - \trho_\mA \bw_\mA\otimes \bw_\mA  \right)-  {\rm div} \mathbf{T} -  \rho \mathbf{b} =~0.\hspace*{2cm}   
\end{align}
Invoking \cref{lem: mass av vs vol av} and \cref{lem: scaled peculiar velo} this may be written as:
\begin{align}
      \partial_t \left(\rho \left(\bu + \rho^{-1}\sum_\mA \bJ_\mA^u\right)\right) + {\rm div} \left( \rho \left(\bu\otimes \bu + \rho^{-1} \sum_\mA \bJ_\mA^u \otimes \bu + \rho^{-1}\bu\otimes \sum_\mA \bJ_\mA^u  \right)\right)\nn\\
    + ~ {\rm div} \left( \rho^{-1} \sum_\mA \bJ_\mA^u\otimes  \sum_\mA \bJ_\mA^u\right) -  {\rm div} \mathbf{T} -  \rho \mathbf{b} =~0.  \hspace*{2cm}  
\end{align}
One can infer equivalence with the mass-averaged momentum equation by noting the identities:
\begin{subequations}
\begin{align}
   {\rm div} \left( \rho \left(\bu\otimes \bu + \rho^{-1} \sum_\mA \bJ_\mA^u \otimes \bu + \rho^{-1}\bu\otimes \sum_\mA \bJ_\mA^u  \right)\right) 
   =\hspace*{2cm} \nn\\
    {\rm div} \left( \rho \bv \otimes \bv -\rho (\bv-\bu)\otimes(\bv-\bu)\right),\hspace*{2cm}  \\
   {\rm div} \left( \sum_\mA \trho_\mA \bomega_\mA\otimes \bomega_\mA - \trho_\mA \bw_\mA\otimes \bw_\mA  \right) = {\rm div} \left(\rho (\bv-\bu)\otimes(\bv-\bu)\right).\hspace*{2cm} 
\end{align}
\end{subequations}
In summary, an -- equivalent -- formulation of mixture balance laws \eqref{eq: BL mix} in terms of the volume-averaged velocity is:
\begin{subequations}\label{eq: BL mix vol}
  \begin{align}
        \partial_t \rho + {\rm div}\left(\rho \bu + \sum_\mA \bJ_\mA^u \right) &=~ 0, \label{eq: BL mix vol: mass} \\
              \partial_t \left(\rho\bu + \sum_\mA \bJ_\mA^u\right) + {\rm div} \left( \rho \bu\otimes \bu + \sum_\mA \bJ_\mA^u \otimes \bu  \right.& \nn\\
              + \left. \bu\otimes \sum_\mA \bJ_\mA^u  + \rho^{-1} \sum_\mA \bJ_\mA^u\otimes  \sum_\mA \bJ_\mA^u\right) -  {\rm div} \mathbf{T} -  \rho \mathbf{b} &=~0, \label{eq: BL mix vol: mom}\\
        \mathbf{T}-\mathbf{T}^T &=~0.\label{eq: BL mix vol: ang mom mix}
  \end{align}
\end{subequations}
The various forms presented in this section show that the set of balance laws, on both constituent level (\cref{sec: const BL}) and mixture level (\cref{sec: mixture BL}), is invariant to the set of fundamental variables.

We close this section with a remark on the kinetic and gravitational energies. According to the first metaphysical principle of mixture theory, the kinetic and gravitational energies of the mixture equal the summation of the constituent energies:
\begin{subequations}
  \begin{align}
  \mathscr{K} =&~ \displaystyle\sum_{\mA} \mathscr{K}_\mA,\label{eq: def sum K}\\
  \mathscr{G} =&~ \displaystyle\sum_{\mA} \mathscr{G}_\mA.
\end{align}
\end{subequations}
The kinetic energy of the mixture can be decomposed as:
\begin{subequations}\label{eq: relation kin energies}
    \begin{align}
      \mathscr{K} =&~ \bar{\mathscr{K}} + \displaystyle\sum_{\mA} \frac{1}{2} \trho_\mA \|\mathbf{w}_\mA\|^2,\\
       \bar{\mathscr{K}} =&~ \frac{1}{2} \rho \|\mathbf{v}\|^2,\label{eq: kin avg}
\end{align}
\end{subequations}
where $\bar{\mathscr{K}}$ represents the kinetic energy of the mixture variables, and where the other term is the kinetic energy of the constituents utilizing the peculiar velocity. The second terms may also be expressed in terms of volume-averaged quantities:
\begin{align}
  \displaystyle\sum_{\mA} \frac{1}{2} \trho_\mA \|\mathbf{w}_\mA\|^2 = 
  \displaystyle\sum_{\mA} \frac{1}{2} \trho_\mA \|\bomega_\mA - \rho^{-1} \sum_\mA \bJ_\mA^u \|^2. 
\end{align}

\section{Constitutive modeling}\label{sec: 2nd law}

This section details the development of constitutive models under the constraints of an energy-dissipative postulate. First, \cref{sec: const mod: subsec: def} outlines the fundamental assumptions and modeling choices. Next, \cref{sec: const mod: subsec: model restr} establishes the constitutive modeling restriction introduced in \cref{sec: const mod: subsec: def}, and \cref{sec: const mod: alt class} describes alternative modeling classes. Finally, in \cref{sec: const mod: subsec: select}, we select particular constitutive models that adhere to these established restrictions.

\subsection{Assumptions and modeling choices}\label{sec: const mod: subsec: def}

Rather than using the complete set of balance laws as given in \eqref{eq: local mass balance constituent j}, \eqref{eq: lin mom constituent j}, and \eqref{eq: ang mom constituent j}, we limit our focus to the simplified subset:
\begin{subequations}\label{eq: BL constitutive}
  \begin{align}
        \partial_t \phi_\mA + {\rm div}(\phi_\mA \bv) + \rho_\mA^{-1}{\rm div} \bH_\mA &=~ \rho_\mA^{-1}\zeta_\mA, \label{eq: BL constitutive: mass} \\
        \partial_t (\rho \bv) + {\rm div} \left( \rho\bv\otimes \bv \right) -  {\rm div} \mathbf{T} -  \rho \mathbf{b} &=~0,\label{eq: constitutive lin mom mix}\\
        \mathbf{T}-\mathbf{T}^T &=~0,\label{eq: constitutive ang mom mix}
  \end{align}
\end{subequations}
with $\bH_\mA := \bJ_\mA + \bj_\mA$, where \eqref{eq: BL constitutive: mass} holds for constituents $\mA=1,...,N$. At this point, the system is comprised of the unknown quantities: volume fractions $\phi_\mA$ ($\mA = 1,...,N$), where we recall the identity \eqref{eq: phi trho trho}, mass-averaged mixture velocity $\bv$, peculiar velocities $\bJ_\mA$ ($\mA = 1,...,N$), mass transfer terms $\zeta_\mA, \bj_\mA$ ($\mA = 1,...,N$), and mixture stress $\mathbf{T}$. In order to close the system we seek for constitutive models for $\bJ_\mA$, $\bj_\mA$, $\zeta_\mA$ and $\mathbf{T}$. Seeking for constitutive models for the peculiar velocities $\bJ_\mA$ could be perceived as a simplification procedure. Namely, substituting a constitutive model (in \cref{sec: const mod: subsec: select}), in general, violates the continuum mixture theory definitions \eqref{eq: def J and h}. We discard these definitions \eqref{eq: def J and h} in the following, but design models compatible with \cref{lem: mass av vs vol av} and \cref{lem: scaled peculiar velo} to ensure invariance with respect to the set of fundamental variables. Instead of working with $N$ velocities quantities $\bv_\mA$, the simplified system contains a single unknown velocity quantity $\bv$ and constitutive models for peculiar velocities $\bJ_\mA$. This is compatible with the structure of the system: the full system is composed of $N$ linear momentum (mixture) balance law whereas the simplified system contains a single linear momentum balance law. Additionally, we enforce the balance condition for the peculiar velocities \eqref{eq: rel gross motion zero: J} 
and the mass transfer terms \eqref{eq: balance mass fluxes} as follows:
\begin{subequations}\label{eq: BL conditions}
    \begin{align}
        \sum_\mA \bJ_\mA =&~ 0,\label{eq: BL conditions: bJ}\\
        \sum_\mA \bj_\mA =&~ 0,\label{eq: BL conditions: bj}\\
        \sum_\mA \zeta_\mA =&~ 0, \label{eq: BL conditions: gamma}
    \end{align}
\end{subequations}
where we recall the decomposition \eqref{eq: split gamma}.
The system \eqref{eq: BL constitutive} contains the unknown variables $\bv$ and $\phi_\mA$ ($\mA = 1,...,N$). We emphasize that the set $\left\{\phi_\mA\right\}_{\mA =1,...,N}$ is comprised of $N-1$ independent variables due to the summation condition \eqref{eq: sum phi}. As such, system \eqref{eq: BL constitutive} has a degenerate nature; it contains $N+1$ equations for $N$ variables (we preclude \eqref{eq: ang mom mix} in this count). \cref{sec: const mod: subsec: model restr} restores the balance by means of a Lagrange multiplier construction.

\begin{remark}[Classification]
  The above assumptions lead to a model that includes $N$ constituent mass balance laws along with a single momentum balance law. According to the classification by \cite{hutter2013continuum}, this configuration aligns best with a class-I model.
\end{remark}

We adopt the well-known Coleman-Noll procedure \citep{coleman1974thermodynamics} as a guiding principle to design constitutive models. For this purpose we postulate the energy-dissipation law:
\begin{align}\label{eq: energy dissipation}
    \dfrac{{\rm d}}{{\rm d}t} \mathscr{E} = \mathscr{W} - \mathscr{D},
\end{align}
satisfying $\mathscr{D}\geq 0$. The total energy is comprised of the Helmholtz free energy, the kinetic energy and the gravitational energy:
\begin{align}\label{eq: total energy}
  \mathscr{E} =  \displaystyle\int_{\mathcal{R}(t)}(\Psi + \bar{\mathscr{K}} + \mathscr{G})~{\rm d}v.
\end{align}
In this context, $\mathcal{R}=\mathcal{R}(t) \subset \Omega$ refers to a time-dependent control volume with volume element ${\rm d}v$ and a unit outward normal $\boldsymbol{\nu}$ that is transported by the velocity field $\bv$. Additionally, $\mathscr{W}$ represents a work rate term on the boundary $\partial \mathcal{R}(t)$ (with boundary element ${\rm d}a$), and $\mathscr{D}$ denotes the dissipation within the interior of $\mathcal{R}(t)$.

\begin{remark}[Energy-dissipation postulate]\label[remark]{remark: energy dissipation}
  As mentioned in \cite{eikelder2023unified,eikelder2023thermodynamically}, the energy-dissipation statement \eqref{eq: energy dissipation} can be perceived as approximation of the second law of thermodynamics for mixtures.
\end{remark}

We postulate that the free energy to pertain to the constitutive class:
\begin{align}\label{eq: class Psi}
  \Psi =&~ \hat{\Psi}\left(\left\{\phi_\mA\right\}_{\mA=1,...,N},\left\{\nabla \phi_\mA\right\}_{\mA=1,...,N}\right),
\end{align}
and introduce the chemical potential quantities ($\alpha = 1,...,N$):
\begin{align}\label{eq: chem pot}
    \hat{\mu}_\mA =&~ \dfrac{ \partial \hat{\Psi}}{\partial \phi_\mA} - {\rm div}\dfrac{\partial \hat{\Psi}}{\partial\nabla \phi_\mA}.
\end{align}
The volume fractions $\left\{\phi_\mA\right\}_{\mA=1,...,N}$ (and their gradients $\left\{\nabla \phi_\mA\right\}_{\mA=1,...,N}$) are not independent quantities due to the saturation constraint \eqref{eq: sum phi}. One may consider \eqref{eq: class Psi} and \eqref{eq: chem pot}
subject to the summation constraint \eqref{eq: sum phi}, or postpone enforcing it to the introduction of the Lagrange multiplier. In the latter case \eqref{eq: class Psi} and \eqref{eq: chem pot} are obviously well-defined, while we discuss some implications of the former case. 
As such, when considering the summation constraint \eqref{eq: sum phi}, the chemical potentials are individually arbitrary. For example, addition of the term $(1-\sum_\mA \phi_\mA) $ to $\hat{\Psi}$ does not alter it, but it modifies the chemical potentials $\mu_\mA$. %However, once a free energy within the class \eqref{eq: class Psi} has been specified, the associated chemical potential is well-defined. %Therefore, additional constraints are required to uniquely define the chemical potentials.

\begin{remark}[Reduced free energy class]
Instead of utilizing the class \eqref{eq: class Psi}, one can also directly enforce the summation constraint \eqref{eq: sum phi} to arrive at a class with reduced dependency. In general, this breaks the symmetry of the approach, and therefore we do not adopt this alternative here. We discuss this option in \cref{appendix proofs}.
\end{remark}

\begin{remark}[Concentration-dependent free energy class]\label[remark]{remark: alternative free energy classes}
One can also work with a constituent class that depends on concentration quantities. We discuss this option in \cref{sec: const mod: alt class}. 
\end{remark}

\subsection{Modeling restriction}\label{sec: const mod: subsec: model restr}

Moving forward, we study in detail the restriction \eqref{eq: energy dissipation}. First, we analyze the evolution of the energy \eqref{eq: total energy}. Through the application of the Reynolds transport theorem to the free energy $\hat{\Psi}$, we have:
\begin{align}
      \dfrac{{\rm d}}{{\rm d}t}\displaystyle\int_{\mathcal{R}(t)} \hat{\Psi} ~{\rm d}v = \displaystyle\int_{\mathcal{R}(t)} \partial_t \hat{\Psi} ~{\rm d}v + \displaystyle\int_{\partial \mathcal{R}(t)} \hat{\Psi} \bv \cdot \bnu  ~{\rm d}a.
\end{align}

We notice that the summation constraint \eqref{eq: sum phi} does not alter the derivative of the free energy class \eqref{eq: class Psi}.
\begin{lemma}[Derivative of the free energy]\label[lemma]{lem: derivative}
The derivative of the free energy class \eqref{eq: class Psi} is given by:
\begin{align}
    {\rm d}\hat{\Psi} = \sum_{\mA} \dfrac{\partial \hat{\Psi}}{\partial \phi_\mA} {\rm d}\phi_\mA + \sum_{\mA} \dfrac{\partial \hat{\Psi}}{\partial \nabla\phi_\mA} {\rm d}(\nabla \phi_\mA),
\end{align}
where ${\rm d}$ is the derivative operator.
\end{lemma}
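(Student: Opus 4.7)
The plan is to derive the claimed formula directly from the standard multivariable chain rule applied to $\hat{\Psi}$, viewing its arguments $\{\phi_\mA\}_{\mA=1,\ldots,N}$ and $\{\nabla \phi_\mA\}_{\mA=1,\ldots,N}$ as formally independent inputs of the constitutive class \eqref{eq: class Psi}. Under this viewpoint, the chain rule immediately produces the expression on the right-hand side with no additional manipulation.

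The substantive content of the lemma, as flagged in the paragraph preceding it, is that the saturation constraint \eqref{eq: sum phi} does not spoil the validity of this formula. To substantiate this, I would first differentiate \eqref{eq: sum phi} to obtain the compatibility conditions $\sum_\mA {\rm d}\phi_\mA = 0$ and $\sum_\mA {\rm d}(\nabla \phi_\mA) = \mathbf{0}$. Then I would observe that any two extensions of $\hat{\Psi}$ off the constraint manifold that agree on it differ by a term of the form $\lambda \cdot (1 - \sum_\mA \phi_\mA)$ for some smooth scalar $\lambda$, whose contribution to each partial $\partial\hat{\Psi}/\partial \phi_\mA$ is $-\lambda$ (independent of $\mA$), and analogously for the gradient arguments. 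Inserting this contribution into the chain-rule formula produces $-\lambda \sum_\mA {\rm d}\phi_\mA = 0$ by the compatibility condition, so the resulting differential is independent of the choice of extension.

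The main (and only mild) obstacle is conceptual rather than computational: one must justify that the individual partial derivatives $\partial\hat{\Psi}/\partial \phi_\mA$ and $\partial\hat{\Psi}/\partial \nabla\phi_\mA$ are well-defined notwithstanding the dependence among the arguments. The resolution is the convention announced just before the lemma statement, namely to treat the $\{\phi_\mA\}$ as independent at the level of $\hat{\Psi}$ and to postpone enforcement of \eqref{eq: sum phi} to the Lagrange-multiplier construction of \cref{sec: const mod: subsec: model restr}. With that convention in place, the chain rule applies verbatim, the right-hand side is unambiguous on the constraint surface, and the identity follows.
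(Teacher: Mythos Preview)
Your argument is correct, but it proceeds differently from the paper. The paper's proof works through the reduced class $\doublehat{\Psi}^{(\beta)}$ of \eqref{eq: class Psi alt}, in which $\phi_\mB$ has been eliminated via the constraint so that the remaining $N-1$ arguments are genuinely independent. It computes ${\rm d}\doublehat{\Psi}^{(\beta)}$ by the ordinary chain rule, substitutes the identities $\partial\doublehat{\Psi}^{(\beta)}/\partial\phi_\mA = \partial\hat{\Psi}/\partial\phi_\mA - \partial\hat{\Psi}/\partial\phi_\mB$ from \cref{appendix: lemma chemical potentials reduced class}, and then uses $\sum_\mA {\rm d}\phi_\mA = 0$ to reinstate the $\mB$-th summand and recover the full symmetric sum. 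Your approach instead stays on the full set of variables and shows directly that the right-hand side is insensitive to the choice of extension off the constraint surface, via a Hadamard-type factorization of the difference between two extensions. Both routes hinge on the same compatibility condition $\sum_\mA {\rm d}\phi_\mA = 0$; the paper's version is more explicit and simultaneously establishes the link to the reduced chemical potentials used later, while yours is more conceptual and avoids singling out any constituent index. One small point: your factorization claim is only exact to first order (higher-order terms in $(1-\sum_\mA\phi_\mA)$ can appear), but since only the differential on the constraint surface is at stake this is harmless.
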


\begin{proof}
  See \cref{appendix: lem derivative}.
\end{proof}
Invoking \cref{lem: derivative} and the divergence theorem yields:
\begin{align}\label{eq: Psi derivation 1}
    \dfrac{{\rm d}}{{\rm d}t}\displaystyle\int_{\mathcal{R}(t)} \hat{\Psi} ~{\rm d}v  = \displaystyle\int_{\mathcal{R}(t)} &~ \hat{\Psi} ~{\rm div} \bv + \sum_{\mA}\dfrac{\partial \hat{\Psi}}{\partial \phi_\mA} \dot{\phi}_\mA+\sum_{\mA} \dfrac{\partial \hat{\Psi}}{\partial \nabla \phi_\mA}\cdot \left(\nabla \phi_\mA\right)\dot{} ~{\rm d}v.
\end{align}

Integrating by parts provides:
\begin{align}\label{eq: IP}
    \dfrac{{\rm d}}{{\rm d}t}\displaystyle\int_{\mathcal{R}(t)} \hat{\Psi} ~{\rm d}v  =&~ \displaystyle\int_{\mathcal{R}(t)} \hat{\Psi}~{\rm div} \bv + \sum_{\mA} \hat{\mu}_\mA \dot{\phi}_\mA  - \sum_{\mA} \nabla \phi_\mA \otimes \dfrac{\partial \hat{\Psi}}{\partial \nabla \phi_\mA}: \nabla \bv  ~{\rm d}v  \nn\\
    &~+ \displaystyle\int_{\partial \mathcal{R}(t)}\sum_{\mA} \dot{\phi}_\mA \dfrac{\partial \hat{\Psi}}{\partial \nabla \phi_\mA}\cdot \boldsymbol{\nu} ~{\rm d}a,
\end{align}
where we have substituted the identity
\begin{align}\label{eq: relation grad phi}
    (\nabla \uppsi)\dot{} = \nabla (\dot{\uppsi}) - (\nabla \uppsi)^T\nabla \bv
\end{align}
for $\uppsi = \phi_\mA$. An analysis of the free energy terms confirms their well-defined nature.
\begin{lemma}[Well-defined free energy terms]\label[lemma]{lem: well-defined}
The following free energy terms in \eqref{eq: IP} are well-defined:
\begin{align}
  \sum_{\mA} \hat{\mu}_\mA \dot{\phi}_\mA; \quad \sum_{\mA} \nabla \phi_\mA \otimes \dfrac{\partial \hat{\Psi}}{\partial \nabla \phi_\mA}; \quad \sum_{\mA} \dot{\phi}_\mA \dfrac{\partial \hat{\Psi}}{\partial \nabla \phi_\mA}.
\end{align}
\end{lemma}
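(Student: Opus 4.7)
The plan is to establish that each of the three sums is invariant under the gauge freedom in $\hat{\Psi}$ induced by the saturation constraint $\sum_\mB \phi_\mB = 1$, which is the meaning of ``well-defined'' here (the individual $\hat{\mu}_\mA$ and $\partial \hat{\Psi}/\partial \nabla \phi_\mA$ being themselves ambiguous, as noted just above the lemma).

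First I would characterize the ambiguity. Two representatives of $\hat{\Psi}$ that agree on the constraint manifold $\sum_\mB \phi_\mB = 1$ (on which $\sum_\mB \nabla \phi_\mB = 0$ holds automatically) differ by a smooth function generated by the two elementary perturbations
\[
\hat{\Psi} \mapsto \hat{\Psi} + \Big(1 - \sum_\mB \phi_\mB\Big) G, \qquad \hat{\Psi} \mapsto \hat{\Psi} + \Big(\sum_\mB \nabla \phi_\mB\Big) \cdot \mathbf{H},
\]
for arbitrary smooth scalar $G$ and vector $\mathbf{H}$ depending on $\phi_\mA$ and $\nabla \phi_\mA$.

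Next, for each perturbation I would differentiate according to \eqref{eq: chem pot} and evaluate on the constraint manifold. The key observation is that, after this evaluation, the induced shifts are the same for every index $\mA$: the first perturbation contributes $\hat{\mu}_\mA \mapsto \hat{\mu}_\mA - G$ with no change to $\partial \hat{\Psi}/\partial \nabla \phi_\mA$, and the second contributes $\partial \hat{\Psi}/\partial \nabla \phi_\mA \mapsto \partial \hat{\Psi}/\partial \nabla \phi_\mA + \mathbf{H}$ together with $\hat{\mu}_\mA \mapsto \hat{\mu}_\mA - \mathrm{div}\,\mathbf{H}$. All non-uniform remainders vanish because they carry a factor of $\sum_\mB \nabla \phi_\mB$ or $1-\sum_\mB \phi_\mB$.

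Finally, I would invoke $\sum_\mA \dot{\phi}_\mA = 0$ and $\sum_\mA \nabla \phi_\mA = 0$, which follow by differentiating \eqref{eq: sum phi} in time and space, respectively. Substituting the shifts into the three sums, the index-independent corrections factor out as $\sum_\mA \dot{\phi}_\mA$ or $\sum_\mA \nabla \phi_\mA$ and drop. The main obstacle is the bookkeeping for the divergence term in \eqref{eq: chem pot} under the second perturbation, since the formal derivative $\partial/\partial \nabla \phi_\mA\bigl[(\sum_\mB \nabla \phi_\mB)\cdot \mathbf{H}\bigr]$ also generates an $\mA$-dependent remainder; this must be checked to vanish on the constraint manifold before the final step closes the argument.
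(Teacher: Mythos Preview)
Your approach is correct but genuinely different from the paper's. The paper does not characterize the gauge freedom and check invariance under it; instead it introduces a reduced free energy $\doublehat{\Psi}^{(\beta)}$ depending only on the independent set $\{\phi_\alpha\}_{\alpha\neq\beta}$, proves that $\doublehat{\mu}_\alpha^{(\beta)}=\hat{\mu}_\alpha-\hat{\mu}_\beta$, and then rewrites each sum (e.g.\ $\sum_\alpha \hat{\mu}_\alpha \dot\phi_\alpha = \sum_{\alpha\neq\beta}\doublehat{\mu}_\alpha^{(\beta)}\dot\phi_\alpha$) as an expression in the unconstrained variables, which is manifestly well-defined. Your route is more direct and avoids setting up the auxiliary reduced class, while the paper's route additionally yields an explicit reduced formula that it reuses elsewhere. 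One point you should tighten: your claim that any two representatives of $\hat{\Psi}$ differ by combinations of $(1-\sum_\beta\phi_\beta)G$ and $(\sum_\beta\nabla\phi_\beta)\cdot\mathbf{H}$ is a Hadamard-lemma-type statement about functions vanishing on an affine submanifold; it is true, but you assert rather than justify it, whereas the paper's reduction argument sidesteps this issue entirely.
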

\begin{proof}
See \cref{appendix: lem: well-defined}.
\end{proof}

Substituting the constituent mass balance laws \eqref{eq: BL constitutive: mass} provides:
\begin{align}
    \dfrac{{\rm d}}{{\rm d}t}\displaystyle\int_{\mathcal{R}(t)} \hat{\Psi} ~{\rm d}v  =&~ \displaystyle\int_{\mathcal{R}(t)} \hat{\Psi}~{\rm div} \bv + \sum_{\mA} \hat{\mu}_\mA \left(-\phi_\mA {\rm div}\bv - \rho_\mA^{-1}{\rm div} \bH_\mA + \rho_\mA^{-1}\zeta_\mA\right) \nn\\
    &- \sum_{\mA} \nabla \phi_\mA \otimes \dfrac{\partial \hat{\Psi}}{\partial \nabla \phi_\mA}: \nabla \bv  ~{\rm d}v  + \displaystyle\int_{\partial \mathcal{R}(t)}\sum_{\mA} \dot{\phi}_\mA \dfrac{\partial \hat{\Psi}}{\partial \nabla \phi_\mA}\cdot \boldsymbol{\nu} ~{\rm d}a,
\end{align}
where we recall $\bH_\mA = \bJ_\mA + \bj_\mA$. By again applying integration by parts one can infer that:
\begin{align}\label{eq: Psi 2}
    \dfrac{{\rm d}}{{\rm d}t}\displaystyle\int_{\mathcal{R}(t)} \sum_{\mA} \hat{\Psi} ~{\rm d}v = &~ \displaystyle\int_{\mathcal{R}(t)} \hat{\Psi}~{\rm div} \bv -\sum_{\mA} \hat{\mu}_\mA\phi_\mA {\rm div}\bv + \sum_{\mA} \nabla (\rho_\mA^{-1}\hat{\mu}_\mA) \cdot \bH_\mA \nn\\
    &~~~- \sum_{\mA} \nabla \phi_\mA \otimes \dfrac{\partial \hat{\Psi}}{\partial \nabla \phi_\mA}: \nabla \bv + \sum_{\mA}  \rho_\mA^{-1}\hat{\mu}_\mA \zeta_\mA~{\rm d}v\nn\\
    &~+ \displaystyle\int_{\partial \mathcal{R}(t)}\sum_{\mA} \left(\dot{\phi}_\mA \dfrac{\partial \hat{\Psi}}{\partial \nabla \phi_\mA}-\rho_\mA^{-1}\hat{\mu}_\mA \bH_\mA\right)\cdot \boldsymbol{\nu} ~{\rm d}a.
\end{align}
%At this point additional constraints are demanded in order for the chemical potential terms to be well-defined. 
Next, the evolution of the kinetic and gravitational energies take the form (see \cite{eikelder2023unified} for details):
\begin{subequations}\label{eq: kin grav evo}
    \begin{align}
    \dfrac{{\rm d}}{{\rm d}t}\displaystyle\int_{\mathcal{R}(t)} \mathscr{K} ~{\rm d}v =&~ \displaystyle\int_{\mathcal{R}(t)} - \nabla \bv : \mathbf{T}+\rho\bv\cdot\mathbf{g}    ~{\rm d}v+ \displaystyle\int_{\partial \mathcal{R}(t)} \bv \cdot \mathbf{T} \bnu  ~{\rm d}a,\\
    \dfrac{{\rm d}}{{\rm d}t}\displaystyle\int_{\mathcal{R}(t)} \mathscr{G} ~{\rm d}v =&~- \displaystyle\int_{\mathcal{R}(t)} \rho \bv\cdot\mathbf{g}~{\rm d}v.
  \end{align}
  \end{subequations}
The superposition of \eqref{eq: Psi 2} and \eqref{eq: kin grav evo} provides the evolution of the total energy:
\begin{align}\label{eq: second law subst 1}
    \dfrac{{\rm d}}{{\rm d}t} \mathscr{E} = &~ \displaystyle\int_{\partial \mathcal{R}(t)}\left(\bv^T\mathbf{T}-\sum_{\mA} \left(\rho_\mA^{-1}\hat{\mu}_\mA\bH_\mA -\dot{\phi}_\mA \dfrac{\partial \hat{\Psi}}{\partial \nabla \phi_\mA}\right)\right)\cdot \boldsymbol{\nu} ~{\rm d}a \nn\\
    &~- \displaystyle\int_{\mathcal{R}(t)}   \left(\mathbf{T}  +\sum_{\mA} \nabla \phi_\mA \otimes \dfrac{\partial \hat{\Psi}}{\partial \nabla \phi_\mA} +\left(\sum_{\mA} \hat{\mu}_\mA\phi_\mA -  \hat{\Psi}\right)\mathbf{I}\right):\nabla \mathbf{v}\nn\\
    &~~~~~~~~~~~+ \sum_{\mA}\left(-\nabla (\rho_\mA^{-1}\hat{\mu}_\mA)\cdot \bH_\mA  - \rho_\mA^{-1}\hat{\mu}_\mA\zeta_\mA \right)~{\rm d}v.
\end{align}
As aforementioned in \cref{sec: const mod: subsec: def}, the system of balance laws \eqref{eq: BL constitutive} subject the balance conditions \eqref{eq: BL conditions} is degenerate. Namely, the terms $\nabla \mathbf{v}$, $\bH_\mA$ and $\zeta_\mA$ are connected via \eqref{eq: mix mass laws: 3}. This manifests itself in the energy dissipation statement \eqref{eq: second law subst 1}. The degeneracy needs to be eliminated in order to exploit the energy-dissipation condition as a guiding principle for constitutive modeling. To this purpose we enforce \eqref{eq: mix mass laws: 3} with the Lagrange multiplier construction:
\begin{align}\label{eq: LM 1}
    0=&~ \lambda \left( {\rm div}\bv + \displaystyle\sum_{\mA}  \rho_\mA^{-1} \nabla \cdot \bH_\mA  - \displaystyle\sum_{\mA}\rho_\mA^{-1}\zeta_\mA \right),
\end{align}
where $\lambda$ is the scalar Lagrange multiplier.

\begin{remark}[Lagrange multiplier constraint]
  Recalling \cref{lem: mass av vs vol av}, we observe that the Lagrange multiplier $\lambda$ enforces the constraint \eqref{eq: mix mass laws: 4}. As such, in absence of mass transfer ($\gamma_\mA=0$, $\mA=1,...,N$), it constrains ${\rm div}\bu=0$.
\end{remark}
Integrating \eqref{eq: LM 1} over $\mathcal{R}(t)$ and subtracting the result from \eqref{eq: second law subst 1} provides:
\begin{align}\label{eq: second law subst 2}
    \dfrac{{\rm d}}{{\rm d}t} \mathscr{E} = &~ \displaystyle\int_{\partial \mathcal{R}(t)}\left(\bv^T\mathbf{T}-\sum_{\mA} \left(g_\mA \bH_\mA -\dot{\phi}_\mA \dfrac{\partial \hat{\Psi}}{\partial \nabla \phi_\mA}\right)\right)\cdot \boldsymbol{\nu} ~{\rm d}a \nn\\
    &~- \displaystyle\int_{\mathcal{R}(t)}   \left(\mathbf{T} + \lambda\mathbf{I} +\sum_{\mA} \nabla \phi_\mA \otimes \dfrac{\partial \hat{\Psi}}{\partial \nabla \phi_\mA} +\left(\sum_{\mA} \hat{\mu}_\mA\phi_\mA -  \hat{\Psi}\right)\mathbf{I}\right):\nabla \mathbf{v}\nn\\
    &~~~~~~~~~~~~~~~~+ \sum_{\mA}\left(-\nabla g_\mA\cdot \bH_\mA - g_\mA \zeta_\mA \right)~{\rm d}v,
\end{align}
where we have utilized Gau{\ss} divergence theorem, and where we have defined the (generalized) chemical potential quantities:
\begin{subequations}
  \begin{align}
    g_\mA :=&~ \rho_\mA^{-1}\hat{\mu}_{\mA,\lambda},\\
    \hat{\mu}_{\mA,\lambda} :=&~ \hat{\mu}_\mA +\lambda. \label{eq: def mu alpha lambda}
  \end{align}
\end{subequations}
We identify the rate of work and the dissipation respectively as:
\begin{subequations}\label{eq: W, D}
\begin{align}
    \mathscr{W} =&~ \displaystyle\int_{\partial \mathcal{R}(t)}\left(\bv^T\mathbf{T}-\sum_{\mA} \left(g_\mA\bH_\mA -\dot{\phi}_\mA \dfrac{\partial \hat{\Psi}}{\partial \nabla \phi_\mA}\right)\right)\cdot \boldsymbol{\nu} ~{\rm d}a,\\
    \mathscr{D} =&~\displaystyle\int_{\mathcal{R}(t)}   \left(\mathbf{T} + \lambda\mathbf{I} +\sum_{\mA} \nabla \phi_\mA \otimes \dfrac{\partial \hat{\Psi}}{\partial \nabla \phi_\mA} +\left(\sum_{\mA} \hat{\mu}_\mA\phi_\mA -  \hat{\Psi}\right)\mathbf{I}\right):\nabla \mathbf{v}\nn\\
    &~~~~~~~~~~~~~+ \sum_{\mA}\left(-\nabla g_\mA\cdot \bH_\mA -g_\mA\zeta_\mA \right)~{\rm d}v.\label{eq: def diffusion}
\end{align}
\end{subequations}
Given the arbitrary nature of the control volume $\mathcal{R}=\mathcal{R}(t)$, the fulfillment of the energy dissipation law is contingent upon satisfying the local inequality:
\begin{align}\label{eq: second law 4}
     \left(\mathbf{T} + \lambda\mathbf{I} +\sum_{\mA} \nabla \phi_\mA \otimes \dfrac{\partial \hat{\Psi}}{\partial \nabla \phi_\mA} +\left(\sum_{\mA} \hat{\mu}_\mA\phi_\mA -  \hat{\Psi}\right)\mathbf{I}\right):\nabla \mathbf{v}&\nn\\
    - \sum_{\mA}\nabla g_\mA\cdot \bH_\mA -\sum_{\mA} g_\mA \zeta_\mA &\geq 0.
\end{align}

\begin{remark}[Compatibility with continuum mixture theory]
  This section has demonstrated that the energy dissipation postulate \eqref{eq: energy dissipation} is fulfilled when the local inequality \eqref{eq: second law 4} is satisfied. As mentioned in \cref{remark: energy dissipation}, the energy dissipation postulate is an approximation of the second law of mixture theory. However, we emphasize that the presented derivations are fully compatible with continuum mixture theory. 
\end{remark}

We finalize this section with a remark on the connection between the chemical potentials and the Lagrange multiplier.
Recalling the saturation constraint \eqref{eq: sum phi}, we recognize:
\begin{align}\label{eq: lambda mu}
    \lambda + \sum_{\mA}\hat{\mu}_\mA \phi_\mA = \sum_\mA \hat{\mu}_{\mA,\lambda} \phi_\mA.
\end{align}
This observation reveals that chemical potentials in \eqref{eq: second law 4} solely occur in the form $\hat{\mu}_{\mA,\lambda}$. In other words, the chemical potentials $\hat{\mu}_\mA$ are tightly connected with the Lagrange multiplier $\lambda$. This is consistent with the examination that the addition of $\sum_\mA \phi_\mA - 1$ does not alter the free energy. Indeed, we have:
\begin{align}\label{eq: class Psi Lambda}
  \Psi =&~ \hat{\Psi}\left(\left\{\phi_\mA\right\}_{\mA=1,...,N},\left\{\nabla \phi_\mA\right\}_{\mA=1,...,N}\right) + \lambda \left(\sum_\mA \phi_\mA -1 \right),
\end{align}
and the associated chemical potential quantities ($\alpha = 1,...,N$) naturally include the Lagrange multiplier $\lambda$:
\begin{align}\label{eq: chem pot 1}
    \hat{\mu}_{\mA,\lambda} =  \dfrac{ \partial \Psi}{\partial \phi_\mA} - {\rm div}\dfrac{\partial \Psi}{\partial\nabla \phi_\mA},
\end{align}
where we recall \eqref{eq: def mu alpha lambda}.

\subsection{Alternative free energy classes}\label{sec: const mod: alt class}

As mentioned in \cref{remark: alternative free energy classes}, as an alternative for \eqref{eq: class Psi}, we explore the approach of working with a class that depends on concentration. This exploration is motivated by its occurrence in the literature on two-phase models (e.g. \cite{lowengrub1998quasi}). We consider the following  constitutive class:
\begin{align}\label{eq: class Psi c}
  \Psi =&~ \check{\Psi}\left(\left\{c_\mA\right\},\left\{\nabla c_\mA\right\}\right),
\end{align}
subject to the summation constraint \eqref{eq: sum c}.  Alongside with free energy class \eqref{eq: class Psi c} we introduce the chemical potential quantities ($\alpha = 1,...,N$):
\begin{align}
    \check{\mu}_\mA =&~ \dfrac{ \partial \hat{\Psi}}{\partial c_\mA} - {\rm div}\dfrac{\partial \hat{\Psi}}{\partial\nabla c_\mA}.
\end{align}
In \cref{appendix: alt const mod} we provide the derivation of the modeling restriction that emerges from the constitutive class \eqref{eq: class Psi c}. The modeling restriction takes the form:
\begin{align}\label{eq: restriction alt}
     \left(\mathbf{T} + \check{\lambda}\mathbf{I} +\sum_{\mA} \nabla c_\mA \otimes \dfrac{\partial \check{\Psi}}{\partial \nabla c_\mA}  -  \check{\Psi}\mathbf{I}\right):\nabla \mathbf{v}&\nn\\
    - \sum_{\mA}\nabla \left( \rho^{-1}\check{\mu}_\mA + \rho_\mA^{-1}\check{\lambda} \right)\cdot \bH_\mA -\sum_{\mA} \left( \rho^{-1}\check{\mu}_\mA + \rho_\mA^{-1}\check{\lambda} \right) \zeta_\mA &\geq 0,
\end{align}
where $\check{\lambda}$ is the Lagrange multiplier associated with the constraint \eqref{eq: mix mass laws: 3}. Noting that the volume fractions and concentrations are connected via \eqref{eq: relation c phi}:
\begin{subequations}\label{eq: relation c phi 2}
    \begin{align}
        \phi_\mA =&~ \phi_\mA\left(\left\{c_\mB\right\}\right),\\
        c_\mB =&~ c_\mB\left(\left\{\phi_\mA\right\}\right),
    \end{align}
\end{subequations}
the identification
\begin{align}\label{eq: identify free energy}
    \hat{\Psi}\left(\left\{\phi_\mA\right\}\right) =\check{\Psi}\left(\left\{c_\mB\right\}\right)
\end{align}
reveals that the free energy classes coincide. Given that the initial modeling restriction is the same for both classes we conclude that the resulting modeling restrictions must coincide as well. In other words, \textit{the modeling restriction is independent of the choice of order parameters}. 
\begin{theorem}[Equivalence of modeling restrictions]\label{thm: mod restrictions}
  The modeling restrictions \eqref{eq: second law 4} and \eqref{eq: restriction alt} are equivalent.
\end{theorem}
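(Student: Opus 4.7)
The plan is to exploit the bijection \eqref{eq: relation c phi 2} between $\{\phi_\mA\}$ and $\{c_\mA\}$, together with the identification \eqref{eq: identify free energy} of the free energies, to show that each term in \eqref{eq: second law 4} matches the corresponding term in \eqref{eq: restriction alt} under an appropriate identification of the Lagrange multipliers $\lambda$ and $\check{\lambda}$. Since the starting point of both derivations is the same energy-dissipation postulate \eqref{eq: energy dissipation} applied to the same control volume and the same mixture balance laws, the strategy is not to re-derive either restriction, but rather to transform one into the other by a change of variables.

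First, I would apply the chain rule to the identification $\hat{\Psi}(\{\phi\}, \{\nabla\phi\}) = \check{\Psi}(\{c\}, \{\nabla c\})$, using the Jacobian $\partial c_\mB / \partial \phi_\mA = \rho_\mB\rho^{-1}\delta_{\mA\mB} - \rho_\mA c_\mB/\rho$ obtained from \eqref{eq: relation c phi}. This yields a linear expression of $\hat{\mu}_\mA$ in the $\check{\mu}_\mB$ (and a matching relation for the gradient-energy tensors $\sum_\mA \nabla\phi_\mA \otimes \partial\hat{\Psi}/\partial\nabla\phi_\mA$ versus $\sum_\mA \nabla c_\mA \otimes \partial\check{\Psi}/\partial\nabla c_\mA$). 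A key observation, used throughout, is that $\sum_\mB \partial c_\mB/\partial \phi_\mA = 0$, which is the infinitesimal form of \eqref{eq: sum c phi}.

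Second, I would reconcile the two Lagrange multipliers. The combinations appearing inside the peculiar-velocity and mass-transfer contributions read $g_\mA = \rho_\mA^{-1}(\hat{\mu}_\mA + \lambda)$ in \eqref{eq: second law 4} and $g_\mA = \rho^{-1}\check{\mu}_\mA + \rho_\mA^{-1}\check{\lambda}$ in \eqref{eq: restriction alt}. I would substitute the chain-rule expression for $\hat{\mu}_\mA$ into the first and use the saturation invariance (shifts $\check{\mu}_\mB \mapsto \check{\mu}_\mB + \text{const}$ leave $\hat{\mu}_\mA$ unchanged, because $\sum_\mB \partial c_\mB/\partial\phi_\mA=0$) to identify a single scalar freedom that is pinned down by an explicit linear relation between $\lambda$ and $\check{\lambda}$. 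The $g_\mA$ then coincide. For the isotropic contributions against $\nabla \bv$, the identity \eqref{eq: lambda mu} reorganizes $\lambda\mathbf{I} + (\sum_\mA \hat{\mu}_\mA \phi_\mA - \hat{\Psi})\mathbf{I}$ into $\check{\lambda}\mathbf{I} - \check{\Psi}\mathbf{I}$ under the same identification.

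The main obstacle is the bookkeeping in step one: the Jacobian is non-diagonal and depends on the entire state $\{\phi_\mA\}$, so the mapping $\check{\mu}_\mB \mapsto \hat{\mu}_\mA$ is not simply multiplicative. What rescues the argument is precisely that the non-diagonal contribution lives in the one-dimensional direction annihilated by the saturation constraint, and can therefore be absorbed into the Lagrange multiplier via \eqref{eq: class Psi Lambda}-\eqref{eq: chem pot 1}. Making this cancellation explicit, in the spirit of the discussion postponed to \cref{appendix: sec: Equivalence of modeling restrictions}, is the crux of the proof.
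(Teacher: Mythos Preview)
Your proposal is correct and follows essentially the variable-transformation route that the paper itself develops in \cref{appendix: sec: Equivalence of modeling restrictions}: chain rule for the chemical potentials via the Jacobian $\partial c_\mB/\partial\phi_\mA$, equality of the Korteweg tensors, and the identification $\check{\lambda}=\lambda+\sum_\mB\hat{\mu}_\mB\phi_\mB$ so that $\rho^{-1}\check{\mu}_\mA+\rho_\mA^{-1}\check{\lambda}=g_\mA$. Two small points worth sharpening: (i) the paper's in-text argument for \cref{thm: mod restrictions} is actually the shorter high-level one you allude to in your first paragraph---same energy-dissipation postulate, same free energy via \eqref{eq: identify free energy}, hence same restriction---with the explicit transformation relegated to the appendix as an alternative; (ii) your phrase ``bijection \eqref{eq: relation c phi 2}'' is slightly loose, since the paper shows (\cref{appendix; lem transformation maps}) that the full Jacobians are singular; the correspondence is only well-defined modulo the saturation constraint, which is precisely the freedom you correctly absorb into the Lagrange multiplier.
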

An alternative path to show equivalence of the modeling restrictions, one could apply the variable transformation \eqref{eq: relation c phi 2} defined in \eqref{eq: relation c phi} to show that \eqref{eq: restriction alt} coincides with \eqref{eq: second law 4}. 
%The challenge of this approach lies in the combination of working with $N$ order parameters that satisfy the saturation constraints \eqref{eq: sum c phi}. 
We discuss this approach in \cref{appendix: sec: Equivalence of modeling restrictions}.

Guided by \cref{thm: mod restrictions}, we proceed with the formulation of the modeling restriction presented in \eqref{eq: second law 4}.

\subsection{Selection of constitutive models}\label{sec: const mod: subsec: select}

By means of the Colemann-Noll concept, we utilize \eqref{eq: second law 4} as a guiding principle to design constitutive models. Inspired by the specific form of the constraint \eqref{eq: second law 4}, we restrict ourselves to mixture stress tensors $\mathbf{T}$, constituent peculiar velocities $\bJ_\mA$, and constituent mass transfer terms $\bj_\mA, \zeta_\mA$ that belong to the constitutive classes:
\begin{subequations}\label{eq: class T J gamma}
\begin{align}
    \mathbf{T} =&~ \hat{\mathbf{T}}\left(\nabla \bv, \left\{\phi_\mA\right\}, \left\{\nabla \phi_\mA\right\}, \left\{g_{\mA}\right\}, \left\{\nabla g_{\mA}\right\}\right),\label{eq: class T}\\
    \bJ_\mA =&~ \hat{\bJ}_\mA\left(\left\{\phi_\mA\right\}, \left\{\nabla g_\mA\right\}\right),\label{eq: class J}\\
    \bj_\mA =&~ \hat{\bj}_\mA\left(\left\{\phi_\mA\right\}, \left\{\nabla g_\mA\right\}\right),\label{eq: class j}\\
    \zeta_\mA =&~ \hat{\zeta}_\mA\left(\left\{\phi_\mA\right\}, \left\{g_\mA\right\}\right),\label{eq: class zeta}
\end{align}
\end{subequations}
and define $\hat{\bH}_\mA = \hat{\bJ}_\mA + \hat{\bj}_\mA$. Generally speaking, the introduction of the class \eqref{eq: class J} deviates from continuum mixture theory. Arguably, a natural approximation is simply taking $\hat{\bJ}_\mA = 0$, which, for instance, models the situation of matching velocities $\bv_\mA=\bv_\mB$. We return to this case in \cref{sec: Matching velocities}.

We do not seek the most complete constitutive theory, rather our goal is to find a set of practical constitutive models compatible with \eqref{eq: second law 4}. To this end, we aim to identify constitutive models \eqref{eq: class T J gamma} so that all three terms in \eqref{eq: second law 4} are positive, which occurs when:
\begin{subequations}\label{eq: restrictions}
    \begin{align}
     \left(\mathbf{T} +\sum_{\mA} \nabla \phi_\mA \otimes \dfrac{\partial \hat{\Psi}}{\partial \nabla \phi_\mA} +\left(\sum_{\mA} \hat{\mu}_{\mA,\lambda}\phi_\mA -  \hat{\Psi}\right)\mathbf{I}\right):\nabla \mathbf{v}&\geq 0, \label{eq: restrictions T}\\
     -\sum_{\mA}\nabla g_\mA\cdot \bJ_\mA &\geq 0, \label{eq: restrictions J}\\
     -\sum_{\mA}\nabla g_\mA\cdot \bj_\mA &\geq 0, \label{eq: restrictions j}\\
-\sum_{\mA} g_\mA\zeta_\mA &\geq 0. \label{eq: restrictions gamma}
\end{align}
\end{subequations}

\begin{remark}[Onsager reciprocal relations]\label[remark]{remark: onsager}
  As mentioned above, our objective is to find a set of practical constitutive models. A more complete theory follows from working with the original constraint \eqref{eq: second law 4}, and extending the dependency of the classes \eqref{eq: class T J gamma}. In particular, the classes may be interconnected. The well-known Onsager reciprocal relations take a central place in this framework. We refer to \cite{onsager1931reciprocalI,onsager1931reciprocalII}.
\end{remark}

In the following we provide constitutive models for the mixture stress tensor, constituent peculiar velocities, and constituent mass transfer, respectively.\\

\noindent \textit{Mixture stress tensor}. We select the following constitutive model for the stress tensor:
\begin{align}\label{eq: stress tensor choice}
    \hat{\mathbf{T}} = -\sum_{\mA} \nabla \phi_\mA \otimes \dfrac{\partial \hat{\Psi}}{\partial \nabla \phi_\mA}-\left(\sum_{\mA} \hat{\mu}_{\mA,\lambda}\phi_\mA -  \hat{\Psi}\right)\mathbf{I} + \nu (2\nabla^s \bv+\bar{\lambda}({\rm div}\bv) \mathbf{I}),
\end{align}
subject to the symmetry condition:
\begin{align}\label{eq: T symm}
   \nabla \phi_\mA \otimes \dfrac{\partial \hat{\Psi}}{\partial \nabla \phi_\mA} =  \dfrac{\partial \hat{\Psi}}{\partial \nabla \phi_\mA} \otimes \nabla \phi_\mA,
  \end{align}
where the scalar field $\nu\geq 0$ is the mixture dynamic viscosity, $\bar{\lambda} \geq -2/d$ is a scalar, and $d$ is the number of dimensions. {\color{red}Possible choices for the mixture viscosity include $\nu = \sum_\mA \nu_\mA \phi_\mA$ and $\nu = \sum_\mA \nu_\mA c_\mA$, where $\nu_\mA$ are constituent viscosities.} The condition \eqref{eq: T symm} ensures compatibility with the angular momentum constraint \eqref{eq: ang mom mix}.

\begin{lemma}[Compatibility mixture stress tensor]\label[lemma]{lem: compatibility stress tensor}
The mixture stress tensor \eqref{eq: stress tensor choice} adheres to the constraint \eqref{eq: restrictions T}.
\end{lemma}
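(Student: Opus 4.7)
The plan is to substitute the constitutive choice \eqref{eq: stress tensor choice} directly into the left-hand side of \eqref{eq: restrictions T}. By construction, the Korteweg-type term $-\sum_\mA \nabla\phi_\mA\otimes \partial\hat{\Psi}/\partial\nabla\phi_\mA$ and the pressure-like term $-(\sum_\mA \hat{\mu}_{\mA,\lambda}\phi_\mA-\hat{\Psi})\mathbf{I}$ in $\hat{\mathbf{T}}$ are designed to cancel the corresponding contributions that appear in \eqref{eq: restrictions T}. After this cancellation, only the viscous part $\nu(2\nabla^s\bv+\bar{\lambda}({\rm div}\bv)\mathbf{I})$ survives, so the inequality reduces to the statement
\begin{align*}
\nu\bigl(2\nabla^s\bv+\bar{\lambda}({\rm div}\bv)\mathbf{I}\bigr):\nabla\bv \;\geq\; 0.
\end{align*}

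Next, I would simplify the right-hand contraction. Since $\nabla^s\bv$ is symmetric, $\nabla^s\bv:\nabla\bv = \nabla^s\bv:\nabla^s\bv = \lVert\nabla^s\bv\rVert^2$, and $\mathbf{I}:\nabla\bv={\rm tr}(\nabla\bv)={\rm div}\,\bv$. Thus the quantity to be signed is
\begin{align*}
\nu\bigl(2\lVert\nabla^s\bv\rVert^2+\bar{\lambda}({\rm div}\,\bv)^2\bigr).
\end{align*}
With $\nu\geq 0$, non-negativity reduces to the bracketed expression being non-negative under the condition $\bar{\lambda}\geq -2/d$.

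The only non-routine step is handling $\bar{\lambda}<0$: here I would invoke the trace inequality $({\rm tr}\,A)^2\leq d\,(A:A)$ valid for every symmetric $d\times d$ tensor $A$ (a direct Cauchy-Schwarz consequence). Applied to $A=\nabla^s\bv$, this yields $({\rm div}\,\bv)^2\leq d\,\lVert\nabla^s\bv\rVert^2$, whence
\begin{align*}
2\lVert\nabla^s\bv\rVert^2+\bar{\lambda}({\rm div}\,\bv)^2 \;\geq\; \bigl(2+d\bar{\lambda}\bigr)\lVert\nabla^s\bv\rVert^2_{-} \;\geq\; 0,
\end{align*}
where the last inequality uses $\bar{\lambda}\geq -2/d$ exactly in the regime where $\bar{\lambda}$ is negative (for $\bar{\lambda}\geq 0$ the inequality is immediate without the trace bound). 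This confirms \eqref{eq: restrictions T} and completes the argument. There is no real obstacle: the cancellation is by construction, and the Stokes-type bound on $\bar{\lambda}$ is precisely the sharp threshold that makes the viscous dissipation pointwise non-negative.
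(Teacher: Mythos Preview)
Your argument is correct and follows the same overall strategy as the paper: substitute \eqref{eq: stress tensor choice}, let the Korteweg and pressure terms cancel by construction, and then verify that the remaining viscous contribution $\nu\bigl(2\lVert\nabla^s\bv\rVert^2+\bar{\lambda}({\rm div}\bv)^2\bigr)$ is pointwise non-negative under $\nu\geq 0$ and $\bar{\lambda}\geq -2/d$.

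The only difference is in how that last non-negativity is exhibited. The paper rewrites the viscous dissipation via the deviatoric/trace split,
\[
2\nu\left\lVert\nabla^s\bv-\tfrac{1}{d}({\rm div}\bv)\mathbf{I}\right\rVert^2+\nu\left(\bar{\lambda}+\tfrac{2}{d}\right)({\rm div}\bv)^2,
\]
which is manifestly a sum of two non-negative terms without any case distinction. Your route instead splits on the sign of $\bar{\lambda}$ and invokes the trace inequality $({\rm tr}\,A)^2\leq d\,(A{:}A)$ in the negative case. These are equivalent (the deviatoric identity is precisely what proves that trace inequality), but the paper's formulation is slightly cleaner since it avoids the case split and shows directly why $-2/d$ is the sharp threshold. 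As a minor cosmetic point, the stray subscript in $\lVert\nabla^s\bv\rVert^2_{-}$ should be removed.
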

\begin{proof}
  An elementary calculation gives:
  \begin{align}\label{eq: second law viscosity}
  \left(\hat{\mathbf{T}}  +\sum_{\mA} \nabla \phi_\mA \otimes \dfrac{\partial \hat{\Psi}}{\partial \nabla \phi_\mA} +\left(\sum_{\mA} \hat{\mu}_{\mA,\lambda}\phi_\mA -  \hat{\Psi}\right)\mathbf{I}\right):\nabla \mathbf{v} &= \nn \\   2 \nu \left( \nabla^s \bv - \frac{1}{d} ({\rm div} \mathbf{v}) \mathbf{I}\right):\left(\nabla^s \bv - \frac{1}{d} ({\rm div} \mathbf{v}) \mathbf{I}\right) + \nu \left(\bar{\lambda} + \frac{2}{d}\right)\left({\rm div} \mathbf{v}\right)^2 &\geq 0.
  \end{align}
\end{proof}

\noindent \textit{Constituent peculiar velocities}. We choose the peculiar velocities of the form:
\begin{align}\label{eq: model J}
    \hat{\bJ}_\mA = -\sum_\mB \mathbf{M}_{\mA\mB} \nabla g_\mB,
\end{align}
with mobility tensor $\mathbf{M}_{\mA\mB}=\mathbf{M}_{\mB\mA}$. The mobility tensor is positive definite ($\by_\mA^T \mathbf{M}_{\mA\mB}\by_\mB \geq 0$ for all $\by_\mA \in \mathbb{R}^d, \mA = 1,...,N$), has the same dependencies as \eqref{eq: restrictions J}, is compatible with $\sum_\mA \mathbf{M}_{\mA\mB} = \sum_\mA \mathbf{M}_{\mB\mA} = 0$ for all $\mB=1,..,N$, and vanishes in the single fluid region $\mathbf{M}_{\mA\mB}|_{\phi_\mC=1}=0, \mC =1,...,N$ (thus is degenerate). We note that the symmetry requirement follows from the Onsager reciprocal relations, the positive definiteness from \eqref{eq: restrictions J}, and the zero sum of rows and columns from \eqref{eq: BL conditions: bJ}. {\color{red}A possible choice for the mobility tensor is $\mathbf{M}_{\mA\mB}=-\mathbf{M}_0 \trho_\mA \trho_\mB$ for $\mA \neq \mB$, and $\mathbf{M}_{\mA\mA} = \mathbf{M}_{0}\trho_\mA \sum_{\mC\neq \mA} \trho_\mC$ for some $\mathbf{M}_0$ that is not dependent on the constituent number.}

%Remark that the balance condition \eqref{eq: rel gross motion zero} allows us to convert the constraint \eqref{eq: restrictions J} into:
%\begin{align}
%-\sum_{\mA}\nabla g_\mA\cdot \bJ_\mA = -\sum_{\mA}\nabla (g_\mA-g_\mB)\cdot \bJ_\mA &\geq 0, 
%\end{align}
%for some $\beta \in \left\{1,...,N\right\}$. Based on this formulation, we choose the following constituent peculiar velocities:
%\begin{align}\label{eq: model J 2}
%    \hat{\bJ}_\mA = -\sum_\mB \mathbf{M}_{\mA\mB} \nabla \left(g_\mA -g_\mB\right),
%\end{align}
%for some constitutive quantity $\mathbf{M}_{\mA\mB}=\mathbf{M}_{\mB\mA}$ that is a positive semi-definite tensor for given $\mA$ and $\mB$, with the same dependencies as \eqref{eq: restrictions J}. We refer to it as a \textit{mobility tensor}.

\begin{remark}[Lagrange multiplier in constituent peculiar velocities]
 In most incompressible $N$-phase models, the Lagrange multiplier $\lambda$ does not explicitly appear in the constituent peculiar velocities $\hat{\bJ}_\mA$, whereas in the proposed framework, it appears as a component of $g_\mA$. Notably, when all constituent densities are identical ($\rho_\mA = \rho$ for $\mA=1,...,N$), the Lagrange multiplier vanishes, yielding the relation: $g_\mA - g_\mB = \rho^{-1}(\hat{\mu}_\mA-\hat{\mu}_\mB)$.
  \end{remark}

\begin{lemma}[Compatibility constituent peculiar velocities]\label[lemma]{lem: compatibility peculiar velocity}
The choice \eqref{eq: model J} aligns with both the balance \eqref{eq: rel gross motion zero}, and the restriction \eqref{eq: restrictions J}.
\end{lemma}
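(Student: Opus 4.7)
The plan is to verify the two required properties separately, exploiting the three structural assumptions on the mobility tensor $\mathbf{M}_{\mA\mB}$: symmetry, positive definiteness, and the zero row/column sum condition $\sum_\mA \mathbf{M}_{\mA\mB} = 0$.

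First I would address the balance constraint \eqref{eq: rel gross motion zero: J}, i.e., $\sum_\mA \hat{\bJ}_\mA = 0$. Substituting the ansatz \eqref{eq: model J} into this sum and exchanging the order of summation yields
\begin{align*}
\sum_\mA \hat{\bJ}_\mA = -\sum_\mA \sum_\mB \mathbf{M}_{\mA\mB}\nabla g_\mB = -\sum_\mB \Bigl(\sum_\mA \mathbf{M}_{\mA\mB}\Bigr)\nabla g_\mB,
\end{align*}
which vanishes thanks to the column-sum hypothesis $\sum_\mA \mathbf{M}_{\mA\mB} = 0$. This is the easy part and is essentially a one-line manipulation.

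Next I would verify the dissipation inequality \eqref{eq: restrictions J}. Inserting \eqref{eq: model J} into $-\sum_\mA \nabla g_\mA \cdot \hat{\bJ}_\mA$ gives the double sum
\begin{align*}
-\sum_\mA \nabla g_\mA \cdot \hat{\bJ}_\mA = \sum_\mA \sum_\mB \nabla g_\mA \cdot \mathbf{M}_{\mA\mB} \nabla g_\mB,
\end{align*}
which is nothing but the quadratic form associated with $\mathbf{M}_{\mA\mB}$ evaluated on the vector $(\nabla g_1, \dots, \nabla g_N)$. By the positive definiteness assumption $\by_\mA^T \mathbf{M}_{\mA\mB} \by_\mB \geq 0$ for all $\by_\mA \in \mathbb{R}^d$, this expression is non-negative, yielding \eqref{eq: restrictions J}.

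Neither step presents a serious obstacle; the proof is a direct verification once the algebraic structure is unpacked. The only subtlety worth flagging is that the degeneracy condition $\mathbf{M}_{\mA\mB}\vert_{\phi_\mC = 1} = 0$ and the symmetry $\mathbf{M}_{\mA\mB} = \mathbf{M}_{\mB\mA}$ are not actually needed for this particular lemma — they are motivated separately (by physical consistency in pure-phase regions and by Onsager reciprocity, respectively). Only the column-sum property and positive semidefiniteness are invoked, so the proof can be presented in just two short displays.
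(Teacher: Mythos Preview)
Your argument is correct and is exactly the direct verification one would expect; the paper itself states this lemma without proof, relying on the structural assumptions on $\mathbf{M}_{\mA\mB}$ (zero column sum for the balance, positive semidefiniteness for the dissipation) precisely as you invoke them. Your observation that symmetry and the degeneracy condition are not needed here is also accurate.
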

%\begin{proof}
%The symmetry property $\mathbf{M}_{\mA\mB}=\mathbf{M}_{\mB\mA}$ ensures conformity with \eqref{eq: rel gross motion zero}, and adherence to \eqref{eq: restrictions J} follows from the sequence of identities:
%\begin{align}
%-\sum_\mA \hat{\bJ}_\mA\cdot \nabla g_\mA =&~ \sum_{\mA,\mB} \left(\mathbf{M}_{\mA\mB}\nabla \left(g_\mA -g_\mB\right)\right)\cdot \nabla g_\mA \nn\\
%=&~\tfrac{1}{2} \sum_{\mA,\mB} \left(\nabla \left(g_\mA -g_\mB\right)\right)^T \mathbf{M}_{\mA\mB} \nabla \left(g_\mA -g_\mB\right) \nn \\
%\geq &~ 0.\\ \nn
%\end{align}
%\end{proof}

\noindent \textit{Constituent diffusive flux}. 
Analogously to the constituent peculiar velocities, we select:
\begin{align}\label{eq: model j}
    \hat{\bj}_\mA = -\sum_\mB \mathbf{K}_{\mA\mB} \nabla g_\mB,
\end{align}
for some positive definite constitutive tensor $\mathbf{K}_{\mA\mB}=\mathbf{K}_{\mB\mA}$ compatible with $\sum_\mA \mathbf{K}_{\mA\mB}=\sum_\mA \mathbf{K}_{\mB\mA}=0$, with the same dependencies as \eqref{eq: class j}, and vanishes in the single fluid region $\mathbf{K}_{\mA\mB}|_{\phi_\mC=1}=0, \mC =1,...,N$. {\color{red}Similarly as for the peculiar velocity, a possible choice for the mobility tensor is $\mathbf{K}_{\mA\mB}=-\mathbf{K}_0 \trho_\mA \trho_\mB$ for $\mA \neq \mB$, and $\mathbf{K}_{\mA\mA} = \mathbf{K}_{0}\trho_\mA \sum_{\mC\neq \mA} \trho_\mC$ for some $\mathbf{K}_0$ that is not dependent on the constituent number.}
\begin{lemma}[Compatibility constituent diffusive fluxes]\label[lemma]{lem: compatibility diffusive flux}
The choice \eqref{eq: model j} aligns with both the balance \eqref{eq: BL conditions 2: bj}, and the restriction \eqref{eq: restrictions j}.
\end{lemma}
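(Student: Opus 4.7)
The plan is to verify the two claimed properties separately, in direct analogy with the argument sketched for \cref{lem: compatibility peculiar velocity}, since the structural assumptions on $\mathbf{K}_{\mA\mB}$ mirror those on $\mathbf{M}_{\mA\mB}$ exactly. The lemma has two parts: compatibility with the balance condition \eqref{eq: BL conditions 2: bj}, and compatibility with the dissipation inequality \eqref{eq: restrictions j}. Both follow from substituting the constitutive ansatz \eqref{eq: model j} into the respective requirement and exploiting the two key structural properties assumed on $\mathbf{K}_{\mA\mB}$: the row/column-sum condition $\sum_\mA \mathbf{K}_{\mA\mB}=\sum_\mA \mathbf{K}_{\mB\mA}=0$ and positive definiteness.

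First I would verify the balance condition. Summing the constitutive law \eqref{eq: model j} over $\mA$ and interchanging the order of summation gives
\begin{align*}
\sum_\mA \hat{\bj}_\mA = -\sum_\mA \sum_\mB \mathbf{K}_{\mA\mB}\nabla g_\mB = -\sum_\mB \Bigl(\sum_\mA \mathbf{K}_{\mA\mB}\Bigr)\nabla g_\mB = 0,
\end{align*}
where the assumed column-sum condition on the mobility tensor makes the inner sum vanish identically, independent of the chemical potentials.

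Next I would verify the dissipation restriction. Substituting \eqref{eq: model j} into the left-hand side of \eqref{eq: restrictions j} yields
\begin{align*}
-\sum_\mA \nabla g_\mA \cdot \hat{\bj}_\mA = \sum_{\mA,\mB}\nabla g_\mA \cdot \mathbf{K}_{\mA\mB}\nabla g_\mB,
\end{align*}
which is nonnegative by the assumed positive definiteness of $\mathbf{K}_{\mA\mB}$ applied with $\by_\mA=\nabla g_\mA$.

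I do not expect any real obstacle here: the constitutive form \eqref{eq: model j} was designed precisely so that the two required properties reduce to the postulated algebraic conditions on $\mathbf{K}_{\mA\mB}$. The only point worth flagging is that the symmetry $\mathbf{K}_{\mA\mB}=\mathbf{K}_{\mB\mA}$ is not strictly needed for either part (it is motivated separately by Onsager reciprocity, see \cref{remark: onsager}); the balance follows from the sum condition alone, and the inequality follows from positive definiteness alone. Hence the proof reduces to the two short computations above.
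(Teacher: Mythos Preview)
Your proposal is correct and is exactly the natural verification. The paper in fact omits a proof for this lemma (as it does for \cref{lem: compatibility peculiar velocity} and \cref{lem: compatibility gamma}), treating it as a routine check; your two short computations are precisely what is intended.
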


\noindent \textit{Constituent mass transfer}. We select the constituent mass transfer terms analogously to the constituent peculiar velocities:
\begin{align}\label{eq: const model mass flux}
  \hat{\zeta}_\mA = -\sum_\mB m_{\mA\mB} g_\mB,
\end{align}
where the positive definite scalar mobility $m_{\mA\mB}=m_{\mB\mA}$ has the same dependencies as \eqref{eq: class zeta}, is compatible with $\sum_\mA m_{\mA\mB} = \sum_\mA m_{\mB\mA} = 0$, vanishes in the single fluid region $m_{\mA\mB}|_{\phi_\mC=1}=0, \mC =1,...,N$. %{\color{red}A possible choice for the mobility tensor is $m_{\mA\mB}=-m_0 \trho_\mA \trho_\mB$ for $\mA \neq \mB$, and $m_{\mA\mA} = m_{0}\trho_\mA \sum_{\mC\neq \mA} \trho_\mC$ for some $m_0$ that does not depend on $\trho_\mA$, $\mA=1,...,N$.}

%We have:
%\begin{align}\label{eq: const model mass flux}
%  \hat{\gamma}_\mA = -\sum_\mB \hat{m}_{\mA\mB}g_\mB +\sum_\mB m_{\mA\mB} g_\mB = -\sum_\mB f_{\mA\mB}g_\mB,
%\end{align}
%with $\hat{m}_{\mA\mB} g_\mB = m_{\mA\mB}g_\mA$ and $f_{\mA\mB} = \hat{m}_{\mA\mB}-m_{\mA\mB}$
\begin{lemma}[Compatibility mass transfer]\label[lemma]{lem: compatibility gamma}
The choice \eqref{eq: const model mass flux} is compatible with the balance of mass supply \eqref{eq: balance momentum fluxes}, and the constraint \eqref{eq: restrictions gamma}.
\end{lemma}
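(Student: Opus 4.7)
The plan is to verify the two compatibility claims by direct substitution, exploiting the stated structural properties of the scalar mobility matrix $m_{\mA\mB}$, namely symmetry, positive semidefiniteness, and vanishing row/column sums $\sum_\mA m_{\mA\mB} = \sum_\mB m_{\mA\mB} = 0$.

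First, I would address the balance condition. Substituting the ansatz \eqref{eq: const model mass flux} yields
\begin{align*}
  \sum_\mA \hat{\zeta}_\mA = -\sum_\mA \sum_\mB m_{\mA\mB} g_\mB = -\sum_\mB g_\mB \Bigl(\sum_\mA m_{\mA\mB}\Bigr) = 0,
\end{align*}
where the swap of summation order is justified by finiteness of $N$, and the last equality uses the zero column-sum property. This confirms the mass-supply balance \eqref{eq: BL conditions: gamma} (which is the relevant condition for $\zeta_\mA$, consistent with \eqref{eq: balance mass fluxes}).

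Next, I would verify the dissipative restriction \eqref{eq: restrictions gamma}. Substituting \eqref{eq: const model mass flux} gives
\begin{align*}
  -\sum_\mA g_\mA \hat{\zeta}_\mA = \sum_\mA g_\mA \sum_\mB m_{\mA\mB} g_\mB = \sum_{\mA,\mB} m_{\mA\mB}\, g_\mA g_\mB \geq 0,
\end{align*}
which is nonnegative by the postulated positive semidefiniteness of the matrix $(m_{\mA\mB})$. Symmetry $m_{\mA\mB}=m_{\mB\mA}$ is not strictly required for nonnegativity of the quadratic form, but it is consistent with the Onsager reciprocal relations invoked in \cref{remark: onsager}.

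No real obstacle is anticipated: both steps are one-line computations once the structural assumptions on $m_{\mA\mB}$ are in force. The only subtlety worth flagging in the write-up is that the degeneracy condition $m_{\mA\mB}|_{\phi_\mC=1}=0$ plays no role in establishing the lemma itself; it is a modeling choice that ensures inactivity of non-conservative mass transfer in single-fluid regions, and enters only at the stage of closure selection rather than in the proof of compatibility.
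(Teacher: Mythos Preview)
Your proposal is correct and matches the natural approach implied by the paper, which in fact omits an explicit proof for this lemma (as well as for the analogous \cref{lem: compatibility peculiar velocity} and \cref{lem: compatibility diffusive flux}), treating the verification as routine. Your observation that the reference to \eqref{eq: balance momentum fluxes} is evidently a typo for the mass-supply balance \eqref{eq: BL conditions: gamma} is also apt.
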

%\begin{proof}
%Analogously to the proof of \cref{lem: compatibility peculiar velocity},  compatibility with \eqref{eq: balance mass fluxes} is immediate from the symmetry property: $m_{\mA\mB}=m_{\mB\mA}$, and consistency with \eqref{eq: restrictions gamma} follows from:
%\begin{align}
%-\sum_\mA  \hat{\zeta}_\mA g_\mA = \sum_{\mA,\mB} m_{\mA\mB}(g_\mA-g_\mB)g_\mA = \tfrac{1}{2} \sum_{\mA,\mB} m_{\mA\mB}(g_\mA-g_\mB)^2 \geq  0.\\ \nn
%\end{align}
%\end{proof}
\begin{remark}[Related constitutive models]\label[remark]{rmk: Related constitutive models}
In the case:
  \begin{align}
      \mathbf{M}_{\mA\mB} = \begin{cases}
          -\hat{\mathbf{M}}_{\mA\mB} & \text{ if }\mA \neq \mB,\\
          \sum_{\mC\neq \mA} \hat{\mathbf{M}}_{\mA\mC} & \text{ if }\mA = \mB,
      \end{cases}
  \end{align}
  for some symmetric $\hat{\mathbf{M}}_{\mA\mB}$, we find:
  \begin{align}\label{eq: alternative form J}
      \hat{\bJ}_\mA =&~ - \sum_{\mB\neq \mA} \mathbf{M}_{\mA\mB}\nabla \hat{\mu}_\mB - \mathbf{M}_{\mA\mA}\nabla \hat{\mu}_\mA\nn\\
      =&~  \sum_{\mB\neq \mA} \hat{\mathbf{M}}_{\mA\mB}\nabla \hat{\mu}_\mB - \sum_{\mC\neq\mA}\hat{\mathbf{M}}_{\mA\mC}\nabla \hat{\mu}_\mA\nn\\
      =&~ -\sum_{\mB} \hat{\mathbf{M}}_{\mA\mB}\nabla (\hat{\mu}_\mA-\hat{\mu}_\mB).
  \end{align}
  This model matches (for the isotropic case $\mathbf{M}_{\mA\mB}=M_{\mA\mB}\mathbf{I}$) that of \cite{eikelder2023thermodynamically}. It also closely resembles the form adopted in \cite{li2014class}. Both closure models involve the Lagrange multiplier $\lambda$; a difference lies in fact that the model proposed by \cite{li2014class} depends on the numbering of the constituents. Finally, we note that forms similar to \eqref{eq: alternative form J} may be adopted for the diffusive fluxes and the mass transfer terms.
\end{remark}

This finalizes the construction of constitutive models compatible with the imposed energy-dissipative postulate. Substitution of the models \eqref{eq: stress tensor choice}, \eqref{eq: model J}, and \eqref{eq: const model mass flux} yields the class of incompressible $N$-phase models: 
\begin{subequations}\label{eq: model full 0}
  \begin{align}
   \partial_t (\rho \bv) + {\rm div} \left( \rho \bv\otimes \bv \right) + \nabla \lambda & \nn\\
   + {\rm div} \left(\left(\sum_\mA \hat{\mu}_\mA\phi_\mA -\hat{\Psi}\right)\mathbf{I} + \sum_\mA \nabla \phi_\mA \otimes \dfrac{\partial \hat{\Psi}}{\partial \nabla \phi_\mA} \right)& \nn\\
    - {\rm div} \left(   \nu (2\nabla^s \bv+\bar{\lambda}({\rm div}\bv) \mathbf{I}) \right)-\rho\mathbf{g} &=~ 0, \label{eq: model full 0: mom}\\
  \partial_t \phi_\mA  + {\rm div}(\phi_\mA  \bv) + \rho_\mA^{-1}{\rm div}\hat{\mathbf{J}}_\mA + \rho_\mA^{-1}{\rm div}\hat{\mathbf{j}}_\mA - \hat{\zeta}_\mA &=~0,\label{eq: model full 0: mass}\\
  \hat{\mu}_\mA - \dfrac{\partial \hat{\Psi}}{\partial \phi_\mA} + {\rm div} \left(  \dfrac{\partial \hat{\Psi}}{\partial \nabla \phi_\mA} \right)&=~0,\\
  \hat{\bJ}_\mA + \sum_\mB \mathbf{M}_{\mA\mB} \nabla g_\mB&~=0,\\
  \hat{\bj}_\mA + \sum_\mB \mathbf{K}_{\mA\mB}\nabla g_\mB &~=0,\\
  \hat{\zeta}_\mA + \sum_\mB m_{\mA\mB} g_\mB &~=0.
  \end{align}
\end{subequations}
Formulation \eqref{eq: model full 0} constitutes a class of models in the sense that particular closure relations ($\mathbf{M}_{\mA\mB}$, $\mathbf{K}_{\mA\mB}$ and $m_{\mA\mB}$) need to be specified. Given these relations, for each specification of the free energy $\hat{\Psi}$, \eqref{eq: model full 0} is a well-defined closed model; this model is invariant to the renumbering of the constituents, invariant to the set of independent variables, and reduction-consistent. Additionally, it exhibits energy-dissipation, which we explicitly state in the following theorem.
\begin{theorem}[Compatibility energy dissipation]\label{eq: compatibility energy dissipation}
The model \eqref{eq: model full 0} is compatible with the energy dissipation condition \eqref{eq: energy dissipation}.
\end{theorem}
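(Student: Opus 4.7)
The plan is to leverage the fact that the entire derivation in \cref{sec: const mod: subsec: model restr} is an equivalence: the energy-dissipation postulate \eqref{eq: energy dissipation} holds for an arbitrary control volume $\mathcal{R}(t)$ if and only if the local inequality \eqref{eq: second law 4} is satisfied pointwise, with $\mathscr{W}$ and $\mathscr{D}$ identified in \eqref{eq: W, D}. Thus it suffices to verify that the constitutive choices made in \cref{sec: const mod: subsec: select} render the integrand of $\mathscr{D}$ non-negative at every point.

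First, I would rewrite the dissipation \eqref{eq: def diffusion} as the sum of four scalar contributions: a stress-strain term, a term $-\sum_\mA \nabla g_\mA \cdot \hat{\bJ}_\mA$, a term $-\sum_\mA \nabla g_\mA \cdot \hat{\bj}_\mA$, and a term $-\sum_\mA g_\mA \hat{\zeta}_\mA$. This is the decomposition underlying the sufficient conditions \eqref{eq: restrictions T}--\eqref{eq: restrictions gamma}. By construction, each of these four terms is handled by a dedicated compatibility lemma already established in the paper.

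Second, I would invoke the four lemmas in sequence. \Cref{lem: compatibility stress tensor} shows that the stress choice \eqref{eq: stress tensor choice} makes \eqref{eq: restrictions T} hold, with the explicit positive form \eqref{eq: second law viscosity} arising from the deviatoric and volumetric parts of $\nabla \bv$ under the assumptions $\nu \geq 0$ and $\bar\lambda \geq -2/d$. \Cref{lem: compatibility peculiar velocity} and \Cref{lem: compatibility diffusive flux} ensure that the constitutive forms \eqref{eq: model J} and \eqref{eq: model j} yield the quadratic forms $\sum_{\mA,\mB} \nabla g_\mA \cdot \mathbf{M}_{\mA\mB} \nabla g_\mB$ and $\sum_{\mA,\mB} \nabla g_\mA \cdot \mathbf{K}_{\mA\mB} \nabla g_\mB$, both non-negative thanks to the positive definiteness of $\mathbf{M}_{\mA\mB}$ and $\mathbf{K}_{\mA\mB}$. \Cref{lem: compatibility gamma} performs the analogous step for \eqref{eq: const model mass flux} using positive definiteness of the scalar mobility $m_{\mA\mB}$.

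Finally, summing these four inequalities pointwise and integrating over $\mathcal{R}(t)$ gives $\mathscr{D} \geq 0$, which, together with the identification $\mathrm{d}\mathscr{E}/\mathrm{d}t = \mathscr{W} - \mathscr{D}$ established in \eqref{eq: second law subst 2}, yields the claim. I do not anticipate a genuine obstacle: the hardest conceptual step is confirming that the Lagrange multiplier machinery has correctly replaced $\hat\mu_\mA$ by $\hat\mu_{\mA,\lambda}$ (so that the chemical potentials entering the dissipation are the $g_\mA$'s), and that the balance conditions \eqref{eq: BL conditions} are consistent with the row/column sum properties of $\mathbf{M}_{\mA\mB}$, $\mathbf{K}_{\mA\mB}$, $m_{\mA\mB}$; both facts are already embedded in the preceding lemmas, so the theorem follows as a direct corollary of their composition.
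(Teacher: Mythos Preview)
Your proposal is correct and follows essentially the same approach as the paper: the theorem is obtained as a direct corollary of the compatibility lemmas for the stress tensor, peculiar velocities, diffusive fluxes, and mass transfer, and the paper likewise records the resulting explicit non-negative dissipation with the combined mobility $\mathbf{B}_{\mA\mB}=\mathbf{M}_{\mA\mB}+\mathbf{K}_{\mA\mB}$. If anything, you are slightly more complete, since the paper's proof cites only three of the four lemmas explicitly while still absorbing the diffusive-flux contribution into $\mathbf{B}_{\mA\mB}$.
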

\begin{proof}
  This follows from \cref{lem: compatibility stress tensor}, \cref{lem: compatibility peculiar velocity}, and \cref{lem: compatibility gamma}. In particular, the dissipation takes the form:
\begin{align}\label{eq: second law red 7}
    \mathscr{D} = &~ \displaystyle\int_{\mathcal{R}(t)} 2 \nu \left( \nabla^s \bv - \frac{1}{d} ({\rm div} \mathbf{v}) \mathbf{I}\right):\left(\nabla^s \bv - \frac{1}{d} ({\rm div} \mathbf{v}) \mathbf{I}\right)+ \nu \left(\lambda + \frac{2}{d}\right)\left({\rm div} \mathbf{v}\right)^2 \nn\\
    &~\quad\quad+ \displaystyle\sum_{\mA,\mB} \left(\nabla g_\mA\right)^T \mathbf{B}_{\mA\mB} \nabla g_\mB+ \sum_{\mA,\mB} m_{\mA\mB}g_\mA g_\mB~{\rm d}v \geq 0,%\\ \nn
\end{align}
with $\mathbf{B}_{\mA\mB}=\mathbf{M}_{\mA\mB}+\mathbf{K}_{\mA\mB}$.
\end{proof}

\section{Model characteristics}\label{sec: Properties}

In this section, we explore the characteristics of the modeling framework outlined in \cref{sec: 2nd law}. To this end, we discuss alternative -- equivalent -- formulations in \cref{sec: Properties: alt}. We present the case of matching velocities in \cref{sec: Matching velocities}. Subsequently, \cref{sec: Properties: equi} details the equilibrium characteristics.

\subsection{Alternative formulations}\label{sec: Properties: alt}

As discussed in \cref{sec: mix theory,sec: 2nd law}, the unified modeling framework outlined in these sections is invariant to the choice of variables. However, it is worthwhile to discuss some of the formulations that are associated with particular variables.

First, we note that one can identify a pressure quantity in the model as:
\begin{align}\label{eq: def pressure}
  p: = \sum_\mA \hat{\mu}_\mA\phi_\mA -\hat{\Psi}.
\end{align}
With this choice, the model takes the more compact form:
\begin{subequations}\label{eq: model full 2}
  \begin{align}
   \partial_t (\rho \bv) + {\rm div} \left( \rho \bv\otimes \bv \right) + \nabla (\lambda + p) + {\rm div} \left( \sum_\mB \nabla \phi_\mB \otimes \dfrac{\partial \hat{\Psi}}{\partial \nabla \phi_\mB} \right)& \nn\\
    - {\rm div} \left(   \nu (2\nabla^s \bv+\bar{\lambda}({\rm div}\bv) \mathbf{I}) \right)-\rho\mathbf{g} &=~ 0, \label{eq: model full 2: mom}\\
  \partial_t \phi_\mA  + {\rm div}(\phi_\mA  \bv) + \rho_\mA^{-1}{\rm div}\hat{\bH}_\mA - \hat{\zeta}_\mA &=~0.\label{eq: model full 2: mass}
  \end{align}
\end{subequations}

In accordance with the first metaphysical principle of continuum mixture theory, the mixture free energy is comprised of constituent free energies:
\begin{align}\label{eq: split free energy}
    \hat{\Psi} = \sum_\mA \hat{\Psi}_\mA,
\end{align}
where $\hat{\Psi}_\mA$ are the volume-measure constituent free energies. Utilizing \eqref{eq: split free energy} we observe that the pressure satisfies Dalton's law:
\begin{subequations}
    \begin{align}
        p     = &~ \sum_\mA p_\mA, \\
        p_\mA = &~ \hat{\mu}_\mA \phi_\mA - \hat{\Psi}_\mA,
    \end{align}
\end{subequations}
where $p_\mA$ is the partial pressure of constituent $\mA$. Thus, the split \eqref{eq: split free energy} reveals that the system may be written as:
\begin{subequations}\label{eq: model pressure}
  \begin{align}
   \sum_\mB \left( \partial_t (\trho_\mB \bv) + {\rm div} \left( \trho_\mB \bv\otimes \bv \right) + \phi_\mB \nabla \lambda + \nabla p_\mB -\trho_\mB \mathbf{g} 
   \right. &\nn\\
   \left.+~ {\rm div} \left( \sum_\mC \nabla \phi_\mC \otimes \dfrac{\partial \hat{\Psi}_\mB}{\partial \nabla \phi_\mC} \right)\right) 
   - {\rm div} \left(   \nu (2\nabla^s \bv+\lambda({\rm div}\bv) \mathbf{I}) \right) &=~ 0, \label{eq: model  simplified: mom}\\
  \partial_t \phi_\mA  + {\rm div}(\phi_\mA  \bv) + \rho_\mA^{-1}{\rm div}\hat{\bH}_\mA - \hat{\zeta}_\mA &=~0.\label{eq: eq: model simplified: mass}
  \end{align}
\end{subequations}

An alternative compact form is obtained with the aid of the following lemma.
\begin{lemma}[Identity free energy]\label[lemma]{lem: identity Korteweg stresses}
The free energy contributions collapse into:
  \begin{align}
       {\rm div} \left(\left(\sum_\mA \hat{\mu}_\mA\phi_\mA -\hat{\Psi}\right)\mathbf{I} + \sum_\mA \nabla \phi_\mA \otimes \dfrac{\partial \hat{\Psi}}{\partial \nabla \phi_\mA} \right) =  \sum_\mA\phi_\mA\nabla\hat{\mu}_\mA.
  \end{align}
\end{lemma}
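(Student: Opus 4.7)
The plan is to compute the divergence on the left-hand side term by term and verify that, after invoking the chain rule on $\hat{\Psi}$ and the definition \eqref{eq: chem pot} of $\hat{\mu}_\alpha$, all contributions cancel except for $\sum_\alpha \phi_\alpha \nabla \hat{\mu}_\alpha$. The computation is essentially the standard Cahn--Hilliard ``Korteweg stress'' identity, extended to $N$ phases via linearity, so I expect no conceptual obstacle; the main care is bookkeeping of the product-rule expansions.

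First I would handle the isotropic part. Since the divergence of $f\mathbf{I}$ equals $\nabla f$, the product rule yields
\begin{align*}
\nabla\!\left(\sum_\alpha \hat{\mu}_\alpha \phi_\alpha - \hat{\Psi}\right)
= \sum_\alpha \phi_\alpha \nabla \hat{\mu}_\alpha + \sum_\alpha \hat{\mu}_\alpha \nabla \phi_\alpha - \nabla \hat{\Psi}.
\end{align*}
The term $\sum_\alpha \phi_\alpha \nabla \hat{\mu}_\alpha$ is the desired right-hand side, so it remains to show that the other four contributions (including those coming from the outer-product tensor) cancel.

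Next, I would expand the divergence of the outer product by the identity $\mathrm{div}(\mathbf{a}\otimes \mathbf{b}) = \mathbf{a}\,\mathrm{div}\,\mathbf{b} + (\mathbf{b}\cdot\nabla)\mathbf{a}$, giving
\begin{align*}
\mathrm{div}\!\left(\sum_\alpha \nabla \phi_\alpha \otimes \frac{\partial \hat{\Psi}}{\partial \nabla \phi_\alpha}\right)
= \sum_\alpha \nabla \phi_\alpha \,\mathrm{div}\frac{\partial \hat{\Psi}}{\partial \nabla \phi_\alpha}
+ \sum_\alpha \left(\frac{\partial \hat{\Psi}}{\partial \nabla \phi_\alpha}\cdot \nabla\right)\!\nabla \phi_\alpha.
\end{align*}
Now I would substitute $\hat{\mu}_\alpha = \partial \hat{\Psi}/\partial \phi_\alpha - \mathrm{div}(\partial \hat{\Psi}/\partial \nabla \phi_\alpha)$ into $\sum_\alpha \hat{\mu}_\alpha \nabla \phi_\alpha$ and apply the chain rule to $\nabla \hat{\Psi}$, which (since $\hat{\Psi}$ belongs to the class \eqref{eq: class Psi}) gives
\begin{align*}
\nabla \hat{\Psi} = \sum_\alpha \frac{\partial \hat{\Psi}}{\partial \phi_\alpha}\,\nabla \phi_\alpha + \sum_\alpha \left(\frac{\partial \hat{\Psi}}{\partial \nabla \phi_\alpha}\cdot \nabla\right)\!\nabla \phi_\alpha.
\end{align*}

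Finally I would collect all contributions. The $\sum_\alpha (\partial \hat{\Psi}/\partial \phi_\alpha)\nabla \phi_\alpha$ appearing from $\sum_\alpha \hat{\mu}_\alpha \nabla \phi_\alpha$ cancels against the corresponding term in $\nabla \hat{\Psi}$; the Laplacian-type term $\sum_\alpha \nabla \phi_\alpha\, \mathrm{div}(\partial \hat{\Psi}/\partial \nabla \phi_\alpha)$ from $\sum_\alpha \hat{\mu}_\alpha \nabla \phi_\alpha$ cancels against the first piece of the outer-product divergence; and the Hessian-type term $\sum_\alpha ((\partial \hat{\Psi}/\partial \nabla \phi_\alpha)\cdot \nabla)\nabla \phi_\alpha$ from $\nabla \hat{\Psi}$ cancels against the second piece of the outer-product divergence. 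The sole surviving term is $\sum_\alpha \phi_\alpha \nabla \hat{\mu}_\alpha$, which is the claim. Note that neither the saturation constraint \eqref{eq: sum phi} nor the symmetry condition \eqref{eq: T symm} is needed for the identity, since the cancellations are pointwise algebraic consequences of the chain rule and the definition of $\hat{\mu}_\alpha$; the symmetry issue only becomes relevant when the stress tensor must satisfy \eqref{eq: ang mom mix}.
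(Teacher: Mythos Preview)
Your proof is correct and follows essentially the same route as the paper's own argument (given in \cref{lem: free energy identity}): both expand the divergence by the product rule, substitute the definition of $\hat{\mu}_\alpha$, apply the chain rule to $\nabla\hat{\Psi}$, and observe that the $\partial\hat{\Psi}/\partial\phi_\alpha$, the $\mathrm{div}(\partial\hat{\Psi}/\partial\nabla\phi_\alpha)$, and the Hessian contributions cancel pairwise. Your additional remark that neither \eqref{eq: sum phi} nor \eqref{eq: T symm} is used is accurate and not made explicit in the paper.
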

\begin{proof}
See \cref{lem: free energy identity}.
\end{proof}
Invoking \cref{lem: identity Korteweg stresses}, model \eqref{eq: model full 0} takes a more compact form:
\begin{subequations}\label{eq: model full}
  \begin{align}
   \partial_t (\rho \bv) + {\rm div} \left( \rho \bv\otimes \bv \right) + \nabla \lambda + \sum_\mA\phi_\mA\nabla\hat{\mu}_\mA &\nn\\ - {\rm div} \left(   \nu (2\nabla^s \bv+\bar{\lambda}({\rm div}\bv) \mathbf{I}) \right)-\rho\mathbf{g} &=~ 0, \label{eq: model full: mom}\\
  \partial_t \phi_\mA  + {\rm div}(\phi_\mA  \bv) + \rho_\mA^{-1}{\rm div}\hat{\bH}_\mA - \hat{\zeta}_\mA &=~0.\label{eq: model full: mass}
  \end{align}
\end{subequations}
Considering the third and fourth term in the momentum equation in isolation, these can be written as:
  \begin{align}\label{eq: identity lambda mu}
    \nabla \lambda + \sum_\mA\phi_\mA\nabla\hat{\mu}_\mA = 
    \sum_\mA \phi_\mA \nabla (\lambda +  \hat{\mu}_\mA) = \sum_\mA \trho_\mA \nabla g_\mA. 
  \end{align}
Similarly, in the mass balance \eqref{eq: model full: mass}, we observe that the chemical potentials $\hat{\mu}_\mA$ and the Lagrange multiplier $\lambda$ appear solely as a sum via $g_\mA$.

Additionally, we note that the model can alternatively be written in a form that more closely links to existing phase-field models:
\begin{subequations}\label{eq: model alt}
  \begin{align}
   \partial_t (\rho \bv) + {\rm div} \left( \rho \bv\otimes \bv \right) + \nabla \lambda +  \sum_{\mB}  \phi_\mB\nabla\hat{\mu}_\mB 
  &\nn\\  - {\rm div} \left(   \nu (2\nabla^s \bv+\bar{\lambda}({\rm div}\bv) \mathbf{I}) \right)-\rho\mathbf{g} &=~ 0, \label{eq: model alt: mom}\\
  \partial_t \rho + {\rm div}(\rho \bv) &=~0,\label{eq: eq: model alt: mass full}\\
  \partial_t \phi_\mA  + {\rm div}(\phi_\mA  \bv) + \rho_\mA^{-1}{\rm div}\hat{\mathbf{H}}_\mA - \rho_\mA^{-1}\hat{\zeta}_\mA &=~0,\label{eq: eq: model alt: mass}%\\
  %\hat{\mu}_\mA - \dfrac{\partial \Psi_\mA}{\partial (\rho c_\mA)} + {\rm div} \left(  \dfrac{\partial \Psi_\mA}{\partial \nabla (\rho c_\mA)} \right)&=~0,
  \end{align}
\end{subequations}
for $\mA = 1,...,N-1$. We note that the combination of the mixture mass balance \eqref{eq: eq: model alt: mass full} and the $N-1$ constituent balance laws \eqref{eq: eq: model alt: mass} are equivalent to the $N$ balance laws \eqref{eq: eq: model simplified: mass} or $N$ balance laws \eqref{eq: eq: model alt: mass}.

While we refrain from discussing formulations that adopt concentration variables, we discuss a formulation in terms of the volume-averaged velocity $\bu$. Inserting the constitutive model for the peculiar velocities \eqref{eq: model J} into \cref{lem: mass av vs vol av} we obtain:
\begin{align}
    \bv = \bu + \rho^{-1}\displaystyle\sum_\mB \hat{\bJ}_\mB^u.
\end{align}
By substituting this identity, we express the model using the volume-averaged velocity:
\begin{subequations}\label{eq: model simplified}
  \begin{align}
   \partial_t \left(\rho\bu + \sum_\mB \hat{\bJ}_\mB^u\right) + {\rm div} \left( \rho \bu\otimes \bu + \sum_\mB \hat{\bJ}_\mB^u \otimes \bu  \right.& \nn\\
              + \left.  \bu\otimes \sum_\mB \hat{\bJ}_\mB^u  + \rho^{-1} \sum_\mB \hat{\bJ}_\mB^u\otimes  \sum_\mB \hat{\bJ}_\mB^u\right) 
   + \nabla \lambda + \sum_{\mB} \phi_\mB\nabla\hat{\mu}_\mB & \nn\\
    - {\rm div} \left(   \nu \left(2\nabla^s\left(\bu +\rho^{-1} \sum_\mB \hat{\bJ}_\mB^u \right)+\bar{\lambda}{\rm div}\left(\bu +\rho^{-1} \sum_\mB \hat{\bJ}_\mB^u \right) \mathbf{I}\right) \right)&=~ 0, \label{eq: model  simplified: mom 2}\\
 \partial_t \phi_\mA  + {\rm div}\left(\phi_\mA \bu\right)+ {\rm div}\hat{\bh}_\mA^u + \rho_\mA^{-1}{\rm div}\hat{\bj}_\mA  -\rho_\mA^{-1}\hat{\zeta}_\mA &=~0.\label{eq: eq: model simplified: mass 2}
  \end{align}
\end{subequations}
Recalling \eqref{eq: mix mass laws: 4}, we have:
\begin{align}
    {\rm div} \bu = \sum_{\mA} \rho_\mA^{-1}(\hat{\zeta}_\mA - {\rm div} \hat{\mathbf{j}}_\mA),
\end{align}
where the right-hand side vanishes when densities match or mass transfer is absent.
Arguably, the formulation \eqref{eq: model simplified} is rather involved. We discuss a simplification in the next subsection.

\subsection{Matching velocities}\label{sec: Matching velocities}

We consider the case in which the peculiar velocities are zero; taking $\mathbf{M}_{\mA\mB} = 0$, we find:
\begin{align}
    \hat{\bJ}_\mA =\hat{\bh}_\mA = \hat{\bJ}_\mA^u =    \hat{\bh}_\mA^u = 0.
\end{align}
As a consequence the mass-averaged
and volume-averaged velocities are equal:
\begin{align}
    \bv = \bu.
\end{align}
This choice models the situation where the constituent velocities are matching. We explicitly state the simplified formulations of the model:
\begin{subequations}\label{eq: model v is u}
  \begin{align}
   \partial_t (\rho \bv) + {\rm div} \left( \rho \bv\otimes \bv \right) + \nabla \lambda + \sum_\mB\phi_\mB\nabla\hat{\mu}_\mB &\nn\\ - {\rm div} \left(   \nu (2\nabla^s \bv+\bar{\lambda}({\rm div}\bv) \mathbf{I}) \right)-\rho\mathbf{g} &=~ 0, \label{eq: model v is u: mom}\\
  \partial_t \phi_\mA  + {\rm div}(\phi_\mA  \bv) + \rho_\mA^{-1} {\rm div}\bj_\mA  - \rho_\mA^{-1}\hat{\zeta}_\mA &=~0.\label{eq: model v is u: mass}
  \end{align}
\end{subequations}
and obviously:
\begin{subequations}\label{eq: model v is u 2}
  \begin{align}
   \partial_t (\rho \bu) + {\rm div} \left( \rho \bu\otimes \bu \right) + \nabla \lambda + \sum_\mB\phi_\mB\nabla\hat{\mu}_\mB &\nn\\- {\rm div} \left(   \nu (2\nabla^s \bu+\bar{\lambda}({\rm div}\bu) \mathbf{I}) \right)-\rho\mathbf{g} &=~ 0, \label{eq: model v is u 2: mom}\\
  \partial_t \phi_\mA  + {\rm div}(\phi_\mA  \bu) + \rho_\mA^{-1} {\rm div}\bj_\mA  - \rho_\mA^{-1}\hat{\zeta}_\mA &=~0.\label{eq: model v is u 2: mass}
  \end{align}
\end{subequations}
The formulation \eqref{eq: model v is u 2} demonstrates that a simplified -- consistent -- model in terms of the volume-averaged velocity involves a straightforward momentum equation. We emphasize that the volume-averaged velocity $\bu$ is in general not divergence-free (recall \eqref{eq: mix mass laws: 4}).

\subsection{Equilibrium conditions}\label{sec: Properties: equi}
We utilize formulation \eqref{eq: model full} to study equilibrium properties. We characterize the set equilibrium solutions $\left\{\mathbf{q}_E=(\bv_E,\phi_{\mA,E},\lambda_E,\mu_{\mA,E})\right\}$ of  \eqref{eq: model full} as stationary solutions subject to boundary conditions for which the dissipation vanishes: $\mathscr{D}(\mathbf{q}_E) = 0$.
Invoking \eqref{eq: second law red 7} yields the conditions:
\begin{subequations}\label{eq: equilibrium cond}
\begin{align}
  \left( \nabla^s \bv_E - \frac{1}{d} ({\rm div} \bv_E) \mathbf{I}\right):\left(\nabla^s \bv_E - \frac{1}{d} ({\rm div} \bv_E) \mathbf{I}\right)=&~0,\label{eq: equi 1}\\
  \left(\lambda_E + \frac{2}{d}\right)\left({\rm div} \bv_E\right)^2 =&~0,\label{eq: equi 2}\\
    \displaystyle\sum_{\mA,\mB} \left(\nabla g_{\mA,E}\right)^T \mathbf{B}_{\mA\mB,E} \nabla g_{\mB,E} =&~0,\label{eq: equi 3}\\
     \sum_{\mA,\mB} m_{\mA\mB,E}g_{\mA,E}g_{\mB,E} =&~0, \label{eq: equi 4}
\end{align}
\end{subequations}
in $\Omega$, where $\mathbf{B}_{\mA\mB}=\mathbf{M}_{\mA\mB} + \mathbf{K}_{\mA\mB}$, and where the subscript $E$ denotes the equilibrium configuration of the quantity. We deduce from \eqref{eq: equi 1}-\eqref{eq: equi 2} that $\bv_E$ are rigid body motions. Simplifying the analysis, we take $\bv_E = 0$, which causes the inertia terms to vanish. Additionally, we assume the absence of gravitational forces ($\mathbf{g} = 0$). Substituting into the momentum equation \eqref{eq: model full: mom} provides:
\begin{align}
  \nabla \lambda_E + \sum_\mA\phi_{\mA,E}\nabla \hat{\mu}_{\mA,E} = 0.
\end{align}
Recalling \eqref{eq: identity lambda mu} we deduce:
\begin{align}\label{eq: constraint 1}
  \sum_\mA\trho_{\mA,E}\nabla g_{\mA,E} = 0.
\end{align} 
Next, from \eqref{eq: equi 3} we deduce that $\nabla g_{E}$ lies in the null space of $\mathbf{B}_{E}$ in the sense $\sum_\mB \mathbf{B}_{\mA\mB,E}\nabla g_{\mB,E} = 0$ ($\mA = 1,...,N$), and hence $\hat{\bJ}_{\mA,E} + \hat{\bj}_{\mA,E} = 0$ in equilibrium.
In the special case $\mathbf{B}_{\mA\mB}=-\mathbf{B}_0 \trho_\mA \trho_\mB$ for $\mA \neq \mB$, and $\mathbf{B}_{\mA\mA} = \mathbf{B}_{0}\trho_\mA \sum_{\mC\neq \mA} \trho_\mC$ for some $\mathbf{B}_0$ that does not depend on $\trho_\mA$, $\mA=1,...,N$, this coincides with \eqref{eq: constraint 1}.
Similarly, \eqref{eq: equi 4} provides $\sum_\mB m_{\mA\mB,E} g_{\mB,E} = 0$, and hence $\hat{\zeta}_{\mA,E}=0$ ($\mA = 1,...,N$).
%Next, we distinguish the case that includes mass transfer, and the case that does not ($m_{\mA\mB,E}=0$). Starting with the first case, we recall that $\mathbf{M}_{\mA\mB,E}$ is degenerate. Consider the non-trivial situation $\mathbf{M}_{\mA\mB,E} \neq \mathbf{0}$, \eqref{eq: equi 3} implies that
%\begin{align}\label{eq: constraint 2}
%g_{\mA,E} = g_{\mB,E} + c_{\mA,\mB},
%\end{align}
%in the interfacial region ($-1<\phi_{\mA,E} <1$). Here $c_{\mA,\mB}$ are constants that satisfy \\$\sum_{\mA=1,...,N-1}c_{\mA,\mA+1} + c_{N,1}=0$. As a consequence, $\hat{\bJ}_{\mA,E} = 0$ in equilibrium. The constraints \eqref{eq: constraint 1} and \eqref{eq: constraint 2} form $N$ equilibrium conditions for $N$ variables. In case with mass transfer we find $c_{\mA,\mB}=0$, i.e.
%\begin{align}\label{eq: constraint 3}
%g_{\mA,E} = g_{\mB,E},
%\end{align}
%and hence $\hat{\bJ}_{\mA,E} = 0$, and $\hat{\zeta}_{\mA,E} = 0$.

\section{Connections to existing models}\label{sec: Connections}

This section provides connections with existing models. First, we discuss the binary-phase situation in \cref{sec: Binary constituent case}. Next, in \cref{sec: Dong} we compare the framework with the model of \cite{dong2018multiphase}. Finally, \cref{sec: Class-II mixture model} discusses the link to a model with $N$-momentum equations.

\subsection{Binary-phase case}\label{sec: Binary constituent case}
In this section we restrict to binary mixtures ($\mA = 1,2)$, and compare with the framework presented in \cite{eikelder2023unified}. A formulation of this two-phase modeling framework is:
\begin{subequations}\label{eq: model binary 2}
  \begin{align}
   \partial_t (\rho \bv) + {\rm div} \left( \rho \bv\otimes \bv \right) + \nabla \lambda + \phi \nabla\breve{\mu}
   - {\rm div} \left(   \nu (2\nabla^s \bv+\bar{\lambda}({\rm div}\bv) \mathbf{I}) \right)-\rho\mathbf{g} &=~ 0, \label{eq: model binary 2: mom}\\
   \partial_t \rho + {\rm div}(\rho \bv) &=~0.\\
  \partial_t \phi  + {\rm div}(\phi  \bv) - {\rm div} \left( \breve{\mathbf{M}} \nabla \left(\breve{\mu}+\omega \lambda\right)\right) +\breve{m} \left(\breve{\mu}+\omega \breve{\lambda}\right) &=~0. \label{eq: model binary 2: mass 1}
  \end{align}
\end{subequations}
Here $\phi$ is the phase-field quantity defined as the difference between volume fractions:
\begin{align}\label{eq: def phi}
    \phi = \phi_1 - \phi_2,
\end{align}
where we recall $\phi_1+\phi_2=1$. The chemical potential quantity is defined as:
\begin{align}\label{eq: def mu}
    \breve{\mu} = \dfrac{\partial \Psi}{\partial \phi} - {\rm div}\dfrac{\partial \Psi}{\partial \nabla \phi}.
\end{align}
Finally, the quantity $\omega$ is  $\omega = (\rho_1^{-1}-\rho_2^{-1})/(\rho_1^{-1}+\rho_2^{-1})$. On the other hand, the model \eqref{eq: model full} takes for binary mixtures the following form:
\begin{subequations}\label{eq: model binary}
  \begin{align}
   \partial_t (\rho \bv) + {\rm div} \left( \rho \bv\otimes \bv \right) + \nabla \lambda + \phi_1\nabla\hat{\mu}_1 + \phi_2\nabla\hat{\mu}_2 &\nn\\
   - {\rm div} \left(   \nu (2\nabla^s \bv+\bar{\lambda}({\rm div}\bv) \mathbf{I}) \right)-\rho\mathbf{g} &=~ 0, \label{eq: model binary: mom}\\
  \partial_t \phi_1  + {\rm div}(\phi_1  \bv) - \rho_1^{-1}{\rm div} \left( \mathbf{M} \nabla \left(g_1 -g_2\right)\right) +\rho_1^{-1}m \left(g_1 -g_2\right) &=~0, \label{eq: model binary: mass 1}\\
    \partial_t \phi_2  + {\rm div}(\phi_2  \bv) - \rho_2^{-1}{\rm div} \left( \mathbf{M} \nabla \left(g_2 -g_1\right)\right) +\rho_2^{-1} m \left(g_2 -g_1\right) &=~0,\label{eq: model binary: mass 2}
  \end{align}
\end{subequations}
where $\mathbf{M} = \mathbf{M}_{12}=\mathbf{M}_{21}$ and $m=m_{12}=m_{21}$. By means of variable transformation, we aim to express the model in terms of the quantities of model \eqref{eq: model binary 2}. 
The mass balance equations \eqref{eq: model binary: mass 1} and \eqref{eq: model binary: mass 2} can be written as:
\begin{subequations}\label{eq: model binary mass}
  \begin{align}
  \partial_t \phi  + {\rm div}(\phi  \bv) - {\rm div} \left( (\rho_1^{-1}+\rho_2^{-1})\mathbf{M} \nabla \left(g_1 -g_2\right)\right) %&\nn\\
  + (\rho_1^{-1}+\rho_2^{-1})m \left(g_1 -g_2\right) &=~0, \label{eq: model binary mass: mass 1}\\
    \partial_t \rho  + {\rm div}(\rho  \bv) &=~0.\label{eq: model binary mass: mass 2}
  \end{align}
\end{subequations}
With the aim of comparing the two models, we select the relations $\breve{\mathbf{M}}=(\rho_1^{-1}+\rho_2^{-1})^2\mathbf{M}$ and $\breve{m}=(\rho_1^{-1}+\rho_2^{-1})^2m$, which converts \eqref{eq: model binary mass: mass 1} into:
  \begin{align}
  \partial_t \phi  + {\rm div}(\phi  \bv) - {\rm div} \left( \breve{\mathbf{M}} \nabla \left(\grave{\mu} +\omega \lambda \right)\right) +\breve{m} \left(\grave{\mu}+\omega \lambda \right) &=~0, \label{eq: model binary mass: mass 3}
  \end{align}
with $\grave{\mu} = (\rho_1^{-1}+\rho_2^{-1})^{-1}(\rho_1^{-1}\mu_1-\rho_2^{-1}\mu_2)$. As a consequence, in case the following identities hold:
\begin{subequations}
    \begin{align}
        \phi_1 \nabla \hat{\mu}_1 + \phi_2 \nabla \hat{\mu}_2 =&~ \phi \nabla \breve{\mu},\\
         \grave{\mu} =&~\breve{\mu},
    \end{align}
\end{subequations}
we find that \eqref{eq: model binary} coincides with \eqref{eq: model binary 2}. This is in general not the case, i.e. \textit{in general the two models do not match}\footnote{An $N$-phase theory that reduces to existing two-phase models emerges when working with $N-1$ order parameters $\phi_\mA$, rather than the current case of $N$ order parameters $\phi_\mA$}. There are however specific situations in which the models coincide, for instance when $\mu_1 + \mu_2 = 0$ and $\breve{\mu} = \mu_1 = -\mu_2$. These conditions are inspired by the chain rule for chemical potentials, where $\phi=\phi(\phi_1,\phi_2)=\phi_1-\phi_2$ so that $\partial \phi/\partial \phi_1 = 1$, $\partial \phi/\partial \phi_2 = -1$ of \eqref{eq: def phi}. %This shows that a particular formulation of the binary constituent case coincides with the unified modeling framework proposed in \cite{eikelder2023unified}.

\subsection{$N$-phase model Dong (2018)}\label{sec: Dong}
The $N$-phase incompressible model proposed by \cite{dong2018multiphase} is given by
\begin{subequations}\label{eq: model Dong}
    \begin{align}
        \rho \left(\partial_t \mathbf{u} + \mathbf{u}\cdot \nabla \mathbf{u} \right) + \bJ'\cdot \nabla \bu  + \nabla \lambda' - {\rm div}\left( \nu' \nabla^s \bu \right) &\nn\\
        + \sum_\mB {\rm div}\left(\nabla \phi_\mA \otimes \frac{\partial \Psi'}{\partial (\nabla \phi_\mB)}\right) =&~0,\\
        {\rm div}\bu =&~0,\\
        \partial_t \phi_\mA + \bu\cdot \nabla\phi_\mA - \sum_\mB {\rm div}\left( m_{\mA\mB}' \nabla \left( \frac{\partial \Psi'}{\partial \phi_\mB} - {\rm div}\left(\frac{\partial \Psi'}{\partial (\nabla \phi_\mB)}\right)\right) \right) =&~0,
    \end{align}
\end{subequations}
for $\mA = 1,..., N$, where $\lambda'$ is the Lagrange multiplier pressure, $\nu'$ is the dynamic viscosity, $\Psi'$ is the free energy, $\bJ'$ is the peculiar velocity, and $m_{\mA\mB}'$ is the mobility. For the purpose of comparing the model \eqref{eq: model Dong} to the proposed framework, we define the chemical potential:
\begin{align}
    \mu_\mB' = \frac{\partial \Psi'}{\partial \phi_\mB} - {\rm div}\left(\frac{\partial \Psi'}{\partial (\nabla \phi_\mB)}\right).
\end{align}
Invoking \cref{lem: identity Korteweg stresses}, we rewrite the model \eqref{eq: model Dong} as:
\begin{subequations}\label{eq: dong}
    \begin{align}
        \rho \left(\partial_t \mathbf{u} + \mathbf{u}\cdot \nabla \mathbf{u} \right) + \bJ'\cdot \nabla \bu  + \nabla \tilde{\lambda}' + \sum_\mB \phi_\mA \nabla \mu_\mA' - {\rm div}\left( \nu' \nabla^s \bu \right) =&~0,\\
        {\rm div}\bu =&~0,\\
        \partial_t \phi_\mA + \bu\cdot \nabla\phi_\mA - \sum_\mB {\rm div}\left( m_{\mA\mB}' \nabla \mu_\mB' \right) =&~0,
    \end{align}
\end{subequations}
with
\begin{align}
    \tilde{\lambda}' = \lambda' + \Psi' - \sum_\mB \phi_\mA \mu_\mA'. 
\end{align}
This model \eqref{eq: dong} is not compatible with the framework proposed in the current paper. In particular, comparing \eqref{eq: dong} with \eqref{eq: model simplified}, we observe that:
\begin{itemize}
    \item model \eqref{eq: dong} does not contain each of the peculiar velocity terms in the momentum equation; this applies to both inertia and viscous terms;
    \item model \eqref{eq: dong} does not include mass transfer terms;
    \item the constitutive model for the diffusive flux in \eqref{eq: dong} is different; in particular the Lagrange multiplier is absent. As a consequence, the equilibrium conditions are different.
\end{itemize}

\subsection{Class-II mixture model}\label{sec: Class-II mixture model}
We compare the proposed unified modeling framework with an incompressible mixture model presented in \cite{eikelder2023thermodynamically}:
\begin{subequations}\label{eq: model FE intro}
  \begin{align}
 \partial_t \trho_\mA + {\rm div}(\trho_\mA \bv_\mA) + \sum_\mB \breve{m}_{\mA\mB}(\breve{g}_\mA-\breve{g}_\mB) &=~ 0, \label{eq: model FE: cont intro}\\
 \partial_t (\trho_\mA \bv_\mA) + {\rm div} \left( \trho_\mA \bv_\mA\otimes \bv_\mA \right)  + \phi_\mA\nabla \left(\breve{\lambda} + \breve{\mu}_\mA \right)  \nn\\
 - {\rm div}\left(\breve{\nu}_\mA \left(2\nabla^s \bv_\mA + \breve{\bar{\lambda}}_\mA {\rm div}\mathbf{v}_\mA\right)\right) -\trho_\mA\mathbf{b}&\nn\\
 +\displaystyle\sum_{\mB} R_{\mA\mB} (\bv_\mA-\bv_\mB) + \tfrac{1}{2}\sum_\mB \breve{m}_{\mA\mB}(\breve{g}_\mA-\breve{g}_\mB)(\bv_\mA+\bv_\mB) &=~ 0, \label{eq: model FE: mom intro}
  \end{align}
\end{subequations}
for constituents $\mA = 1,...,N$. Here $\bv_\mA$ is the constituent velocity, $\breve{\lambda}$ is a Lagrange multiplier,$\tilde{\nu}_\mA$ the constituent dynamical viscosity, $\breve{\bar{\lambda}}_\mA\geq 2/d$, $\nabla^s \bv_\mA$ the constituent symmetric velocity gradient, and $\breve{m}_{\mA\mB}$ and $R_{\mA\mB}$ are symmetric matrices (for the properties see \cite{eikelder2023thermodynamically}). This model considers the free energy class:
\begin{subequations}\label{eq: free energy split}
    \begin{align}
    \Psi =&~ \sum_\mA \breve{\Psi}_\mA,\\
    \breve{\Psi}_\mA =&~ \breve{\Psi}_\mA\left(\phi_\mA, \nabla \phi_\mA\right). 
\end{align}
\end{subequations}
The associated constituent chemical potentials are defined as:
\begin{align}
    \breve{\mu}_\mA =&~ \dfrac{ \partial \breve{\Psi}_\mA}{\partial \phi_\mA} - {\rm div}\dfrac{\partial \breve{\Psi}_\mA}{\partial\nabla \phi_\mA},
\end{align}
and $\breve{g}_\mA = \rho_\mA^{-1}(\breve{\mu}_\mA + \breve{\lambda})$. 

Inserting the class \eqref{eq: free energy split} into the proposed modeling framework, we find $\breve{\mu}_\mA=\hat{\mu}_\mA$. Additionally, we identify $\breve{\lambda}=\lambda$; consequently $\breve{g}_\mA=g_\mA$. In contrast to the unified modeling framework presented in the current paper, this model is comprised of $N$ mass balance equations, and $N$ momentum balance equations. As such, we compare the $N$ mass balance laws, and the single mixture momentum balance law of the models. Starting with the mass balance laws, \eqref{eq: model FE: cont intro} can be written as:
\begin{subequations}\label{eq: mass balance}
  \begin{align} 
 \partial_t \phi_\mA + {\rm div}(\phi_\mA \bv) + \rho_\mA^{-1}{\rm div}\bJ_\mA+ \rho_\mA^{-1}\breve{\gamma}_\mA &=~ 0,\\
 \breve{\gamma}_\mA - \sum_\mB \breve{m}_{\mA\mB}(\breve{g}_\mA-\breve{g}_\mB)&=~0.
  \end{align}
\end{subequations}
This form is very similar to \eqref{eq: model full: mass}; the key difference is that the peculiar velocity $\bJ_\mA$ governed by a constitutive model $\bJ_\mA= \hat{\bJ}_\mA$ in the current paper, whereas in \eqref{eq: model FE intro} it follows from the constitutive velocities. With the identification $m_{\mA\mB} = - \breve{m}_{\mA\mB}$ for $\mA\neq\mB$ and $m_{\mA\mB} = \sum_{\mC\neq\mA} \breve{m}_{\mA\mC}$ for $\mA = \mB$ (similar to \cref{rmk: Related constitutive models}) the mass transfer terms match (except for the difference $\hat{\bj}_\mA=0$ in \eqref{eq: model full: mass}). Focusing on the momentum balance laws, addition of \eqref{eq: model FE: mom intro} provides:
\begin{align}\label{eq: sum mom}
 \partial_t (\rho \bv) + {\rm div} \left( \rho\bv\otimes \bv  \right)  + \sum \phi_\mA\nabla \left(\breve{\lambda} + \breve{\mu}_\mA \right)&  \nn\\
 - {\rm div}\left(\sum_\mA\breve{\nu}_\mA \left(2\nabla^s \bv + \breve{\bar{\lambda}}_\mA {\rm div}\bv\right)\right) -\rho\mathbf{b}&\nn\\
 - {\rm div}\left(\sum_\mA\breve{\nu}_\mA \left(2\nabla^s \bw_\mA + \breve{\bar{\lambda}}_\mA {\rm div}\bw_\mA\right)- \sum_\mA \trho_\mA \bw_\mA \otimes \bw_\mA\right)&~=0,
  \end{align}
where we have adopted the identities:
\begin{subequations}
  \begin{align}
    \sum_\mA \trho_\mA \bv_\mA \otimes \bv_\mA =&~ \rho \bv \otimes \bv + \sum_\mA \trho_\mA \bw_\mA \otimes \bw_\mA,\\    
    \sum_\mA\breve{\nu}_\mA \left(2\nabla^s \bv_\mA + \breve{\bar{\lambda}}_\mA {\rm div}\bv_\mA\right) =&~ \sum_\mA\breve{\nu}_\mA \left(2\nabla^s \bv + \breve{\bar{\lambda}}_\mA {\rm div}\bv\right) \nn\\
    &~+ \sum_\mA\breve{\nu}_\mA \left(2\nabla^s \bw_\mA + \breve{\bar{\lambda}}_\mA {\rm div}\bw_\mA\right).
  \end{align}
\end{subequations}
With the identifications $\nu = \sum_\mA \breve{\nu}_\mA$ and $\bar{\lambda}=\breve{\bar{\lambda}}_\mA$ the first two lines match the momentum equation \eqref{eq: model full: mom}. The last line in \eqref{eq: sum mom} consists of terms that are absent in \eqref{eq: model full: mom}. This is a direct consequence of energy-dissipation law \eqref{eq: energy dissipation} and the introduction of the model $\bJ_\mA=\hat{\bJ}_\mA$ in \eqref{eq: class J}. In the case of matching constitutive velocities, as described in \cref{sec: Matching velocities}, these terms vanish.

\section{Conclusion and outlook}\label{sec: discussion} 

This paper presents a unified framework for $N$-phase Navier-Stokes Cahn-Hilliard Allen-Cahn mixture models with non-matching densities. The framework finds its roots in continuum mixture theory, which serves as a fundamental guiding principle for designing multi-physics models at large. The unified framework proposes a (phase-field) system of $N$ mass balance laws, and $1$ momentum balance law, that is invariant to the set of fundamental variables, has an energy-dissipative structure, is reduction-consistent, symmetric with respect to the numbering of the phases, and provides well-defined equilibrium solutions. More specifically, we draw the following conclusions:
\begin{itemize}
    \item The form of the balance laws is invariant to the set of fundamental variables; at both the constituent and mixture levels (\cref{sec: const BL,sec: mixture BL}).
    \item The free energy class depends on all volume fractions (and their gradients) (\cref{sec: const mod: subsec: def,sec: const mod: subsec: model restr}); this provides symmetry with respect to the numbering of the constituents.
    \item Chemical potentials are tightly connected to the Lagrange multiplier that enforces volume conservation; these quantities occur only as superposition (\cref{sec: const mod: subsec: model restr}).
    \item The unified framework is invariant to the set of independent variables; both before and after constitutive modeling (\cref{sec: const mod: alt class,sec: const mod: subsec: select}).
    \item Constitutive quantities are such that the resulting model exhibits energy-dissipation (\cref{sec: const mod: subsec: select}).
    \item Consistency with the single-phase equations requires mobility quantities to be degenerate (\cref{sec: const mod: subsec: select}).
    \item Equilibrium solutions are determined by a balance of (generalized) chemical potentials (see \cref{sec: Properties: equi}).
    \item In the binary case, the framework does, in general, not coincide with existing two-phase models (see \cref{sec: Binary constituent case}). Furthermore, the framework is closely connected to a class-II model (see \cref{sec: Class-II mixture model}), and the model of \cite{dong2018multiphase} does not fit into the framework (see \cref{sec: Dong}).
\end{itemize} 

While the proposed unified framework offers insight into the modeling of $N$-phase flows, we do not claim that it is complete. Therefore, we delineate potential future research directions. First, it is important to study the implications of the particular form of the free energy model, such as equilibrium characteristics, and Ostwald ripening phenomena (see e.g. \cite{ten2024ostwald}). To this purpose, we acknowledge the existence of numerous $N$-phase free energy closure models (see e.g. \cite{boyer2014hierarchy}. Second, it is essential to investigate the sharp interface asymptotic behavior (e.g. jump conditions at interfaces) for particular closure models.
The last point concerns the design of (property-preserving) numerical schemes. {\color{red} Details of $N$-phase computations will be presented elsewhere; however, we provide some considerations here. First, a numerical simulation requires specification of the free energy (as mentioned above). It is hereby important to take \eqref{eq: split free energy} into account to ensure applicability to a general number of constituents. A second consideration concerns the choice of fundamental variables. Although the framework remains invariant to the choice of variables (e.g., using a mass-averaged \eqref{eq: intro mass} or volume-averaged velocity \eqref{eq: intro volume}), certain selections may be more advantageous for designing property-preserving numerical methods. Next, although the proposed system is fully symmetric with respect to the set of variables, in the numerical solution there are at least two roads one can pursue; (i) work with $N-1$ volume fractions and compute the $N$-th volume fraction from the others, or (ii) work with the full set of volume fractions and enforce the saturation constraint. In the second case, the saturation constraint could be enforced via \eqref{eq: mix mass laws: 3}, so that the system of equations becomes:
\begin{subequations}
    \begin{align}
   \partial_t (\rho \bv) + {\rm div} \left( \rho \bv\otimes \bv \right) + \sum_\mB \phi_\mB \nabla (\mu_\mB + \lambda)
    %&\nn\\
    - {\rm div} \left(   \nu (2 \nabla^s \bv+\bar{\lambda}({\rm div}\bv) \mathbf{I}) \right)-\rho\mathbf{b} &=~ 0,\\
  \partial_t \phi_\mA  + {\rm div}(\phi_\mA  \bv) +\rho_\mA^{-1}{\rm div} (\hat{\bH}_\mA )  -\rho_\mA^{-1} \hat{\zeta}_\mA&=~0,\\
    {\rm div}\bv + \displaystyle\sum_{\mB}  \rho_\mB^{-1} \nabla \cdot \hat{\bH}_\mB  - \displaystyle\sum_{\mB}\rho_\mB^{-1}\hat{\zeta}_\mB &~= 0,
  \end{align}
\end{subequations}
with $\hat{\bH}_\mA =- \sum_\mB \mathbf{B}_{\mA\mB}\nabla g_\mB$, $\hat{\zeta}_\mA =- \sum_\mB m_{\mA\mB} g_\mB$, where the mass-averaged velocity is adopted. When working with the mass-averaged velocity, the terms $\hat{\bH}_\mA = \hat{\bJ}_\mA + \hat{\bj}_\mA$ may be modeled together rather than determining $\hat{\bJ}_\mA$ and $\hat{\bj}_\mA$ independently. In contrast, within the volume-averaged velocity formulation of the model, these terms serve a distinct role. Taking $\bj_\mA = 0$ and $\zeta_\mA = 0$, $\mA = 1,...,N$ then provides a divergence-free velocity. Finally, the model can accommodate large differences in specific densities between constituents. Ensuring this property in the fully discrete case requires a robust numerical method.
}

\appendix

\section{Reduced free energy class, and proofs}\label{appendix proofs}

We briefly discuss the free energy class with reduced dependency:
\begin{align}\label{eq: class Psi alt}
  \Psi =&~ \doublehat{\Psi}^{(\beta)}\left(\left\{\phi_\mA\right\}_{\mA\neq\mB},\left\{\nabla \phi_\mA\right\}_{\mA\neq\mB}\right),
\end{align}
where the constituent number $\beta \in \left\{1,...,N\right\}$ is fixed, and where both $\left\{\phi_\mA\right\}_{\mA\neq\mB}$ and $\left\{\nabla \phi_\mA\right\}_{\mA\neq\mB}$ consist of independent variables. The class \eqref{eq: class Psi alt} is connected to \eqref{eq: class Psi} via the identification:
\begin{align}\label{eq: class Psi 2}
\Psi=&~\hat{\Psi}\left(\left\{\phi_\mA\right\}_{\mA\neq\mB},1-\sum_{\mA\neq\mB}\phi_\mA,\left\{\nabla \phi_\mA\right\}_{\mA\neq\mB},-\sum_{\mA\neq\mB}\nabla\phi_\mA\right)\nn\\
=&~
\doublehat{\Psi}^{(\beta)}\left(\left\{\phi_\mA\right\}_{\mA\neq\mB},\left\{\nabla \phi_\mA\right\}_{\mA\neq\mB}\right).
\end{align}
The associated chemical potentials take the form:
\begin{align}
    \doublehat{\mu}_\mA^{(\beta)} = \dfrac{ \partial \doublehat{\Psi}^{(\beta)}}{\partial \phi_\mA} - {\rm div}\dfrac{\partial \doublehat{\Psi}^{(\beta)}}{\partial\nabla \phi_\mA}.
\end{align}

\begin{lemma}[Chemical potentials reduced class]\label[lemma]{appendix: lemma chemical potentials reduced class}
The chemical potentials of the reduced class may be expressed as
  \begin{align}
    \doublehat{\mu}_\mA^{(\beta)}  = \hat{\mu}_\mA - \hat{\mu}_\mB.
  \end{align}
\end{lemma}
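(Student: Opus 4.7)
The plan is to prove the identity by a direct chain-rule computation exploiting the explicit relation between the two free energy representations given in \eqref{eq: class Psi 2}. Since the index $\beta$ is fixed and the reduced class \eqref{eq: class Psi alt} has $\{\phi_\alpha\}_{\alpha\neq\beta}$ as independent variables, the statement is understood for $\alpha \neq \beta$, and the calculation proceeds by treating $\phi_\beta$ and $\nabla\phi_\beta$ inside $\hat{\Psi}$ as functions of the remaining variables via $\phi_\beta = 1 - \sum_{\gamma\neq\beta}\phi_\gamma$ and $\nabla\phi_\beta = -\sum_{\gamma\neq\beta}\nabla\phi_\gamma$.

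First I would differentiate the composition in \eqref{eq: class Psi 2} with respect to $\phi_\alpha$ (for $\alpha\neq\beta$) using the chain rule. Since $\partial\phi_\beta/\partial\phi_\alpha = -1$ and $\partial(\nabla\phi_\gamma)/\partial\phi_\alpha = 0$ for all $\gamma$, this yields
\begin{align}
\frac{\partial \doublehat{\Psi}^{(\beta)}}{\partial \phi_\alpha} = \frac{\partial \hat{\Psi}}{\partial \phi_\alpha} - \frac{\partial \hat{\Psi}}{\partial \phi_\beta}.
\end{align}
An analogous differentiation with respect to $\nabla\phi_\alpha$, using $\partial(\nabla\phi_\beta)/\partial(\nabla\phi_\alpha) = -\mathbf{I}$ and $\partial\phi_\gamma/\partial(\nabla\phi_\alpha)=0$, gives
\begin{align}
\frac{\partial \doublehat{\Psi}^{(\beta)}}{\partial \nabla\phi_\alpha} = \frac{\partial \hat{\Psi}}{\partial \nabla\phi_\alpha} - \frac{\partial \hat{\Psi}}{\partial \nabla\phi_\beta}.
\end{align}

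Taking the divergence of the second identity and subtracting it from the first, by linearity of the divergence operator, yields
\begin{align}
\doublehat{\mu}_\alpha^{(\beta)} = \left(\frac{\partial \hat{\Psi}}{\partial \phi_\alpha} - {\rm div}\frac{\partial \hat{\Psi}}{\partial \nabla\phi_\alpha}\right) - \left(\frac{\partial \hat{\Psi}}{\partial \phi_\beta} - {\rm div}\frac{\partial \hat{\Psi}}{\partial \nabla\phi_\beta}\right) = \hat{\mu}_\alpha - \hat{\mu}_\beta,
\end{align}
which is the desired identity. There is no genuine obstacle here; the only point requiring mild care is to be consistent about what is held fixed when differentiating each representation, so that the partial derivatives in the reduced class $\doublehat{\Psi}^{(\beta)}$ (where $\phi_\beta$ does not appear) are not conflated with the partial derivatives of $\hat{\Psi}$ (where all $\phi_\gamma$ are formally independent arguments before the saturation constraint is imposed). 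Once this bookkeeping is made explicit, the result follows immediately, and it also clarifies that the family $\{\hat{\mu}_\alpha\}$ is only defined up to an additive common function (the Lagrange multiplier $\lambda$), consistent with the discussion surrounding \eqref{eq: lambda mu}.
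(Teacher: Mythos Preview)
Your proof is correct and follows essentially the same approach as the paper: both compute the partial derivatives of $\doublehat{\Psi}^{(\beta)}$ via the chain rule using $\partial\phi_\mB/\partial\phi_\mA=-1$ (and the analogous gradient relation), obtain the two identities \eqref{eq: der expressions 0}, and then invoke linearity of the divergence operator to conclude. Your additional remarks on bookkeeping and the connection to the Lagrange multiplier are helpful clarifications but do not constitute a different route.
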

\begin{proof}
Direct evaluation of the partial derivatives provides:
\begin{subequations}\label{eq: der expressions 0}
    \begin{align}
  \dfrac{\partial \doublehat{\Psi}^{(\beta)}}{\partial \phi_\mA} =&~ \dfrac{\partial \hat{\Psi}}{\partial \phi_\mA}+\dfrac{\partial \hat{\Psi}}{\partial \phi_\mB}\dfrac{\partial \phi_\mB}{\partial \phi_\mA} = \dfrac{\partial \hat{\Psi}}{\partial \phi_\mA}-\dfrac{\partial \hat{\Psi}}{\partial \phi_\mB},\\
  \dfrac{\partial \doublehat{\Psi}^{(\beta)}}{\partial \nabla\phi_\mA} = &~\dfrac{\partial \hat{\Psi}}{\partial \nabla\phi_\mA}+\dfrac{\partial \hat{\Psi}}{\partial \nabla\phi_\mB}\dfrac{\partial \phi_\mB}{\partial \phi_\mA} = \dfrac{\partial \hat{\Psi}}{\partial \nabla\phi_\mA}-\dfrac{\partial \hat{\Psi}}{\partial \nabla\phi_\mB}.
  \end{align}
The linearity of the divergence operator concludes the proof.
\end{subequations}
\end{proof}
\begin{lemma}[Derivative of the free energy]\label[lemma]{appendix: lem derivative}
The derivative of the free energy class \eqref{eq: class Psi} is given by:
\begin{align}
    {\rm d}\hat{\Psi} = \sum_{\mA} \dfrac{\partial \hat{\Psi}}{\partial \phi_\mA} {\rm d}\phi_\mA + \sum_{\mA} \dfrac{\partial \hat{\Psi}}{\partial \nabla\phi_\mA} {\rm d}(\nabla \phi_\mA),
\end{align}
where ${\rm d}$ is the derivative operator.
\end{lemma}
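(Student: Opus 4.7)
The plan is to prove the formula by working with the reduced free energy class $\doublehat{\Psi}^{(\beta)}$ introduced in \eqref{eq: class Psi alt}, which is genuinely a function of $N-1$ independent scalar variables and $N-1$ independent gradient variables once the saturation constraint $\sum_\mA \phi_\mA = 1$ is used to eliminate $\phi_\mB$. On this reduced class the standard chain rule applies without any caveats, yielding
\begin{align*}
{\rm d}\doublehat{\Psi}^{(\beta)}
 = \sum_{\mA\neq\mB}\dfrac{\partial \doublehat{\Psi}^{(\beta)}}{\partial \phi_\mA}\,{\rm d}\phi_\mA
 + \sum_{\mA\neq\mB}\dfrac{\partial \doublehat{\Psi}^{(\beta)}}{\partial \nabla\phi_\mA}\,{\rm d}(\nabla \phi_\mA).
\end{align*}
Since by construction $\doublehat{\Psi}^{(\beta)}\equiv\hat{\Psi}$ on the constraint manifold (see \eqref{eq: class Psi 2}), the left-hand side equals ${\rm d}\hat{\Psi}$.

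Next, I would substitute the identities from \eqref{eq: der expressions 0} (already derived in the proof of \cref{appendix: lemma chemical potentials reduced class}), namely
\begin{align*}
\dfrac{\partial \doublehat{\Psi}^{(\beta)}}{\partial \phi_\mA}
 = \dfrac{\partial \hat{\Psi}}{\partial \phi_\mA}-\dfrac{\partial \hat{\Psi}}{\partial \phi_\mB},
 \qquad
\dfrac{\partial \doublehat{\Psi}^{(\beta)}}{\partial \nabla\phi_\mA}
 = \dfrac{\partial \hat{\Psi}}{\partial \nabla\phi_\mA}-\dfrac{\partial \hat{\Psi}}{\partial \nabla\phi_\mB},
\end{align*}
and then split each sum into the diagonal contribution and the $\mB$-contribution. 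Using the differentiated saturation constraints $\sum_{\mA\neq\mB}{\rm d}\phi_\mA=-{\rm d}\phi_\mB$ and $\sum_{\mA\neq\mB}{\rm d}(\nabla\phi_\mA)=-{\rm d}(\nabla\phi_\mB)$, the two cross terms $-(\partial \hat{\Psi}/\partial \phi_\mB)\sum_{\mA\neq\mB}{\rm d}\phi_\mA$ and its gradient analogue collapse into exactly the missing $\mA=\mB$ summands, producing the symmetric expression claimed in the lemma.

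This is essentially a book-keeping argument; there is no genuine obstacle beyond making sure the substitution and the use of the constraint are performed consistently for both the value and the gradient components. The conceptual content is that although the partial derivatives $\partial\hat{\Psi}/\partial \phi_\mA$ are individually ambiguous on the constraint surface (they depend on the choice of extension off the surface), the combination $\sum_\mA (\partial\hat{\Psi}/\partial \phi_\mA)\,{\rm d}\phi_\mA$ is invariant under this ambiguity because any two extensions differ by a term whose contribution is proportional to $\sum_\mA {\rm d}\phi_\mA=0$. The reduced-class computation above makes this invariance explicit and thereby delivers the formula.
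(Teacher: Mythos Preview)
Your proposal is correct and follows essentially the same route as the paper's own proof: start from the reduced class $\doublehat{\Psi}^{(\beta)}$ where the chain rule is unambiguous, substitute the identities \eqref{eq: der expressions 0}, split the sums, and use the differentiated saturation constraints $\sum_{\mA}{\rm d}\phi_\mA=0$ and $\sum_{\mA}{\rm d}(\nabla\phi_\mA)=0$ to reinstate the $\mA=\mB$ summands. Your closing remark on the invariance of $\sum_\mA(\partial\hat{\Psi}/\partial\phi_\mA)\,{\rm d}\phi_\mA$ under the extension ambiguity is a helpful conceptual gloss not spelled out in the paper.
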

\begin{proof}
Inserting \eqref{eq: der expressions 0}, the derivative of $\Psi$ takes the form
\begin{align}
    {\rm d}\hat{\Psi} ={\rm d}\doublehat{\Psi}^{(\beta)} =&~ \sum_{\mA\neq\mB} \dfrac{\partial \doublehat{\Psi}^{(\beta)}}{\partial \phi_\mA} {\rm d}\phi_\mA + \sum_{\mA\neq\mB} \dfrac{\partial \doublehat{\Psi}^{(\beta)}}{\partial \nabla \phi_\mA} {\rm d}(\nabla \phi_\mA)\nn\\
    =&~ \sum_{\mA\neq\mB} \left(\dfrac{\partial \hat{\Psi}}{\partial \phi_\mA}-\dfrac{\partial \hat{\Psi}}{\partial \phi_\mB}\right) {\rm d}\phi_\mA + \sum_{\mA\neq\mB} \left(\dfrac{\partial \hat{\Psi}}{\partial \nabla\phi_\mA}-\dfrac{\partial \hat{\Psi}}{\partial \nabla\phi_\mB}\right) {\rm d}(\nabla \phi_\mA)\nn\\
    =&~ \sum_{\mA\neq\mB} \dfrac{\partial \hat{\Psi}}{\partial \phi_\mA} {\rm d}\phi_\mA + \sum_{\mA\neq\mB} \dfrac{\partial \hat{\Psi}}{\partial \nabla\phi_\mA} {\rm d}(\nabla \phi_\mA)    \nn\\
    &~- \dfrac{\partial \hat{\Psi}}{\partial \phi_\mB}\sum_{\mA\neq\mB}  {\rm d}\phi_\mA - \dfrac{\partial \hat{\Psi}}{\partial \nabla\phi_\mB}\sum_{\mA\neq\mB} {\rm d}(\nabla \phi_\mA)\nn\\
    =&~ \sum_{\mA\neq\mB} \dfrac{\partial \hat{\Psi}}{\partial \phi_\mA} {\rm d}\phi_\mA + \sum_{\mA\neq\mB} \dfrac{\partial \hat{\Psi}}{\partial \nabla\phi_\mA} {\rm d}(\nabla \phi_\mA)    + \dfrac{\partial \hat{\Psi}}{\partial \phi_\mB}  {\rm d}\phi_\mB + \dfrac{\partial \hat{\Psi}}{\partial \nabla\phi_\mB}{\rm d}(\nabla \phi_\mB)\nn\\  
    =&~ \sum_{\mA} \dfrac{\partial \hat{\Psi}}{\partial \phi_\mA} {\rm d}\phi_\mA + \sum_{\mA} \dfrac{\partial \hat{\Psi}}{\partial \nabla\phi_\mA} {\rm d}(\nabla \phi_\mA),
\end{align}
where we have invoked $\sum_{\mA}{\rm d}\phi_\mA = 0$ and $\sum_{\mA}{\rm d}(\nabla\phi_\mA) = 0$. The latter expression matches the unconstrained derivative.

\end{proof}

\begin{lemma}[Well-defined free energy terms]\label[lemma]{appendix: lem: well-defined}
The following free energy terms in \eqref{eq: IP} are well-defined:
\begin{align}
  \sum_{\mA} \hat{\mu}_\mA \dot{\phi}_\mA; \quad \sum_{\mA} \nabla \phi_\mA \otimes \dfrac{\partial \hat{\Psi}}{\partial \nabla \phi_\mA}; \quad \sum_{\mA} \dot{\phi}_\mA \dfrac{\partial \hat{\Psi}}{\partial \nabla \phi_\mA}.
\end{align}
\end{lemma}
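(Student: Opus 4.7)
The plan is to interpret ``well-defined'' in the sense discussed immediately before the lemma: although each chemical potential $\hat{\mu}_\mA$ and each partial derivative $\partial \hat{\Psi}/\partial \nabla \phi_\mA$ is only defined up to the ambiguity introduced by the saturation constraint $\sum_\mA \phi_\mA = 1$ (e.g.\ addition of $1-\sum_\mA\phi_\mA$ to $\hat{\Psi}$), the three displayed sums should be invariant under this ambiguity. Equivalently, they should coincide with their counterparts expressed through the unambiguous reduced-class chemical potentials $\doublehat{\mu}_\mA^{(\beta)}$ and derivatives $\partial \doublehat{\Psi}^{(\beta)}/\partial \nabla \phi_\mA$ from \eqref{eq: class Psi alt}. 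First I would record the two elementary consequences of $\sum_\mA \phi_\mA = 1$, namely $\sum_\mA \dot\phi_\mA = 0$ and $\sum_\mA \nabla \phi_\mA = 0$, which will let me eliminate one index (say $\mA = \beta$) from each sum.

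For the first term, after picking an arbitrary $\beta$ and substituting $\dot\phi_\beta = -\sum_{\mA\neq\beta}\dot\phi_\mA$, I would rewrite
\begin{align*}
\sum_\mA \hat{\mu}_\mA \dot\phi_\mA
= \sum_{\mA\neq\beta} (\hat{\mu}_\mA - \hat{\mu}_\beta)\dot\phi_\mA
= \sum_{\mA\neq\beta} \doublehat{\mu}_\mA^{(\beta)} \dot\phi_\mA,
\end{align*}
where the last equality invokes \cref{appendix: lemma chemical potentials reduced class}. The right-hand side depends only on the independent variables $\{\phi_\mA\}_{\mA\neq\beta}$ and on the well-defined reduced chemical potentials. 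For the second term, using $\nabla \phi_\beta = -\sum_{\mA\neq\beta} \nabla \phi_\mA$, a parallel manipulation gives
\begin{align*}
\sum_\mA \nabla \phi_\mA \otimes \dfrac{\partial \hat{\Psi}}{\partial \nabla \phi_\mA}
= \sum_{\mA\neq\beta} \nabla \phi_\mA \otimes \left(\dfrac{\partial \hat{\Psi}}{\partial \nabla \phi_\mA} - \dfrac{\partial \hat{\Psi}}{\partial \nabla \phi_\beta}\right)
= \sum_{\mA\neq\beta} \nabla \phi_\mA \otimes \dfrac{\partial \doublehat{\Psi}^{(\beta)}}{\partial \nabla \phi_\mA},
\end{align*}
where the second equality uses the identity derived in the proof of \cref{appendix: lem derivative}. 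The third term is handled in the same way, exploiting $\sum_\mA \dot\phi_\mA = 0$ to obtain $\sum_{\mA\neq\beta} \dot\phi_\mA \, \partial \doublehat{\Psi}^{(\beta)}/\partial \nabla \phi_\mA$.

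Each of the resulting reduced-class expressions is manifestly independent of any ambiguous shift in $\hat\Psi$, and since $\beta$ was arbitrary, the value of the sum is independent of the choice of reference constituent as well. This delivers the ``well-defined'' claim on both counts. The main obstacle is not computational but expository: I need to be careful that the rewriting I perform really exhausts the ambiguity, i.e.\ that invariance under the single shift $\hat\Psi \mapsto \hat\Psi + c(1-\sum_\mA \phi_\mA)$ (and its gradient-dependent generalizations) is the complete description of the redundancy. This is precisely why passing to the reduced class $\doublehat{\Psi}^{(\beta)}$ is the cleanest route: the reduced class has $N-1$ genuinely independent variables, so any two representations of $\hat\Psi$ yielding the same $\doublehat{\Psi}^{(\beta)}$ must give the same value of each of the three displayed sums, and the above rewritings show that this is indeed what the sums compute.
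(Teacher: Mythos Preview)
Your proposal is correct and follows essentially the same route as the paper: eliminate the redundant index $\beta$ via $\sum_\mA \dot\phi_\mA=0$ (resp.\ $\sum_\mA \nabla\phi_\mA=0$), rewrite each sum in terms of the differences $\hat\mu_\mA-\hat\mu_\beta$ and $\partial\hat\Psi/\partial\nabla\phi_\mA-\partial\hat\Psi/\partial\nabla\phi_\beta$, and then invoke \cref{appendix: lemma chemical potentials reduced class} (and the identity from the proof of \cref{appendix: lem derivative}) to recognize these as the reduced-class quantities $\doublehat{\mu}_\mA^{(\beta)}$ and $\partial\doublehat{\Psi}^{(\beta)}/\partial\nabla\phi_\mA$. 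The paper carries this out only for the first term and states that the remaining two follow analogously; you have made those cases explicit, which is fine.
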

\begin{proof}
If the constraint \eqref{eq: sum phi} is not enforced, the terms are obviously well-defined. We show that the first term subject to \eqref{eq: sum phi} is well-defined; the others follow similarly. Utilizing an argumentation analogously to that of the proof of \cref{appendix: lem derivative}, we have the sequence of identities:
\begin{align}
    \sum_{\mA} \hat{\mu}_\mA \dot{\phi}_\mA    =&~ \sum_{\mA\neq\mB} \hat{\mu}_\mA \dot{\phi}_\mA + \hat{\mu}_\mB\dot{\phi}_\mB\nn\\
    =&~\sum_{\mA\neq\mB} \hat{\mu}_\mA \dot{\phi}_\mA  - \hat{\mu}_\mB \sum_{\mA\neq\mB} \dot{\phi}_\mA\nn\\
    =&~ \sum_{\mA\neq\mB} \left(\hat{\mu}_\mA-\hat{\mu}_\mB\right) \dot{\phi}_\mA\nn\\
    =&~\sum_{\mA\neq\mB} \doublehat{\mu}_\mA^{(\beta)} \dot{\phi}_\mA,
\end{align}
where we have utilized \cref{appendix: lemma chemical potentials reduced class} in the last identity, where we have invoked $\sum_{\mA}\phi_\mA = 1$. Since the latter expression is well-defined, so is the initial one.
\end{proof}
\begin{lemma}[Free energy identity]\label[lemma]{lem: free energy identity}
The following identity holds:
\begin{align}
\sum_\mA \phi_\mA \nabla \mu_\mA=\nabla \left(\sum_\mA \phi_\mA \hat{\mu}_\mA - \hat{\Psi}\right) + {\rm div} \left(\sum_\mA  \nabla \phi_\mA \otimes \dfrac{\partial \hat{\Psi}}{\partial \nabla\phi_\mA} \right).
\end{align}
\end{lemma}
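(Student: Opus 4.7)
The plan is to reduce the identity to a pointwise algebraic relation by expanding both sides, cancelling the common $\sum_\mA \phi_\mA \nabla \hat{\mu}_\mA$ term, and then recognizing the remaining expression as a consequence of the chain rule together with the definition \eqref{eq: chem pot} of the chemical potential. In particular, applying the product rule to $\nabla(\sum_\mA \phi_\mA \hat{\mu}_\mA)$ on the right-hand side splits it as $\sum_\mA \hat{\mu}_\mA \nabla \phi_\mA + \sum_\mA \phi_\mA \nabla \hat{\mu}_\mA$, and the second piece is exactly the left-hand side. So it suffices to establish the reduced identity
\begin{align}
\sum_\mA \hat{\mu}_\mA \nabla \phi_\mA - \nabla \hat{\Psi} + {\rm div}\left(\sum_\mA \nabla \phi_\mA \otimes \dfrac{\partial \hat{\Psi}}{\partial \nabla \phi_\mA}\right) = 0. \nonumber
\end{align}

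To handle $\nabla \hat{\Psi}$, I would invoke \cref{lem: derivative} (applied componentwise, i.e.\ treating $\nabla$ as acting through the chain rule on $\hat{\Psi}(\{\phi_\mA\}, \{\nabla \phi_\mA\})$), obtaining
\begin{align}
\nabla \hat{\Psi} = \sum_\mA \dfrac{\partial \hat{\Psi}}{\partial \phi_\mA}\, \nabla \phi_\mA + \sum_\mA (\nabla \nabla \phi_\mA)\,\dfrac{\partial \hat{\Psi}}{\partial \nabla \phi_\mA}, \nonumber
\end{align}
where the second sum is understood in index notation as $\partial_i\partial_j \phi_\mA\,(\partial \hat{\Psi}/\partial \nabla \phi_\mA)_j$. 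In parallel, the product rule for the divergence of a tensor product gives
\begin{align}
{\rm div}\left(\nabla \phi_\mA \otimes \dfrac{\partial \hat{\Psi}}{\partial \nabla \phi_\mA}\right) = (\nabla\nabla \phi_\mA)\,\dfrac{\partial \hat{\Psi}}{\partial \nabla \phi_\mA} + \nabla \phi_\mA\,{\rm div}\!\left(\dfrac{\partial \hat{\Psi}}{\partial \nabla \phi_\mA}\right). \nonumber
\end{align}

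The key observation is then that the Hessian-of-$\phi_\mA$ contributions from these two expressions cancel termwise when summed over $\mA$. What remains is
\begin{align}
\sum_\mA \left(\hat{\mu}_\mA - \dfrac{\partial \hat{\Psi}}{\partial \phi_\mA} + {\rm div}\dfrac{\partial \hat{\Psi}}{\partial \nabla \phi_\mA}\right) \nabla \phi_\mA, \nonumber
\end{align}
which vanishes identically by the definition \eqref{eq: chem pot} of $\hat{\mu}_\mA$. This closes the argument.

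There is no substantial obstacle; the calculation is routine index manipulation. The only point requiring mild care is the tensor-product convention in ${\rm div}(\nabla\phi_\mA \otimes \partial \hat{\Psi}/\partial \nabla\phi_\mA)$ and the consistent pairing of indices between $\nabla\nabla\phi_\mA$ and $\partial\hat{\Psi}/\partial \nabla \phi_\mA$ in the two expressions, so that their Hessian terms genuinely cancel rather than merely resembling each other. No use is made of the saturation constraint \eqref{eq: sum phi}, which is consistent with the fact that \cref{lem: derivative} already incorporates the fact that additional ``constraint terms'' do not affect the formal derivative.
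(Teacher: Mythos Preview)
Your proposal is correct and follows essentially the same route as the paper: expand $\nabla(\sum_\mA \phi_\mA \hat{\mu}_\mA)$ via the product rule, expand the divergence of the tensor product, apply the chain rule to $\nabla\hat{\Psi}$, and observe that the Hessian contributions cancel so that only the definition \eqref{eq: chem pot} of $\hat{\mu}_\mA$ is needed to close the identity. The paper inserts the definition of $\hat{\mu}_\mA$ at the outset rather than at the end, but the underlying manipulations are identical.
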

\begin{proof}
Expanding the derivatives of the right-hand size term yields:
\begin{align}\label{eq: final expression}
&\nabla \left(\sum_\mA \phi_\mA \mu_\mA - \hat{\Psi}\right) + {\rm div} \left(\sum_\mA  \nabla \phi_\mA \otimes \dfrac{\partial \hat{\Psi}}{\partial \nabla\phi_\mA} \right) =\nn\\
&\sum_\mA \phi_\mA \nabla\mu_\mA +\sum_\mA \nabla \phi_\mA \dfrac{ \partial \hat{\Psi}}{\partial \phi_\mA} - \sum_\mA\nabla \phi_\mA \divg\left(\dfrac{\partial \hat{\Psi}}{\partial \nabla\phi_\mA}\right)  - \nabla \hat{\Psi} \nn\\
&+ \sum_\mA\nabla \phi_\mA {\rm div} \left( \dfrac{\partial \hat{\Psi}}{\partial \nabla\phi_\mA} \right)+\sum_\mA\left(\mathbf{H} \phi_\mA\right) \dfrac{\partial \hat{\Psi}}{\partial \nabla\phi_\mA} =\nn\\
&\sum_\mA\phi_\mA \nabla\mu_\mA - \nabla \hat{\Psi} +\sum_\mA\nabla \phi_\mA \dfrac{ \partial \hat{\Psi}}{\partial \phi_\mA}   +\sum_\mA\left(\mathbf{H} \phi_\mA\right) \dfrac{\partial \hat{\Psi}}{\partial \nabla\phi_\mA},
\end{align}
where $\mathbf{H} \phi_\mA$ is the hessian of $\phi_\mA$. Observing that the sum of the latter three terms in the final expression in \eqref{eq: final expression} vanishes completes the proof.
\end{proof}

\section{Equivalence of modeling restrictions}\label{appendix: sec: Equivalence of modeling restrictions}
This section discusses the equivalence of the restrictions \eqref{eq: second law 4} and \eqref{eq: restriction alt} via variable transformation.

First, we recall the variable transformation \eqref{eq: relation c phi}:
\begin{subequations}\label{appendix: eq: relation c phi}
    \begin{align}
        \phi_\mA =&~ \frac{c_\mA}{\rho_\mA}\left(\sum_\mB\frac{c_\mB}{\rho_\mB}\right)^{-1} = \frac{c_\mA}{\rho_\mA}\rho,\\
        c_\mA =&~ \rho_\mA\phi_\mA  \left(\sum_\mB\rho_\mB\phi_\mB\right)^{-1} = \rho_\mA\phi_\mA  \rho^{-1},
    \end{align}
\end{subequations}
where we note:
\begin{subequations}
   \begin{align}
    \rho = \hat{\rho}\left(\left\{c_\mB\right\}\right)&~ \left(\sum_\mB\frac{c_\mB}{\rho_\mB}\right)^{-1},\\
    \rho^{-1}=\check{\rho}^{-1}\left(\left\{\phi_\mB\right\}\right)&~ =\left(\sum_\mB\rho_\mB\phi_\mB\right)^{-1},
\end{align}
\end{subequations}
for $\mA =1, ..., N$.

\begin{lemma}[Invertibility transformation maps]\label[lemma]{appendix; lem transformation maps}
    The maps \eqref{appendix: eq: relation c phi} are not invertible.
\end{lemma}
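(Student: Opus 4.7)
The plan is to establish non-invertibility by exploiting the fact that both maps in \eqref{appendix: eq: relation c phi} are homogeneous of degree zero in their input arguments. The forward map sends $(c_1,\ldots,c_N) \mapsto (\phi_1,\ldots,\phi_N)$ with each $\phi_\mA$ expressed as a ratio whose numerator and denominator both scale linearly under $c \mapsto \lambda c$ for $\lambda > 0$, so the scaling cancels and every $\phi_\mA$ is left unchanged; a symmetric observation applies to the reverse map $(\phi_1,\ldots,\phi_N) \mapsto (c_1,\ldots,c_N)$. Consequently each map collapses entire rays $\{\lambda c : \lambda > 0\}$ in $\mathbb{R}^N_+$ to a single image point, so neither map is injective, and hence neither is invertible.

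As a complementary and more concrete route, I would compute the composition of the two maps in either order. Starting from a positive vector $c$, applying the forward map followed by the backward map gives, after cancellation of the factor $\sum_\mB c_\mB/\rho_\mB$ in numerator and denominator, $c'_\mA = c_\mA / \sum_\mB c_\mB$. This equals $c_\mA$ precisely when $\sum_\mB c_\mB = 1$. An analogous computation in the other direction produces $\phi'_\mA = \phi_\mA / \sum_\mB \phi_\mB$. These identities exhibit explicit pairs of distinct preimages (any $c$ and $2c$, say) producing the same output, and simultaneously show that the maps do become mutually inverse once restricted to the open simplex determined by \eqref{eq: sum c} and \eqref{eq: sum phi}.

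There is no genuine technical obstacle here; the content of the lemma is essentially an observation about zero-homogeneity. The main point requiring care is conceptual, namely to make explicit the domain on which the maps are being considered: the claim refers to the maps viewed on $\mathbb{R}^N_+$ without a priori imposing the saturation constraints \eqref{eq: sum c phi}. The purpose of the statement, given its placement within the discussion of equivalence of modeling restrictions, is precisely to flag that the substitution $\{\phi_\mA\} \leftrightarrow \{c_\mA\}$ inside \eqref{eq: second law 4} and \eqref{eq: restriction alt} cannot be carried out by blindly inverting \eqref{appendix: eq: relation c phi}, but requires the saturation constraint to be tracked throughout the computation.
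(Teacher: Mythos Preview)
Your argument is correct and takes a genuinely different route from the paper. The paper proceeds via the Jacobians: it computes the entries $\partial \phi_\mB/\partial c_\mA$ and $\partial c_\mB/\partial \phi_\mA$ explicitly, observes that the columns of each Jacobian sum to zero (equations \eqref{eq: der expressions sums}), and concludes that the determinants vanish, so the maps fail to be locally invertible. Your argument is the global counterpart: the maps are homogeneous of degree zero, hence collapse entire rays to points and are not injective. Your approach is more elementary and yields the non-invertibility directly without differentiation; it also makes the geometric content (projection onto the simplex) explicit through the composition computation. The paper's Jacobian route, on the other hand, produces the explicit derivative formulas \eqref{eq: der expressions} as a by-product, and these are reused immediately afterwards in \cref{lem: chain rule chemical potentials}, \cref{appendix: lem: Relations chemical potential-like quantities}, and \cref{appendix: lem: matching korteweg tensors}. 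So while your proof of the lemma itself is cleaner, the paper's computation is doing double duty in the surrounding argument.
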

\begin{proof}
  A straightforward evaluation provides the elements of the Jacobian mappings:
\begin{subequations}\label{eq: der expressions}
    \begin{align}
        \dfrac{\partial \phi_\mB}{\partial c_\mA} =&~ \rho \rho_\mB^{-1} \left(\delta_{\mA\mB} - \rho c_\mB \rho_\mA^{-1}\right),\\
        \dfrac{\partial c_\mB}{\partial \phi_\mA} =&~ \rho^{-1} \rho_\mB \left(\delta_{\mA\mB} - \rho^{-1} \phi_\mB \rho_\mA\right),
    \end{align}
\end{subequations}
where $\delta_{\mA\mB}$ is the kroneckerdelta. Summation over $\mB = 1,..., N$ yields:
\begin{subequations}\label{eq: der expressions sums}
    \begin{align}
       \sum_\mB \dfrac{\partial \phi_\mB}{\partial c_\mA} = &~ 0, \\%\quad \sum_\mB \rho \rho_\mB^{-1} \left(\delta_{\mA\mB} - \rho c_\mB \rho_\mA^{-1}\right)=  \rho \rho_\mA^{-1} - \sum_\mB\rho \rho_\mB^{-1} \trho_\mB \rho_\mA^{-1},\\
       \sum_\mB \dfrac{\partial c_\mB}{\partial \phi_\mA} =&~ 0.
    \end{align}
\end{subequations}
Hence, each of the columns of the Jacobian sums to zero. Thus the columns are linearly dependent, and consequently the determinants of the both mappings vanish:
\begin{subequations}\label{eq: determinants}
    \begin{align}
        {\rm det} \dfrac{\partial \phi_\mB}{\partial c_\mA} =&~ 0,\\
        {\rm det} \dfrac{\partial c_\mB}{\partial \phi_\mA} =&~ 0.\\ \nn
    \end{align}
\end{subequations}
%This shows that the mappings \eqref{appendix: eq: relation c phi} are not invertible.
\end{proof}

%Even though the transformation maps are not invertible, we can still apply the variable transformation. 
Next, we recall the chain rule for the chemical potential.

\begin{lemma}[Chain rule chemical potentials]\label[lemma]{lem: chain rule chemical potentials}
%Provided that the free energies are specified (well-defined), 
We have the chain rule for chemical potentials:
\begin{subequations}\label{eq: chain rule}
  \begin{align}
    \check{\mu}_\mA =&~\sum_\mB \hat{\mu}_\mB  \dfrac{\partial \phi_\mB}{\partial c_\mA}, \label{eq: chain rule 1}\\
    \hat{\mu}_\mA =&~\sum_\mB \check{\mu}_\mB  \dfrac{\partial c_\mB}{\partial \phi_\mA}.\label{eq: chain rule 2}
  \end{align}
\end{subequations}
\end{lemma}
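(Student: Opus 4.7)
The plan is to derive both identities by applying the chain rule to the identification $\hat{\Psi}(\{\phi_\mB\}, \{\nabla \phi_\mB\}) = \check{\Psi}(\{c_\mB\}, \{\nabla c_\mB\})$ from \eqref{eq: identify free energy}. Via \eqref{eq: relation c phi}, the volume fractions $\phi_\mB$ are smooth functions of $\{c_\mC\}$, so that $\nabla \phi_\mB = \sum_\mC (\partial \phi_\mB/\partial c_\mC)\, \nabla c_\mC$. Although the preceding lemma establishes that the transformation is not invertible, the forward partial derivatives $\partial \phi_\mB/\partial c_\mA$ (and analogously $\partial c_\mB/\partial \phi_\mA$) are well-defined smooth functions, which is all that is needed.

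For \eqref{eq: chain rule 1}, I would first compute
\begin{align*}
\frac{\partial \check{\Psi}}{\partial c_\mA} &= \sum_\mB \frac{\partial \hat{\Psi}}{\partial \phi_\mB} \frac{\partial \phi_\mB}{\partial c_\mA} + \sum_{\mB,\mC} \frac{\partial \hat{\Psi}}{\partial \nabla \phi_\mB} \cdot \frac{\partial^2 \phi_\mB}{\partial c_\mA \partial c_\mC}\, \nabla c_\mC, \\
\frac{\partial \check{\Psi}}{\partial \nabla c_\mA} &= \sum_\mB \frac{\partial \hat{\Psi}}{\partial \nabla \phi_\mB} \frac{\partial \phi_\mB}{\partial c_\mA},
\end{align*}
where the second line uses $\partial(\nabla \phi_\mB)/\partial(\nabla c_\mA) = (\partial \phi_\mB/\partial c_\mA)\,\mathbf{I}$. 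Applying the divergence to the second expression via the product rule, together with $\nabla(\partial \phi_\mB/\partial c_\mA) = \sum_\mC (\partial^2 \phi_\mB/\partial c_\mA \partial c_\mC)\, \nabla c_\mC$, yields
\begin{align*}
{\rm div}\,\frac{\partial \check{\Psi}}{\partial \nabla c_\mA} = \sum_\mB \left({\rm div}\frac{\partial \hat{\Psi}}{\partial \nabla \phi_\mB}\right) \frac{\partial \phi_\mB}{\partial c_\mA} + \sum_{\mB,\mC} \frac{\partial \hat{\Psi}}{\partial \nabla \phi_\mB} \cdot \frac{\partial^2 \phi_\mB}{\partial c_\mA \partial c_\mC}\, \nabla c_\mC.
\end{align*}
Subtracting this from $\partial \check{\Psi}/\partial c_\mA$, the double-sum cross terms involving the mixed second derivatives of $\phi_\mB$ cancel exactly, leaving $\check{\mu}_\mA = \sum_\mB \hat{\mu}_\mB\, (\partial \phi_\mB/\partial c_\mA)$ as claimed.

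The second identity \eqref{eq: chain rule 2} follows by an identical argument with the roles of $\{c_\mA\}$ and $\{\phi_\mA\}$ interchanged. The main subtlety is the elegant cancellation of the cross terms: it is not coincidental, but reflects the covariance of the assignment $\Psi \mapsto \mu$ under the chain rule, so that chemical potentials transform via the Jacobian of the order-parameter change even though $\nabla \phi_\mB$ depends on $\nabla c_\mC$ nontrivially. The only mildly delicate step is the careful bookkeeping of the product rule for ${\rm div}$ combined with the double sums in order to witness this cancellation.
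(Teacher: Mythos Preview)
Your proof is correct and follows essentially the same approach as the paper: compute $\partial\check{\Psi}/\partial c_\mA$ and $\partial\check{\Psi}/\partial\nabla c_\mA$ via the chain rule through $\phi_\mB=\phi_\mB(\{c_\mC\})$, take the divergence of the latter with the product rule, and observe that the mixed second-derivative cross terms cancel, leaving $\check{\mu}_\mA=\sum_\mB\hat{\mu}_\mB\,\partial\phi_\mB/\partial c_\mA$; the second identity is handled by symmetry in both cases.
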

\begin{proof}
We show \eqref{eq: chain rule 1} and note that \eqref{eq: chain rule 2} follows similarly. A direct computation yields:
  \begin{align}
    \check{\mu}_\mA =&~  \dfrac{\partial \hat{\Psi}\left(\phi_{\mB}\left(\left\{c_\mC\right\}\right),\sum_\mC \frac{\partial \phi_\mB}{\partial c_\mC}\nabla c_\mC\right)}{\partial c_\mA} - \divg \dfrac{\partial \hat{\Psi}\left(\phi_{\mB}\left(\left\{c_\mC\right\}\right),\sum_\mC \frac{\partial \phi_\mB}{\partial c_\mC}\nabla c_\mC\right)}{\partial \nabla c_\mA}  \nn\\
    %=&~ \dfrac{\partial \hat{\Psi}(\phi(c),\phi'(c)\nabla c)}{\partial \phi}\phi'(c)+\dfrac{\partial \hat{\Psi}(\phi(c),\phi'(c)\nabla c)}{\partial \nabla\phi}\cdot\nabla c~\phi''(c)\nn\\
    %&~- \divg \left(\dfrac{\partial \hat{\Psi}(\phi(c),\phi'(c)\nabla c)}{\partial \nabla \phi} \phi'(c) \right) \nn\\
    =&~ \sum_\mB \dfrac{\partial \hat{\Psi}\left(\left\{\phi_{\mB}\right\},\left\{\nabla \phi_\mB\right\}\right)}{\partial \phi_\mB}\dfrac{\partial \phi_\mB}{\partial c_\mA}\nn\\
    &~+\sum_\mB \dfrac{\partial \hat{\Psi}\left(\left\{\phi_{\mB}\right\},\left\{\nabla \phi_\mB\right\}\right)}{\partial \nabla\phi_\mB}\cdot\left(\sum_\mC \nabla c_\mC~\frac{\partial^2 \phi_\mB}{\partial c_\mA \partial c_\mC}\right)\nn\\
    &~-\sum_\mB  \divg \hat{\Psi}\left(\left\{\phi_{\mB}\right\},\left\{\nabla \phi_\mB\right\}\right) \frac{\partial \phi_\mB}{\partial c_\mA} \nn\\
    &~-\sum_\mB \dfrac{\partial \hat{\Psi}\left(\left\{\phi_{\mB}\right\},\left\{\nabla \phi_\mB\right\}\right)}{\partial \nabla\phi_\mB}\cdot\nabla \left(\frac{\partial \phi_\mB}{\partial c_\mA}\right)\nn\\
    =&~ \sum_\mB \left(\dfrac{\partial \hat{\Psi}\left(\left\{\phi_{\mB}\right\},\left\{\nabla \phi_\mB\right\}\right)}{\partial \phi_\mB}- \divg \left(\dfrac{\partial \hat{\Psi}\left(\left\{\phi_{\mB}\right\},\left\{\nabla \phi_\mB\right\}\right)}{\partial \nabla \phi_\mB}  \right)\right) \dfrac{\partial \phi_\mB}{\partial c_\mA} \nn\\
    =&~ \sum_\mB \hat{\mu}_\mB  \dfrac{\partial \phi_\mB}{\partial c_\mA}.%\\ \nn
\end{align}
\end{proof}

\begin{lemma}[Relations between chemical quantities]\label[lemma]{appendix: lem: Relations chemical potential-like quantities}
The chemical potential quantities are related via the following identities:
  \begin{subequations}\label{eq: relation chem}
    \begin{align}
      \hat{\mu}_\mA =&~ \rho^{-1} \rho_\mA\left(\check{\mu}_\mA  -  \sum_\mB \check{\mu}_\mB c_\mB\right),\label{eq: chem pot relation 1}\\
      \check{\mu}_\mA =&~ \rho \rho_\mA^{-1} \left(\hat{\mu}_\mA - \sum_\mB \hat{\mu}_\mB \phi_\mB\right).\label{eq: chem pot relation 2}
    \end{align}
  \end{subequations}
\end{lemma}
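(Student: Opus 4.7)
\medskip

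\noindent\textbf{Proof proposal.} The plan is to combine the chain rule for chemical potentials (\cref{lem: chain rule chemical potentials}) with the explicit expressions for the Jacobian entries computed in the proof of \cref{appendix; lem transformation maps}. Since both identities in \eqref{eq: relation chem} are symmetric under the interchange $(\phi,\hat{\mu},\rho_\mA^{-1}) \leftrightarrow (c,\check{\mu},\rho)$, I would only carry out one of the two computations in detail and deduce the other by the same argument.

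Concretely, I would start from \eqref{eq: chain rule 1}, namely $\check{\mu}_\mA = \sum_\mB \hat{\mu}_\mB \,\partial \phi_\mB/\partial c_\mA$, and substitute the expression
\[
\dfrac{\partial \phi_\mB}{\partial c_\mA} = \rho \,\rho_\mB^{-1}\bigl(\delta_{\mA\mB} - \rho\, c_\mB\,\rho_\mA^{-1}\bigr)
\]
from \eqref{eq: der expressions}. The Kronecker $\delta_{\mA\mB}$ collapses one part of the sum to a single term $\hat{\mu}_\mA \rho/\rho_\mA$, while the remaining part yields $-\rho^2 \rho_\mA^{-1}\sum_\mB \hat{\mu}_\mB \rho_\mB^{-1} c_\mB$. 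Recognising $c_\mB \rho/\rho_\mB = \phi_\mB$ via \eqref{appendix: eq: relation c phi} produces exactly
\[
\check{\mu}_\mA = \rho\rho_\mA^{-1}\Bigl(\hat{\mu}_\mA - \sum_\mB \hat{\mu}_\mB \phi_\mB\Bigr),
\]
which is \eqref{eq: chem pot relation 2}. Relation \eqref{eq: chem pot relation 1} is obtained in the mirror-image fashion: start from \eqref{eq: chain rule 2}, substitute the dual expression $\partial c_\mB/\partial \phi_\mA = \rho^{-1}\rho_\mB(\delta_{\mA\mB}-\rho^{-1}\phi_\mB\rho_\mA)$, and use $\rho_\mB \phi_\mB \rho^{-1} = c_\mB$.

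There is no genuine obstacle here; the entire argument is an algebraic rearrangement of the chain rule using the identities from \eqref{appendix: eq: relation c phi}. The only small point worth flagging is consistency: the two relations in \eqref{eq: relation chem} are not mutual inverses of each other (they cannot be, given \cref{appendix; lem transformation maps}), but rather they hold simultaneously and are compatible with each other only because of the saturation constraints $\sum_\mB c_\mB = 1$ and $\sum_\mB \phi_\mB = 1$. As a sanity check one can verify that inserting \eqref{eq: chem pot relation 2} into the right-hand side of \eqref{eq: chem pot relation 1} reproduces $\hat{\mu}_\mA$ precisely when one invokes $\sum_\mB \phi_\mB = 1$, which confirms internal consistency without requiring invertibility of the transformation maps.
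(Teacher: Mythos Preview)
Your proposal is correct and is exactly the approach the paper takes: the paper's proof consists of the single sentence ``This follows from substituting \eqref{eq: der expressions} into \eqref{lem: chain rule chemical potentials},'' and your computation fills in precisely those details. Your additional consistency remark about the saturation constraints is a nice sanity check beyond what the paper provides.
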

\begin{proof}
This follows from substituting \eqref{eq: der expressions} into \eqref{lem: chain rule chemical potentials}.
\end{proof}

\begin{lemma}[Matching Korteweg tensors]\label[lemma]{appendix: lem: matching korteweg tensors}
  %Provided that the free energies are specified (well-defined), 
  The Korteweg stress tensors of the both modeling choices are identical:
  \begin{align}
      \sum_\mA \nabla \phi_\mA \otimes \dfrac{\partial\hat{\Psi}}{\partial \nabla \phi_\mA} = 
      \sum_\mA \nabla c_\mA \otimes \dfrac{\partial\check{\Psi}}{\partial \nabla c_\mA}.
  \end{align}
\end{lemma}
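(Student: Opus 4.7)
The plan is to derive the identity by a direct application of the spatial chain rule, exploiting the fact that the map \eqref{appendix: eq: relation c phi} between volume fractions and mass concentrations involves no derivative terms. Since $c_\mA$ depends only on $\{\phi_\mB\}$, differentiating spatially yields the pointwise identity
\begin{align*}
\nabla c_\mA = \sum_\mB \dfrac{\partial c_\mA}{\partial \phi_\mB}\,\nabla \phi_\mB,
\end{align*}
and, symmetrically, $\nabla \phi_\mB = \sum_\mA (\partial \phi_\mB/\partial c_\mA)\,\nabla c_\mA$. Both are genuine calculus identities that hold irrespective of the Jacobians being non-invertible (as recorded in \cref{appendix; lem transformation maps}), because they only express the spatial gradient of a composite function — no matrix inversion is involved.

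Next, I would apply the chain rule to the identification \eqref{eq: identify free energy}, writing $\check\Psi(\{c_\mA\},\{\nabla c_\mA\}) = \hat\Psi(\{\phi_\mB(\{c_\mA\})\},\{\nabla \phi_\mB\})$ with $\nabla \phi_\mB$ expanded as above. Because $\nabla c_\mA$ enters $\hat\Psi$ only through the gradient slots $\nabla \phi_\mB$, this yields
\begin{align*}
\dfrac{\partial \check\Psi}{\partial \nabla c_\mA} = \sum_\mB \dfrac{\partial \hat\Psi}{\partial \nabla \phi_\mB}\,\dfrac{\partial \phi_\mB}{\partial c_\mA}.
\end{align*}
Substituting this into the right-hand side of the claim and swapping the order of summation gives
\begin{align*}
\sum_\mA \nabla c_\mA \otimes \dfrac{\partial \check\Psi}{\partial \nabla c_\mA} = \sum_\mB \Bigl(\sum_\mA \dfrac{\partial \phi_\mB}{\partial c_\mA}\,\nabla c_\mA\Bigr) \otimes \dfrac{\partial \hat\Psi}{\partial \nabla \phi_\mB} = \sum_\mB \nabla \phi_\mB \otimes \dfrac{\partial \hat\Psi}{\partial \nabla \phi_\mB},
\end{align*}
where the inner sum collapses via the chain-rule identity above. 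This would complete the proof.

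The subtlety to watch for is the interplay with the saturation constraint \eqref{eq: sum phi}--\eqref{eq: sum c}, which causes both Jacobians $\partial \phi_\mB/\partial c_\mA$ and $\partial c_\mA/\partial \phi_\mB$ to be rank-deficient. The argument is engineered to sidestep this: it never inverts either Jacobian, but only uses each direction of the chain rule in turn, so the rank deficiency is harmless. In addition, \cref{appendix: lem derivative} guarantees that the symmetric Korteweg sum $\sum_\mA \nabla \phi_\mA \otimes \partial \hat\Psi/\partial \nabla \phi_\mA$ is insensitive to how the free energy is extended off the constraint surface — adding multiples of $(1-\sum_\mA \phi_\mA)$ to $\hat\Psi$ leaves the sum unchanged since $\sum_\mA \nabla \phi_\mA = 0$ — so no ambiguity enters the calculation.
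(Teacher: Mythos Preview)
Your argument is correct and follows the same chain-rule strategy as the paper's proof. The only organizational difference is in how the double sum is collapsed: the paper expands \emph{both} factors, writing $\nabla c_\mA=\sum_\mB(\partial c_\mA/\partial\phi_\mB)\nabla\phi_\mB$ and $\partial\check\Psi/\partial\nabla c_\mA=\sum_\mC(\partial\phi_\mC/\partial c_\mA)\,\partial\hat\Psi/\partial\nabla\phi_\mC$, and then contracts the two Jacobians via $\sum_\mA(\partial c_\mA/\partial\phi_\mB)(\partial\phi_\mC/\partial c_\mA)=\delta_{\mB\mC}$ (which, as you implicitly anticipate, holds only modulo a term annihilated by $\sum_\mB\nabla\phi_\mB=0$). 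Your version sidesteps this by expanding only the derivative of $\check\Psi$ and then reabsorbing the remaining sum as $\sum_\mA(\partial\phi_\mB/\partial c_\mA)\nabla c_\mA=\nabla\phi_\mB$; this is marginally cleaner precisely because it never appeals to the Jacobian product, and hence the rank deficiency of \cref{appendix; lem transformation maps} is manifestly irrelevant.
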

\begin{proof}
  This follows from \eqref{eq: der expressions} and \eqref{eq: sum phi}:
    \begin{align}
       \sum_\mA \nabla c_\mA \otimes \dfrac{\partial\check{\Psi}}{\partial \nabla c_\mA}  =&~ 
       \sum_\mA \left(\sum_\mB \dfrac{\partial c_\mA}{\partial \phi_\mB} \nabla \phi_\mB\right) \otimes \left(\sum_\mC \dfrac{\partial\hat{\Psi}}{\partial \nabla \phi_\mC}\dfrac{\partial \phi_\mC}{\partial c_\mA}\right)\nn\\
       =&~ 
       \sum_{\mA,\mB,\mC} \left( \dfrac{\partial c_\mA}{\partial \phi_\mB} \dfrac{\partial \phi_\mC}{\partial c_\mA}\right) \nabla \phi_\mB \otimes  \dfrac{\partial\hat{\Psi}}{\partial \nabla \phi_\mC}\nn\\
       =&~ 
       \sum_{\mB,\mC} \delta_{\mB\mC} \nabla \phi_\mB \otimes  \dfrac{\partial\hat{\Psi}}{\partial \nabla \phi_\mC}\nn\\
       %=&~ \sum_\mA \left( \sum_\mB \rho^{-1}\rho_\mA \left(\delta_{\mA\mB} -\rho^{-1}\rho_\mB \phi_\mA\right) \nabla \phi_\mB \right) \nn\\
       %&\quad\otimes \left( \sum_\mC \rho \rho_\mC^{-1} \left( \delta_{\mA\mC}-\rho \rho_\mA^{-1} c_\mC\right) \dfrac{\partial\hat{\Psi}}{\partial \nabla \phi_\mC}   \right)\nn\\
       %=&~ \sum_\mA \left( \rho^{-1} \rho_\mB \nabla \phi_\mA - \rho^{-2} \rho_\mA \phi_\mA \nabla \rho \right) \nn\\
       %&\quad\otimes \left( \rho \rho_\mA^{-1}  \dfrac{\partial\hat{\Psi}}{\partial \nabla \phi_\mA} - \rho \rho_\mA^{-1} \sum_\mC \dfrac{\partial\hat{\Psi}}{\partial \nabla \phi_\mC}\phi_\mC   \right)\nn\\
       %=&~
       %\sum_\mA \nabla \phi_\mA \otimes \dfrac{\partial\hat{\Psi}}{\partial \nabla \phi_\mA} -\rho^{-1}\nabla \rho \otimes \sum_\mA \dfrac{\partial\hat{\Psi}}{\partial \nabla \phi_\mA}\nn\\
       %&~-\sum_\mA \nabla \phi_\mA \otimes\sum_\mC \phi_\mC \dfrac{\partial \hat{\Psi}}{\partial \nabla \phi_\mC}+\rho^{-1}\nabla \rho \otimes \sum_\mC \dfrac{\partial\hat{\Psi}}{\partial \nabla \phi_\mC}\nn\\
        =&~
       \sum_\mA \nabla \phi_\mA \otimes \dfrac{\partial\hat{\Psi}}{\partial \nabla \phi_\mA}. \\ \nn
  \end{align}
\end{proof}
\begin{theorem}[Equivalence modeling restrictions]\label{appendix: thm: mod restrictions}
  %Provided that the free energies are specified (well-defined), 
  The modeling restrictions \eqref{eq: second law 4} and \eqref{eq: restriction alt} are equivalent.
\end{theorem}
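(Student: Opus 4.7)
The plan is to show that the two restrictions coincide term by term, identifying the roles played by the two Lagrange multipliers $\lambda$ and $\check{\lambda}$. The restriction \eqref{eq: second law 4} is linear in $\nabla \mathbf{v}$, $\bH_\mA$, and $\zeta_\mA$, as is \eqref{eq: restriction alt}, so equivalence reduces to matching the three coefficients.

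First, I would establish the identification of the free energy and of the Korteweg-type contribution. The identification \eqref{eq: identify free energy} gives $\hat{\Psi}(\{\phi_\mA\},\{\nabla \phi_\mA\}) = \check{\Psi}(\{c_\mA\},\{\nabla c_\mA\})$ once $c_\mA$ is expressed through $\phi_\mA$ via \eqref{appendix: eq: relation c phi}, so the scalar multiple of $\mathbf{I}$ involving $\hat{\Psi}$ and $\check{\Psi}$ is the same in both restrictions. Applying \cref{appendix: lem: matching korteweg tensors} then yields $\sum_\mA \nabla \phi_\mA \otimes \partial \hat{\Psi}/\partial \nabla \phi_\mA = \sum_\mA \nabla c_\mA \otimes \partial \check{\Psi}/\partial \nabla c_\mA$, so the tensorial Korteweg-type contributions to the $\nabla \mathbf{v}$ coefficient also coincide.

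Next, I would match the coefficients of $\bH_\mA$ and $\zeta_\mA$ using \cref{appendix: lem: Relations chemical potential-like quantities}. Relation \eqref{eq: chem pot relation 2} gives
\begin{align}
\rho^{-1}\check{\mu}_\mA + \rho_\mA^{-1}\check{\lambda} = \rho_\mA^{-1}\bigl(\hat{\mu}_\mA - \textstyle\sum_\mB \hat{\mu}_\mB \phi_\mB + \check{\lambda}\bigr),
\end{align}
so the natural identification between the Lagrange multipliers is
\begin{align}
\check{\lambda} = \lambda + \textstyle\sum_\mB \hat{\mu}_\mB \phi_\mB,
\end{align}
under which $\rho^{-1}\check{\mu}_\mA + \rho_\mA^{-1}\check{\lambda} = \rho_\mA^{-1}(\hat{\mu}_\mA + \lambda) = g_\mA$. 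Both the $\zeta_\mA$ coefficient and (via the gradient) the $\bH_\mA$ coefficient of \eqref{eq: restriction alt} therefore coincide with those of \eqref{eq: second law 4}.

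Finally, the residual scalar contribution to the $\nabla \mathbf{v}$ coefficient reads
\begin{align}
\check{\lambda}\mathbf{I} - \check{\Psi}\mathbf{I} = \bigl(\lambda + \textstyle\sum_\mB \hat{\mu}_\mB \phi_\mB - \hat{\Psi}\bigr)\mathbf{I},
\end{align}
which matches the scalar part of \eqref{eq: second law 4}. Combining the three matching steps proves the equivalence. The main obstacle is conceptual rather than computational: one must recognize that the two Lagrange multipliers are not literally equal but are linked by the shift $\check{\lambda} = \lambda + \sum_\mB \hat{\mu}_\mB \phi_\mB$, which is consistent with the discussion around \eqref{eq: lambda mu} and \eqref{eq: class Psi Lambda} that chemical potentials only appear in the combination $\hat{\mu}_\mA + \lambda$. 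Once this identification is made, \cref{appendix; lem transformation maps} (non-invertibility of the transformation) is not an obstruction because the equivalence is established purely at the level of the linear functional in $\nabla \mathbf{v}$, $\bH_\mA$, $\zeta_\mA$, without requiring inversion of the variable change.
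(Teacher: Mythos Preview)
Your proposal is correct and follows essentially the same route as the paper's own proof: both identify the shift $\check{\lambda} = \lambda + \sum_\mB \hat{\mu}_\mB \phi_\mB$, invoke \cref{appendix: lem: matching korteweg tensors} for the Korteweg tensor, and use \cref{appendix: lem: Relations chemical potential-like quantities} to show $\rho^{-1}\check{\mu}_\mA + \rho_\mA^{-1}\check{\lambda} = g_\mA$, from which the coefficients of $\nabla\mathbf{v}$, $\bH_\mA$, and $\zeta_\mA$ all match. Your framing via linearity and your remark on why non-invertibility is harmless are helpful additions but do not alter the argument.
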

\begin{proof}
We select the following relations between the Lagrange multipliers of the two modeling choices:
\begin{align}\label{appendix: eq: relation LM}
    \check{\lambda}=&~ \hat{\lambda}+\sum_\mB\hat{\mu}_\mB\phi_\mB.
\end{align}
Invoking \cref{appendix: lem: matching korteweg tensors} and substituting the relation \eqref{appendix: eq: relation LM} provides:
\begin{align}\label{appendix: ID1}
     \check{\lambda}\mathbf{I} +\sum_{\mA} \nabla c_\mA \otimes \dfrac{\partial \check{\Psi}}{\partial \nabla c_\mA}  -  \check{\Psi}\mathbf{I} = \hat{\lambda}\mathbf{I} +\sum_{\mA} \nabla \phi_\mA \otimes \dfrac{\partial \hat{\Psi}}{\partial \nabla \phi_\mA} + \left(\hat{\mu}_\mA \phi_\mA -  \check{\Psi}\right)\mathbf{I}.
\end{align}
In a similar fashion we find
\begin{align}
   \rho^{-1}\check{\mu}_\mA + \rho_\mA^{-1}\check{\lambda} =&~ \rho_\mA^{-1}\hat{\lambda}+   \rho_\mA^{-1} \left(\hat{\mu}_\mA - \sum_\mB \hat{\mu}_\mB \phi_\mB\right) + \rho_\mA^{-1}\sum_\mA\hat{\mu}_\mB\phi_\mB= g_\mA,
\end{align}
and conclude:
\begin{subequations}
    \begin{align}
 - \sum_{\mA}\nabla \left( \rho^{-1}\check{\mu}_\mA + \rho_\mA^{-1}\check{\lambda} \right)\cdot \bH_\mA  =&~        - \sum_{\mA}\nabla g_\mA\cdot \bH_\mA \\
 -\sum_{\mA} \left( \rho^{-1}\check{\mu}_\mA + \rho_\mA^{-1}\check{\lambda} \right) \zeta_\mA = &~-\sum_{\mA} g_\mA \zeta_\mA.
    \end{align}
\end{subequations}

\end{proof}

%Finally, we note that a direct consequence of the the mapping \eqref{eq: relation c phi} is:
%\begin{subequations}
%    \begin{align}
%       \sum_\mA \dfrac{\partial c_\mB}{\partial \phi_\mA} \phi_\mA =&~ 0,\\
%       \sum_\mA \dfrac{\partial \phi_\mB}{\partial c_\mA} c_\mA =&~ 0.
%    \end{align}
%\end{subequations}
%To see that this is a cumbersome property, we consider the quantities $\sum_\mA \phi_\mA \hat{\mu}_\mA, \sum_\mA c_\mA \check{\mu}_\mA$. Substitution of \eqref{eq: chain rule} provides:
%  \begin{subequations}
%      \begin{align}
%          \sum_\mA \hat{\mu}_\mA \phi_\mA =&~ \sum_\mA \left(\sum_\mB \check{\mu}_\mB \frac{\partial c_\mB}{\partial \phi_\mA} \right)\phi_\mA,\\
%          \sum_\mA \check{\mu}_\mA c_\mA =&~\sum_\mA\left(\sum_\mB \hat{\mu}_\mB \frac{\partial \phi_\mB}{\partial c_\mA} \right)c_\mA,
%      \end{align}
%  \end{subequations}
%which are well-defined. 
  
Finally, we note that failure to properly account for the saturation constraint \eqref{eq: sum c phi} may result in erroneous derivations. For example, %one can deduce cumbersome properties from a consequence of the saturation constraint \eqref{eq: sum c phi} is the non-invertibility of the transformation maps \eqref{appendix; lem transformation maps} and the arbitrariness of the chemical potentials (\cref{sec: 2nd law}) from which . For example, 
from \eqref{eq: relation chem} one can deduce 
  \begin{subequations}\label{eq: problem}
      \begin{align}
          \sum_\mA \hat{\mu}_\mA \phi_\mA =&~ 0,\\
          \sum_\mA \check{\mu}_\mA c_\mA =&~ 0,
      \end{align}
  \end{subequations}
  which do not hold in general. In particular, this follows from the chain rule \cref{lem: chain rule chemical potentials}:
  \begin{subequations}
      \begin{align}
          \sum_\mA \hat{\mu}_\mA \phi_\mA =&~ \sum_\mA \left(\sum_\mB \check{\mu}_\mB \frac{\partial c_\mB}{\partial \phi_\mA} \right)\phi_\mA,\\
          \sum_\mA \check{\mu}_\mA c_\mA =&~\sum_\mA\left(\sum_\mB \hat{\mu}_\mB \frac{\partial \phi_\mB}{\partial c_\mA} \right)c_\mA,
      \end{align}
  \end{subequations}
  alongside with the identities:
  \begin{subequations}
      \begin{align}
          \sum_\mA \dfrac{\partial c_\mB}{\partial \phi_\mA} \phi_\mA =&~ 0,\\
          \sum_\mA \dfrac{\partial \phi_\mB}{\partial c_\mA} c_\mA =&~ 0.
      \end{align}
  \end{subequations}
In this situation we have $\hat{\lambda}= \check{\lambda}$. %We note that the identities \eqref{eq: problem} do not hold in general; however there exist free energies for which it applies. We recall \eqref{eq: lambda mu}-\eqref{eq: chem pot}.

\section{Alternative constitutive modeling}\label{appendix: alt const mod}

In this section we provide some brief details on the constitutive modeling based on concentration variables. \cref{appendix: sec: const mod: subsec: def} outlines the modeling assumptions, and \cref{appendix: sec: const mod: subsec: model restr} derives the modeling restriction.

\subsection{Assumptions and modeling choices}\label{appendix: sec: const mod: subsec: def}

We use the balance laws \eqref{eq: BL constitutive}, where the mass balance laws are now written in terms of concentration variables:
\begin{subequations}\label{appendix: eq: BL constitutive}
  \begin{align}
        \rho(\partial_t c_\mA + \bv\cdot\nabla c_\mA) + {\rm div} \bH_\mA &=~ \zeta_\mA, \label{appendix: eq: BL constitutive: mass} \\
        \partial_t (\rho \bv) + {\rm div} \left( \rho\bv\otimes \bv \right) -  {\rm div} \mathbf{T} -  \rho \mathbf{b} &=~0,\label{appendix: eq: lin mom mix}\\
        \mathbf{T}-\mathbf{T}^T &=~0,\label{appendix: eq: ang mom mix}
  \end{align}
\end{subequations}
where \eqref{appendix: eq: BL constitutive: mass} holds for constituents $\mA=1,...,N$. 
We use the energy-dissipation law \eqref{eq: energy dissipation}:
\begin{align}\label{appendix: eq: energy dissipation}
    \dfrac{{\rm d}}{{\rm d}t} \mathscr{E} = \mathscr{W} - \mathscr{D},
\end{align}
with dissipation $\mathscr{D}\geq 0$, and recall \eqref{eq: total energy}. We postulate the free energy to pertain to the constitutive class:
\begin{align}\label{appendix: eq: class Psi}
  \Psi =&~ \check{\Psi}\left(\left\{c_\mA\right\}_{\mA=1,...,N},\left\{\nabla c_\mA\right\}_{\mA=1,...,N}\right),
\end{align}
subject to the summation constraint \eqref{eq: sum c}, and introduce the chemical potential quantities ($\alpha = 1,...,N$):
\begin{align}
    \check{\mu}_\mA =&~ \dfrac{ \partial \check{\Psi}}{\partial c_\mA} - {\rm div}\dfrac{\partial \check{\Psi}}{\partial\nabla c_\mA}.
\end{align}

\subsection{Modeling restriction}\label{appendix: sec: const mod: subsec: model restr}

By applying Reynolds transport theorem, the divergence theorem and integration by parts, and identity \eqref{eq: relation grad phi}, the evolution of the free energy $\check{\Psi}$ takes the form:
\begin{align}
    \dfrac{{\rm d}}{{\rm d}t}\displaystyle\int_{\mathcal{R}(t)} \check{\Psi} ~{\rm d}v  =&~ \displaystyle\int_{\mathcal{R}(t)} \check{\Psi}~{\rm div} \bv + \sum_{\mA} \check{\mu}_\mA \dot{c}_\mA  - \sum_{\mA} \nabla c_\mA \otimes \dfrac{\partial \check{\Psi}}{\partial \nabla c_\mA}: \nabla \bv  ~{\rm d}v  \nn\\
    &~+ \displaystyle\int_{\partial \mathcal{R}(t)}\sum_{\mA} \dot{c}_\mA \dfrac{\partial \check{\Psi}}{\partial \nabla c_\mA}\cdot \boldsymbol{\nu} ~{\rm d}a.
\end{align}
Substituting the constituent mass balance laws \eqref{eq: BL constitutive: mass}, and again applying integration by parts, yields:
\begin{align}\label{appendix: eq: Psi 2}
    \dfrac{{\rm d}}{{\rm d}t}\displaystyle\int_{\mathcal{R}(t)} \sum_{\mA} \check{\Psi} ~{\rm d}v = &~ \displaystyle\int_{\mathcal{R}(t)} \check{\Psi}~{\rm div} \bv  + \sum_{\mA} \nabla (\rho^{-1}\check{\mu}_\mA) \cdot \bH_\mA \nn\\
    &~~~- \sum_{\mA} \nabla c_\mA \otimes \dfrac{\partial \check{\Psi}}{\partial \nabla c_\mA}: \nabla \bv + \sum_{\mA}  \rho^{-1}\check{\mu}_\mA \zeta_\mA~{\rm d}v\nn\\
    &~+ \displaystyle\int_{\partial \mathcal{R}(t)}\left(\sum_{\mA} \dot{c}_\mA \dfrac{\partial \check{\Psi}}{\partial \nabla c_\mA}-\rho^{-1}\check{\mu}_\mA \bH_\mA\right)\cdot \boldsymbol{\nu} ~{\rm d}a.
\end{align}
Addition of \eqref{eq: kin grav evo} and \eqref{appendix: eq: Psi 2} provides the evolution of the total energy:
\begin{align}\label{appendix: eq: second law subst 1}
    \dfrac{{\rm d}}{{\rm d}t} \mathscr{E} = &~ \displaystyle\int_{\partial \mathcal{R}(t)}\left(\bv^T\mathbf{T}-\sum_{\mA} \left(\rho^{-1}\check{\mu}_\mA\bH_\mA -\dot{c}_\mA \dfrac{\partial \check{\Psi}}{\partial \nabla c_\mA}\right)\right)\cdot \boldsymbol{\nu} ~{\rm d}a \nn\\
    &~- \displaystyle\int_{\mathcal{R}(t)}   \left(\mathbf{T}  +\sum_{\mA} \nabla c_\mA \otimes \dfrac{\partial \check{\Psi}}{\partial \nabla c_\mA} -  \check{\Psi}\mathbf{I}\right):\nabla \mathbf{v}\nn\\
    &~~~~~~~~~~~+ \sum_{\mA}\left(-\nabla (\rho^{-1}\check{\mu}_\mA)\cdot \bH_\mA  - \rho^{-1}\check{\mu}_\mA\zeta_\mA \right)~{\rm d}v.
\end{align}
Analogously to \cref{sec: const mod: subsec: model restr}, we restore the degenerate nature of \eqref{appendix: eq: second law subst 1} via a Lagrange multiplier construction:
\begin{align}\label{appendix: eq: LM 1}
    0=&~ \check{\lambda}  {\rm div}\bv + \nabla \left(\check{\lambda} \displaystyle\sum_{\mA}  \rho_\mA^{-1} \bH_\mA\right) - \sum_{\mA} \rho_\mA^{-1}\bH_\mA \cdot\nabla \check{\lambda} - \check{\lambda} \displaystyle\sum_{\mA}  \rho_\mA^{-1} \zeta_\mA,
\end{align}
where $\check{\lambda}$ is the scalar Lagrange multiplier. 
Integrating \eqref{appendix: eq: LM 1} over $\mathcal{R}(t)$ and subtracting the result from \eqref{appendix: eq: second law subst 1} provides:
\begin{align}\label{appendix: eq: second law subst 2}
    \dfrac{{\rm d}}{{\rm d}t} \mathscr{E} = &~ \displaystyle\int_{\partial \mathcal{R}(t)}\left(\bv^T\mathbf{T}-\sum_{\mA} \left(\rho^{-1}\check{\mu}_\mA\bH_\mA -\dot{c}_\mA \dfrac{\partial \check{\Psi}}{\partial \nabla c_\mA}\right)-\check{\lambda} \displaystyle\sum_{\mA} \rho_\mA^{-1} \bH_\mA\right)\cdot \boldsymbol{\nu} ~{\rm d}a \nn\\
    &~- \displaystyle\int_{\mathcal{R}(t)}   \left(\mathbf{T} + \check{\lambda}\mathbf{I} +\sum_{\mA} \nabla c_\mA \otimes \dfrac{\partial \check{\Psi}}{\partial \nabla c_\mA}  -  \check{\Psi}\mathbf{I}\right):\nabla \mathbf{v}\nn\\
    &~~~~~~~~~~~~~~~~+ \sum_{\mA}\left(-\nabla \left( \rho^{-1}\check{\mu}_\mA + \rho_\mA^{-1}\check{\lambda} \right)\cdot \bH_\mA  \right.\nn\\
    &~~~~~~~~~~~~~~~~~~~~~\left.- \left( \rho^{-1}\check{\mu}_\mA+ \rho_\mA^{-1}\check{\lambda} \right) \zeta_\mA \right)~{\rm d}v.
\end{align}
The rate of work and the dissipation take the forms:
\begin{subequations}\label{appendix: eq: W, D}
\begin{align}
    \mathscr{W} =&~ \displaystyle\int_{\partial \mathcal{R}(t)}\left(\bv^T\mathbf{T}-\sum_{\mA} \left(\rho^{-1}\check{\mu}_\mA\bH_\mA -\dot{c}_\mA \dfrac{\partial \check{\Psi}}{\partial \nabla c_\mA}\right)-\check{\lambda} \displaystyle\sum_{\mA} \rho_\mA^{-1}  \bH_\mA\right)\cdot \boldsymbol{\nu} ~{\rm d}a,\\
    \mathscr{D} =&~\displaystyle\int_{\mathcal{R}(t)}   \left(\mathbf{T} + \check{\lambda}\mathbf{I} +\sum_{\mA} \nabla c_\mA \otimes \dfrac{\partial \check{\Psi}}{\partial \nabla c_\mA}  -  \check{\Psi}\mathbf{I}\right):\nabla \mathbf{v}\nn\\
    &~~~~~~~~~~~~~+ \sum_{\mA}\left(-\nabla \left( \rho^{-1}\check{\mu}_\mA + \rho_\mA^{-1}\check{\lambda} \right)\cdot \bH_\mA \right.\nn\\
    &~~~~~~~~~~~~~~~~~~~ \left.-\left( \rho^{-1}\check{\mu}_\mA + \rho_\mA^{-1}\check{\lambda} \right)\zeta_\mA \right)~{\rm d}v.\label{eq: def diffusion 2}
\end{align}
\end{subequations}
Given that the control volume $\mathcal{R}=\mathcal{R}(t)$, can be chosen arbitrarily, adhering to the energy dissipation law requires that the following local inequality is satisfied:
\begin{align}\label{appendix: eq: second law 4}
     \left(\mathbf{T} + \check{\lambda}\mathbf{I} +\sum_{\mA} \nabla c_\mA \otimes \dfrac{\partial \check{\Psi}}{\partial \nabla c_\mA}  -  \check{\Psi}\mathbf{I}\right):\nabla \mathbf{v}&\nn\\
    - \sum_{\mA}\nabla \left( \rho^{-1}\check{\mu}_\mA + \rho_\mA^{-1}\check{\lambda} \right)\cdot \bH_\mA -\sum_{\mA} \left( \rho^{-1}\check{\mu}_\mA + \rho_\mA^{-1}\check{\lambda} \right) \zeta_\mA &\geq 0.
\end{align}

\bibliographystyle{unsrtnat}
\bibliography{references}

\end{document}